\documentclass[11pt]{amsart}

\usepackage[utf8]{inputenc}
\usepackage[T1]{fontenc}
\usepackage{amssymb,amsmath}
\usepackage{graphics}
\usepackage{xcolor}
\usepackage{enumitem}
\usepackage[colorlinks = true, linkcolor = teal, urlcolor  = teal, citecolor = violet]{hyperref}
\usepackage[margin=1.0in]{geometry}
\usepackage{cleveref}
\crefname{equation}{Equation}{Equations}
\usepackage{bbm}

\usepackage{listings}

\definecolor{codegreen}{rgb}{0,0.6,0}
\definecolor{codegray}{rgb}{0.5,0.5,0.5}
\definecolor{codepurple}{rgb}{0.58,0,0.82}
\definecolor{backcolour}{rgb}{0.95,0.95,0.92}

\lstdefinestyle{mystyle}{
    backgroundcolor=\color{backcolour},   
    commentstyle=\color{codegreen},
    keywordstyle=\color{magenta},
    numberstyle=\tiny\color{codegray},
    stringstyle=\color{codepurple},
    basicstyle=\ttfamily\footnotesize,
    breakatwhitespace=false,         
    breaklines=true,                 
    captionpos=b,                    
    keepspaces=true,                 
    numbers=left,                    
    numbersep=5pt,                  
    showspaces=false,                
    showstringspaces=false,
    showtabs=false,                  
    tabsize=2
}

\makeatletter
\newcommand{\pushright}[1]{\ifmeasuring@#1\else\omit\hfill$\displaystyle{#1}$\fi\ignorespaces}
\newcommand{\pushleft}[1]{\ifmeasuring@#1\else\omit$\displaystyle{#1}$\hfill\fi\ignorespaces}
\makeatother

\newtheorem{theorem}{Theorem}[section]
\newtheorem*{theorem*}{Theorem}
\newtheorem{definition}[theorem]{Definition}
\newtheorem*{definition*}{Definition} 
\newtheorem{proposition}[theorem]{Proposition}
\newtheorem{corollary}[theorem]{Corollary}

\newtheorem{lemma}[theorem]{Lemma}
\newtheorem{remark}[theorem]{Remark}

\AddToHook{env/lemma/begin}{\crefalias{theorem}{lemma}}
\AddToHook{env/proposition/begin}{\crefalias{theorem}{proposition}}
\AddToHook{env/definition/begin}{\crefalias{theorem}{definition}}
\AddToHook{env/corollary/begin}{\crefalias{theorem}{corollary}}
\AddToHook{env/remark/begin}{\crefalias{theorem}{remark}}

\newcommand{\id}{{\mathrm{Id}}}

\newcommand{\range}{\operatorname{range}}
\newcommand{\tr}{{\operatorname{tr}}}

\renewcommand{\d}{{\,\rm d}}

\newcommand{\one}{{\mathbf 1}}

\newcommand{\ee}{{\mathbb E}}

\newcommand{\nn}{{\mathbb N}}

\newcommand{\cc}{{\mathbb C}}
\renewcommand{\P}{{\mathrm P}}
\newcommand{\supp}{\operatorname{supp}}
\newcommand\pc[1]{{\P(\cc^{#1}) }}

\newcommand\pcd{{\P(\cc^{d}) }}

\newcommand{\Ld}{{M_d(\cc)}}

\newcommand{\inv}{\textnormal{inv}}
\newcommand{\unif}{\textnormal{unif}}

\newcommand{\GAP}{\operatorname{GAP}}

\makeatletter
\newcommand{\specialcell}[1]{\ifmeasuring@#1\else\omit$#1$\ignorespaces\fi}
\makeatother

\usepackage[backend=bibtex,giveninits=true,citestyle=alphabetic,bibstyle=alphabetic,maxbibnames =100,maxcitenames=5,isbn=false]{biblatex}
\renewbibmacro{in:}{}
\begin{document}
    
    \title[Uniform quantum trajectories]{Invariant measures \\ of randomized quantum trajectories}
    
    \author{Tristan Benoist}
    \address{Institut de Math\'ematiques de Toulouse, Université de Toulouse, CNRS, 
    UPS, F-31062 Toulouse Cedex 9, France}
    \email{tristan.benoist@math.univ-toulouse.fr}
    \author{Sascha Lill}
    \address{Department of Mathematical Sciences, Universitetsparken 5, DK-2100 Copenhagen, Denmark}
    \email{sali@math.ku.dk}   
    \author{Cornelia Vogel}
    \address{Department of Mathematics, LMU Munich, Theresienstr. 39, 80333 Munich, Germany}
    \email{cornelia.vogel@math.lmu.de}
    
\setcounter{tocdepth}{1}
    
    \subjclass[2020]{60J05, 81P15}
    \keywords{Quantum trajectories, Markov chains, Invariant measures, GAP measure, Quantum measurement theory}
    
    \begin{abstract} 
    Quantum trajectories are Markov chains modeling quantum systems subjected to repeated indirect measurements. Their stationary regime depends on what observables are measured on the probes used to indirectly measure the system. In this article we explore the properties of quantum trajectories when the choice of probe observable is randomized. The randomization induces some regularization of the quantum trajectories. We show that non-singular randomization ensures that quantum trajectories purify and therefore accept a unique invariant probability measure. We furthermore study the regularity of that invariant measure. In that endeavour, we introduce a new notion of ergodicity for quantum channels, which we call multiplicative primitivity. It is a priory stronger than primitivity but weaker than positivity improving. Finally, we compute some invariant measures for canonical quantum channels and explore the limits of our assumptions with several examples.
    \end{abstract}
    
    \date{\today}
    
    \maketitle
    \thispagestyle{empty}   
 
{\small \tableofcontents} %

    \section{Introduction} 
    Quantum systems subjected to repeated indirect measurements evolve randomly according to stochastic processes called quantum trajectories. They typically model experiments in quantum optics -- see for example \cite{wisemanmilburn,haroche2006exploring}. In \cite{MaaKumm}, Maassen and Kümmerer proved that the state of the system has a tendency to purify along the quantum trajectory, meaning that, if it starts in a mixed state, it has a tendency to get closer to the set of pure states, reaching it asymptotically under the right conditions. That property was leveraged in \cite{BenFra} to prove uniqueness of the invariant measure for quantum trajectories. The proof of that fundamental result required a new approach since standard techniques used for Markov chains are not efficient for quantum trajectories. Indeed, they are neither $\varphi$-irreducible nor contracting in general -- see \cite[Section~8]{BenFat}. While inspired by works on random products of matrices -- especially \cite{guivarc2016spectral} -- new techniques were developed to deal with non-invertible and non strongly irreducible matrices. Following \cite{BenFra}, several finer limit theorems have been proved in \cite{BenFat,BHP24}. The proofs of \cite{BHP24} were obtained through the establishment of a spectral gap for the Markov kernel. In \cite{benoist2024dark}, the set of invariant measures has been classified when purification does not hold.

    \medskip
    The average evolution of a quantum trajectory results from the repeated application of a quantum channel $\Phi^*$\footnote{In this article, $\Phi^*$ denotes the completely positive, trace-preserving map acting on density matrices $\rho$, while its dual $\Phi$ is a unit-preserving map acting on observables. We call both maps ``quantum channels''.} to the initial density matrix state of the system. A first condition required for the uniqueness of the invariant measure is that $\Phi^*$ accepts a unique invariant state. As already mentioned, the second assumption used to prove uniqueness in~\cite{BenFra} is that the quantum trajectory purifies. That property not only depends on $\Phi^*$, but also on a Kraus decomposition that is chosen for $\Phi^*$: by the Kraus decomposition theorem, we may always write
    $$\Phi^*(\rho) = \sum_{i=1}^k v_i\rho v_i^*,$$
    for $k \in \nn$ large enough and $(v_i)_{i=1}^k$ some indexed family of matrices. This choice of matrices is not unique. It corresponds to a choice of observable, or more precisely orthonormal basis, measured on the probe used to indirectly measure the system -- see \cite[Theorem~7.14]{busch2016quantum}. Depending on the choice of $(v_i)_i$, for a fixed $\Phi^*$, purification may or may not happen.
    
    \medskip
    In the present article, we explore the properties of quantum trajectories when the choice of decomposition is randomized at each time step. In particular, we are interested in the agnostic position characterized by choosing uniformly the orthonormal basis along which the probe is measured. This represents a canonical example of a quantum trajectory that gains regularity from the randomization. All our results assume the quantum channel $\Phi^*$ is irreducible, so it accepts a unique invariant state of full rank.
    
    Our first main result concerns conditions for purification. In \Cref{thm:purification}, we prove that if the randomization is not singular, then purification holds. Thanks to \cite[Theorem~1.1]{BenFra}, this implies uniqueness of the invariant measure. Actually, in \Cref{lem:IC implies pur}, we prove more generally that any informationally complete instrument leads to purification.
    
    Our second main result is that, contrary to the general case as proved in \cite[Section~8]{BenFat}, sufficiently randomized quantum trajectories are $\varphi$-irreducible with respect to the uniform measure on the set of states -- see \Cref{thm:phi-irred}. If moreover a matrix $v_i$ is invertible, then the invariant measure is equivalent to the uniform one. To prove these results we introduce a new notion of ergodicity for quantum channels we call multiplicative primitivity -- see \Cref{def:mPrim}. It is a priory stronger than primitivity (\emph{i.e.}, irreducible and aperiodic) but weaker than positivity improving ($\Phi^*(\rho)>0$ for any $\rho\geq 0$) -- see \Cref{prop:mPrim to Prim,prop:pos improve implies mPrim}. In dimension $2$, multiplicative primitivity and primitivity are equivalent -- see \Cref{prop:2D}.
    
    Our third main result concerns the special case of a uniformly randomized orthonormal basis measured on the probe. Here, symmetries of the channel $\Phi^*$ translate into symmetries of the invariant measure of the quantum trajectory -- see \Cref{thm:symmetries}. The proof relies on the expression of the Markov chain kernel as a GAP measure as introduced in \cite{JRW94} (see also \cite{GLTZ06,GLMTZ16}).
    
    We illustrate our results with some examples. First, we provide trivial examples of invariant measures: One is the GAP measure for some state $\rho$ and the other is the uniform measure. Second, in dimension $2$, we explicitly state the integral equation satisfied by the invariant probability density when the choice of measured probe's orthonormal basis is uniform. The equation is a consequence of the explicit density expression for GAP measures from \cite[Equation~(18)]{GLTZ06}, whose validity requires a nontrivial proof that in dimension $2$ the matrices $v_i$ are almost always invertible. We then provide two examples of multiplicatively primitive channels in dimension $3$, one with invertible matrices and the other with only non-invertible matrices.
    
    These results raise several questions. First: is the new notion of multiplicative primitivity necessary to prove $\varphi$-irreducibility? Second: can one find primitive counter-examples to multiplicative primitivity in dimension $3$ or higher? Finally, can an explicit expression of the invariant measure density be derived, at least for some simple nontrivial channels? More generally the present work motivates further study of quantum trajectories defined by randomized measurements.

    \medskip
    The article is structured as follows. In \Cref{sec:setup}, we provide the mathematical setup and state our main assumptions. In \Cref{sec:main}, we state our main results. In \Cref{sec:mPrim}, we discuss the new multiplicative primitivity assumption. 
    In \Cref{sec:proofs} we gather all the proofs of our main results. In \Cref{sec:xpl}, we provide different examples of channels for which we can compute the invariant measure. We also discuss the specific case of dimension $2$ and we provide some mathematically relevant examples in dimension $3$. In Appendix~\ref{app:irr}, we recall some equivalent characterizations of different notions of ergodicity for quantum channels. In Appendix~\ref{app:instru}, we derive the space of operators generated by informationally complete instruments. In Appendix~\ref{app:GAP}, we summarize some properties of GAP measures useful to us. Finally, in Appendix~\ref{app:code}, we provide the SageMath~10 code we used to prove that our examples in dimension $3$ are multiplicatively primitive.

    \section{Setup and assumptions}\label{sec:setup}
    Consider the set $\Ld$ of $d\times d$ complex matrices equipped with its Borel $\sigma$-algebra. Let $\Phi:\Ld\to\Ld$ be a completely positive unit preserving map, \emph{i.e.}, a quantum channel. By Stinespring's dilation theorem, or the Kraus decomposition theorem, there exists a finite tuple $(v_i)_{i=1}^k \in \Ld^k$ such that
    $$\Phi(X) = \sum_{i=1}^k v_i^*Xv_i.$$
    The choice of $k$ and $(v_i)_{i=1}^k$ is not unique. Fixing $k$, following Stinespring's theorem, any $k$-tuple of matrices $(w_i)_{i=1}^k$ also satisfies
    $$\Phi(X) = \sum_{i=1}^k w_i^*Xw_i ,$$
    if and only if there exist a $u \in U(k)$, with $U(k)$ denoting the Lie group of $k \times k$ unitary matrices, such that
    $$w_i=\sum_{j=1}^k u_{ij}v_j.$$
    A set of such matrices $\{w_i\}_i$ is called a Kraus decomposition of $\Phi$.
    
    The freedom in the choice of Kraus decomposition implies that, for any probability measure $\lambda$ on $U(k)$ equipped with its Borel $\sigma$-algebra, denoting $v_i(u)=\sum_{j=1}^k u_{ij}v_j \in \Ld$,
    $$\Phi(X) = \sum_{i=1}^k\int_{U(k)}v_i(u)^* X v_i(u)\d\lambda(u).$$
    The sum over $i$ can be absorbed into the probability measure: for $i\in\{1,\dotsc,k\}$, let $u(i) \in U(k)$ with first row being $u(i)_{1l}=\delta_{l,i}$, and let $\lambda_i$ be the image measure of $\lambda$ by the map $u\mapsto u(i)u$. Then, with 
    \begin{align}
        \label{eq:def mu}
        \mu := \sum_{i=1}^k\lambda_i,
    \end{align}
    we may write
    $$\Phi(X)=\int_{U(k)} v(u)^*Xv(u) \d\mu(u),$$
    where $v: U(k) \to \Ld$ is defined by $v(u)= v_1(u)$. Note that $\mu$ is such that $\mu(U(k)) = k$. It ensures that, for any fixed $x \in \cc^d$ with $\|x\|=1$, $\|v(u)x\|_2^2\d\mu(u)$ defines a probability measure on $U(k)$.

    Seeing $\Phi$ as the result of a unitary interaction between a system and a probe -- see \cite[Theorem~7.14]{busch2016quantum} -- the measure $\lambda$ can be interpreted as a random choice of observable measured on the probe. Then, conditioning on the measurement result and the detector adjustment, both of them summarized in $ u \in U(k) $, the evolution of the state vector also becomes random.

    \medskip
    We describe state vectors $\hat x$ as elements of the projective space $\pcd$, which consists of equivalence classes $\hat x = \{z x: z\in \cc\} \in \pcd$ for any $x\in \cc^d\setminus\{0\}$. Given $\hat x\in \pcd$, we will denote by $x$ an arbitrary norm $1$ representative of $\hat x$. We equip $\pcd$ with its Borel $\sigma$-algebra and $\nu_\unif$ denotes the uniform probability measure on $\pcd$. Further, for $w\in \Ld$ and $\hat x\in \pcd$ such that $wx\neq 0$, we denote $w\cdot \hat x=\widehat{wx}$. If $wx=0$ we set $w\cdot\hat x$ arbitrarily. Then $\nu_\unif$ is the unique probability measure on $\pcd$ such that $u\cdot \hat x$ has the same law as $\hat x$ for any unitary matrix $u$.
    
    We also often rely on Dirac's notation, where a vector $x$ in a Hilbert space is denoted $|x\rangle$ and its dual by $\langle x|$. So, $|x\rangle$ is not necessarily normalized and $\langle x|\,| x'\rangle=\langle x,x'\rangle$ is the scalar product of the Hilbert space. It follows that, $\langle w x||wx\rangle=\langle x| w^*w|x\rangle=\|wx\|^2$. In this notation, $\pcd$ is in bijection with the set of rank-one orthogonal projectors on $\cc^d$ (pure state density matrices) through the map $\hat x\mapsto |x\rangle\langle x|$.
    
    \medskip
    Given an initial datum $\hat x_0 \in \pcd$ and the above-described measure $\mu$ on $U(k)$, the random evolution of the state vector is then described by a Markov chain $(\hat x_n)_{n \in \nn_0}$\footnote{We use the convention $\nn=\{1,2\dotsc\}$ and $\nn_0=\{0\}\cup \nn$.} with
    $$\hat x_{n+1}=v(U_n)\cdot \hat x_n\quad \mbox{with }U_n\sim \|v(u)x_n\|_2^2\d\mu(u).$$
    This process is what is called a quantum trajectory.

    The Markov kernel of this chain is given, for any bounded measurable function $f:\pcd\to\cc$, by
    $$\Pi f(\hat x)=\ee(f(\hat x_1)|\hat x_0=\hat x)=\int_{U(k)}f(v(u)\cdot\hat x)\|v(u)x\|_2^2\d\mu(u).$$
    For probability measures $\nu$ over $\pcd$, $\nu \Pi$ denotes the probability measure defined by $\ee_{\nu\Pi}(f)=\ee_\nu(\Pi f)$ for any bounded measurable function $f$.
    In terms of ensembles, $\Pi(\hat x,A)=\Pi\one_A(\hat x)$ for any measurable subset $A$ of $\pcd$, where $\one_A$ is the characteristic function of $A$.

    If $\lambda=\delta_{u_0}$ for some $u_0\in U(k)$, then the quantum trajectory corresponds to the evolution of the quantum system when a fixed observable is measured on the probes. As already mentioned, by contrast, taking $\lambda$ to not be an extreme point in the set of probability measures corresponds to a randomization of the observable measured on the probes. In this article we focus on such randomizations, and especially on those ones that have some regularity with respect to a Haar measure on $U(k)$. 
    
    \begin{definition}[Non-singular $\mu$]\label{def:non singular}
    Let $\mu_{\unif}$ be a Haar measure over $U(k)$. We say that $\mu$ is non-singular if $\mu_{\unif} (\supp\mu) > 0$.
    \end{definition}
    
    The choice $\mu=\mu_\unif$ corresponds to a random uniform choice of the basis measured on the probe at each time step. It corresponds to an agnostic position with respect to the probe measurement observable.
    
    \medskip
    The only other assumptions we use concern the ergodic properties of the quantum channel $\Phi$.
   \begin{definition}[Irreducible channels]\label{def:irr}
    A quantum channel $\Phi$ is called irreducible if there exists $n\in \nn$ such that for any positive semi-definite $X\in \Ld$, $(\id_\Ld+\Phi)^n(X)$ is positive definite.
   \end{definition}
    This is the standard assumption leading to the Perron--Frobenius theorem for positive maps on finite-dimensional $C^*$-algebras -- see~\cite{Evans1978}. A stronger assumption requires that $\Phi^n$ maps positive semi-definite matrices to positive definite ones. It is called primitivity and implies irreducibility.
    \begin{definition}[Primitive channels]\label{def:prim}
    A quantum channel $\Phi$ is called primitive if there exists $n\in \nn$ such that for any positive semi-definite $X\in \Ld$, $\Phi^n(X)$ is positive definite.
    \end{definition}
    There exists a notion of period for irreducible quantum channels -- see \cite{Evans1978}. Primitive maps are actually the aperiodic irreducible maps -- see the same reference.
    
    For some of our results, we assume a new, stronger notion of primitivity, which we call multiplicative primitivity. Given a quantum channel $\Phi$, for $p \in \mathbb{N}$ we define
    \begin{equation} \label{eq:cVp}
        \mathcal V_p := \operatorname{linspan}\{v_{i_1} \dotsb v_{i_p} : (i_1, \ldots, i_p) \in \{1,\ldots,k\}^p\} \subset M_d(\cc)
    \end{equation}
    where the matrices $(v_i)_{i=1}^k$ are a Kraus decomposition of $\Phi$. Note that, by linearity, $\mathcal V_p$ does not depend on the particular choice of Kraus decomposition.
    From \Cref{thm:prim generated subspace}, $\Phi$ is primitive if and only if for any non-zero $x \in \cc^d$, we get $\mathcal V_p x = \cc^d$ for some $p \in \mathbb{N}$. Our stronger form of primitivity is the following.
    \begin{definition}[Multiplicative primitivity]\label{def:mPrim}
    A quantum channel $\Phi$ is called multiplicatively primitive if for any $\hat x\in \pcd$, there exists $p\in \nn$ such that
    $$\mathcal V_1^px :=\{ a_p\dotsb a_1x : a_1,\dotsc,a_p\in \mathcal V_1\} \qquad \textnormal{satisfies} \qquad \mathcal V_1^px=\cc^d.$$
    \end{definition}
    Since $\mathcal V_1$ does not depend on the choice of Kraus decomposition, multiplicative primitivity is indeed a property of $\Phi$. Since $\mathcal V_1^p \subseteq \mathcal V_p$, \Cref{thm:prim generated subspace} yields that multiplicative primitivity implies primitivity. The validity of the reverse implication in $d \ge 3$ is, as far as we know, an open question, while for $d = 2$, we prove this reverse implication in \Cref{prop:mPrim dim 2}. We further discuss multiplicative primitivity in \Cref{sec:mPrim}.

    We are now equipped to state our main results.

\section{Main results}\label{sec:main}

    \subsection{Uniqueness of the invariant measure}
    Our first result is the uniqueness of the invariant measure. To formulate it we introduce the $1$-Wasserstein distance
    $$W_1(\nu_1,\nu_2)=\inf_{P}\int d(\hat x,\hat y)\d P(\hat x,\hat y) ,$$
    where $d(\hat x,\hat y)=\sqrt{1-|\langle x,y\rangle|^2}$ is a metric over $\pcd$ and the infimum is taken over all probability measures $P$ on $\pcd\times\pcd$ such that $P(\ \cdot\ ,\pcd)=\nu_1$ and $P(\pcd,\ \cdot\ )=\nu_2$.
    \begin{theorem}
        \label{thm:uniqueness invariant}
        Assume $\Phi$ is irreducible and $\mu$ is non-singular, see ~\Cref{def:irr,def:non singular}. Then, $\Pi$ accepts a unique invariant measure $\nu_{\inv}$ on $\pcd$, and there exist $m\in \{1,\dotsc,d\}$ and two constants $C>0$ and $\gamma\in[0,1)$ such that for any probability measure $\nu$ on $\pcd$ and any $n\in \nn$,
        $$W_1\Bigg( \tfrac{1}{m}\sum_{l=0}^{m-1}\nu\Pi^{mn+l},\nu_{\inv} \Bigg) \leq C \gamma^n.$$
    \end{theorem}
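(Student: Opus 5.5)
The plan is to reduce \Cref{thm:uniqueness invariant} to the general convergence result for quantum trajectories in \cite[Theorem~1.1]{BenFra}. That result gives uniqueness of the invariant measure and geometric convergence in $W_1$ of the Cesàro averages $\frac1m\sum_{l}\nu\Pi^{mn+l}$, with $m$ the period of $\Phi$. Its hypotheses are that $\Phi$ is irreducible and that the trajectory purifies. The Kraus operators there are indexed by a general measure space, here $(U(k),\mu)$ with $v(u)$, and \cite{BenFra} allows this. Irreducibility is assumed, so the whole task is to verify purification. Following \cite{MaaKumm,BenFra}, purification is equivalent to the following statement: any orthogonal projector $\pi$ such that $\pi\, w^*w\,\pi \propto \pi$ for every $w$ in the semigroup generated by $\{v(u): u\in\supp\mu\}$ has rank one.

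The key step is to show that a non-singular $\mu$ makes the family $\{v(u)\}_{u\in\supp\mu}$ rich enough to force such projectors to have rank one. I would do this through informational completeness, as announced in \Cref{lem:IC implies pur}. Here $v(u)=\sum_j u_{1j}v_j$ depends only on the first row $r=(u_{1j})_j\in S^{2k-1}$. Hence $v(u)^*v(u)=\sum_{j,l}\bar r_j r_l v_j^*v_l$ is a real-analytic function of $u$.

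Because $\mu_\unif(\supp\mu)>0$, the support is not contained in any proper real-analytic subvariety of $U(k)$. So the linear span of $\{v(u)^*v(u): u\in\supp\mu\}$ equals the span over all $u\in U(k)$. A linear functional vanishing on the support vanishes identically by analyticity. Choosing $r$ as $e_j$, $(e_j+e_l)/\sqrt2$ and $(e_j+ie_l)/\sqrt2$ and polarizing, this span is all of $\operatorname{linspan}\{v_j^*v_l\}=\mathcal V_1^*\mathcal V_1$.

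Now suppose $\pi$ is a projector with $\pi w^*w\pi\propto\pi$ for all $w$ in the semigroup. Taking $w=v(u)$ and using the previous step, $\pi v_j^*v_l\pi\propto\pi$ for all $j,l$. Applying the same argument to products $v(u_p)\cdots v(u_1)$, with the product of $\mu^{\otimes p}$-supports non-singular, gives $\pi a^*b\pi\propto\pi$ for all $a,b\in\mathcal V_p$.

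To conclude, I would use irreducibility via \Cref{thm:prim generated subspace}, or its irreducible analogue in Appendix~\ref{app:irr}. This should let me choose, for a nonzero $x\in\range\pi$, elements of $\bigcup_p\mathcal V_p$ with $\bigcup_p \mathcal V_p x$ spanning $\cc^d$. I expect the cleanest route is to pass to the algebra generated by $\mathcal V_1^*\mathcal V_1$ together with the $\mathcal V_p^*\mathcal V_p$, and to show that under irreducibility these products span enough of $\Ld$ to separate two orthonormal vectors of $\range\pi$. If $\operatorname{rank}\pi\geq2$, take orthonormal $x,y\in\range\pi$. The conditions $\langle x,a^*b\,y\rangle=0$ and $\langle x,a^*bx\rangle=\langle y,a^*by\rangle$ for all $a,b\in\mathcal V_p$ say that the Gram data of $\mathcal V_px$ and $\mathcal V_py$ coincide and are orthogonal. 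Since $\mathcal V_p x+\mathcal V_p y$ is eventually a large subspace by irreducibility, this yields a contradiction.

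This last step is the main obstacle. Informational completeness in the sense of Appendix~\ref{app:instru} should give that $\operatorname{linspan}\bigcup_p\mathcal V_p^*\mathcal V_p$ is a matrix algebra large enough to separate two orthonormal vectors in $\range\pi$. Making the irreducible (possibly periodic) case work may require restricting to the cyclic components, and those components are where the period $m$ in the statement comes from.
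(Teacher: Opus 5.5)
Your reduction is the same as the paper's: \cite[Theorem~1.1]{BenFra} plus purification, with purification obtained from informational completeness of $u\mapsto u^*|e_1\rangle\langle e_1|u$ on a non-singular support via analyticity. The paper invokes \cite[Proposition~1]{mityagin_zero_2020} for exactly this. Everything up to ``$\pi a^*b\pi\in\cc\pi$ for all $a,b\in\mathcal V_p$ and all $p$'' is sound.

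The gap is the step you flag yourself: deriving $\operatorname{rank}\pi\le 1$ from these constraints when $\Phi$ is merely irreducible. Your Gram-data argument can be completed only in the primitive case. There \Cref{thm:prim generated subspace} gives $\mathcal V_px=\cc^d$ for large $p$, which is incompatible with $\mathcal V_px\perp\mathcal V_py\neq\{0\}$.

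Now suppose the period is $m\ge 2$, with cyclic subspaces $E_1,\dots,E_m$ and $\mathcal V_1E_i\subseteq E_{i+1}$, and take $x\in E_1$, $y\in E_2$. Then $\mathcal V_px\perp\mathcal V_py$ holds automatically for every $p$, and $\mathcal V_{mn}x+\mathcal V_{mn}y=E_1\oplus E_2$ for large $n$. So ``orthogonality plus largeness'' yields no contradiction. Such a $\pi$ is excluded only by the diagonal condition $\|ax\|=\|ay\|$ for $a\in\mathcal V_{mn}$. Using it requires knowing that $\operatorname{linspan}\mathcal V_{mn}^*\mathcal V_{mn}$ contains operators weighting the blocks differently. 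Since $\range\pi$ need not lie in one block a priori, ``restricting to cyclic components'' is not available either.

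Passing to the algebra generated by the $\mathcal V_p^*\mathcal V_p$ does not help, because $\pi X\pi\in\cc\pi$ is stable under linear combinations but not under products. Informational completeness of the probe POVM only identifies $\operatorname{linspan}\{\mathcal J(\omega_1)\circ\dotsb\circ\mathcal J(\omega_p)(\id_{\cc^d})\}$ with $\operatorname{linspan}\mathcal V_p^*\mathcal V_p$. It says nothing about how large the latter is.

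The missing ingredient is \Cref{lem:IC instrument space}, which uses irreducibility through the asymptotics of $\Phi^{mn}$. By \cite[Theorem~4.2]{Evans1978} there is a resolution of the identity $\{P_i\}_{i=1}^m$ with $P_i\cc^d=E_i$ and $\Phi(P_i)=P_{i-1}$. Moreover $(\Phi^{mn}\otimes\id_{M_d(\cc)})(|\psi\rangle\langle\psi|)\to\sum_iP_i\otimes\rho_i$, where $\psi=\sum_\ell e_\ell\otimes e_\ell$ and $\supp\rho_i=E_i$.

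Hence, for some large $n$ and some $c>0$, this Choi matrix dominates $c\sum_iP_i\otimes P_i$. Comparing ranges gives $\bigoplus_iE_i\otimes E_i\subseteq\operatorname{linspan}\{(v_{\underline i}\otimes\id_{\cc^d})^*\psi\}$. After vectorization and taking adjoints, this reads $\bigoplus_i\mathcal L(E_i)\subseteq\mathcal V_{mn}$. In particular $\id_{\cc^d}\in\mathcal V_{mn}$, so $\bigoplus_i\mathcal L(E_i)=\id_{\cc^d}^*\bigoplus_i\mathcal L(E_i)\subseteq\operatorname{linspan}\mathcal V_{mn}^*\mathcal V_{mn}$.

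Finally, pick rank-one projectors $p_1,\dots,p_d\in\bigoplus_i\mathcal L(E_i)$ summing to $\id_{\cc^d}$. Some $\pi p_j\pi\neq 0$; it has rank one and lies in $\cc\pi$, so $\operatorname{rank}\pi=1$.

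You should also note that \Cref{def:irr} matches the hypothesis ($\phi$-erg) of \cite{BenFra} with $E=\cc^d$. This is \Cref{thm:irr invariant subsapce}.
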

    This theorem is a corollary of \Cref{thm:purification} and \cite[Theorem~1.1]{BenFra}. Indeed, as discussed on page~310 of \cite{BenFra}, the irreducibility assumption of \Cref{def:irr} is equivalent to assumption {\bf ($\phi$-erg)} in \cite{BenFra} with $E=\cc^d$. We provide a detailed proof of this equivalence in \Cref{thm:irr invariant subsapce}.
    
    The only hypothesis missing to apply \cite[Theorem~1.1]{BenFra} is purification. It is a central notion in the study of quantum trajectories. We formulate the usual equivalent assumption for a general measure space, as \Cref{lem:IC implies pur} in the proof uses it to provide an original result of independent interest.
    \begin{definition}[Purification]\label{def:purification}
        Let $(\Omega,\kappa)$ be a measure space. Let $v:\Omega\to M_d(\cc)$ be a measurable map such that $\int_\Omega v^*(\omega)v(\omega) \d \kappa(\omega)=\id_{\cc^d}$. Then, the measure $\kappa$ is said to purify if all the orthogonal projectors $\pi$ satisfying that for any $p\in \nn$ and $\kappa^{\otimes p}$-almost any $(\omega_1,\dotsc,\omega_p)$,
        \begin{equation}
            \label{eq:purification}
            \pi v(\omega_1)^*\dotsb v(\omega_p)^*v(\omega_p)\dotsb v(\omega_1)\pi=\|v(\omega_p)\dotsb v(\omega_1)\pi\|^2\pi,
        \end{equation}
        are of rank at most $1$.
    \end{definition}
    In \cite{BenFra}, purification is formulated slightly differently: \Cref{eq:purification} is required to hold for all matrices $v_1,\dotsc,v_p$ in the support of the image measure of $\kappa$ by $v$. However, by continuity, \Cref{eq:purification} holds for this larger set of matrices if and only if it holds for $\kappa^{\otimes p}$-almost every $(\omega_1,\dotsc,\omega_p)$. Hence, the present definition of purification is equivalent to the one of \cite{BenFra}. The next theorem shows that for non-singular $\mu$, purification holds.
    
\begin{theorem}
    \label{thm:purification}
    Assume $\Phi$ is irreducible and $\mu$ is non-singular, see \Cref{def:irr,def:non singular}. Then $\mu$ satisfies the purification assumption of \Cref{def:purification}.
\end{theorem}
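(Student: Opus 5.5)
Our plan is to use the informational completeness of the randomized measurement. The first step is a general criterion, the \Cref{lem:IC implies pur} announced in the introduction. Suppose the family $\{v(\omega)^*v(\omega)\}_{\omega}$, restricted to a set of full $\kappa$-measure, spans all of $M_d(\cc)$; that is, the instrument is informationally complete. Then $\kappa$ purifies. To see this, take $\pi$ satisfying \eqref{eq:purification} with $p=1$. Then $\pi v(\omega)^*v(\omega)\pi=c(\omega)\pi$ for $\kappa$-almost every $\omega$. By linearity and continuity, $\pi X\pi\in\cc\pi$ for every $X$ in the closed linear span of these matrices, which we assume is $M_d(\cc)$. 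Now take $X=|x\rangle\langle y|$ with $x,y\in\range\pi$. This gives $|x\rangle\langle y|\in\cc\pi$, which forces $\operatorname{rank}\pi\le1$. So only the case $p=1$ is needed, provided informational completeness holds.

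The second step is to reduce informational completeness from $\mu$-almost every $u$ to the whole group. Write $v(u)=\sum_j u_{1j}v_j$. Then $v(u)^*v(u)=\sum_{j,l}\overline{u_{1j}}u_{1l}\,v_j^*v_l$, and each entry depends real-analytically on $u\in U(k)$. Let $S=\supp\mu$. Non-singularity gives $\mu_\unif(S)>0$. Choose a set $S'\subset S$ of full $\mu$-measure. This can be done so that $\mu_\unif(S')>0$; we will either arrange it directly or use the absolutely continuous part of $\mu$. The key claim is the following: if a linear functional $\ell$ on $M_d(\cc)$ vanishes on $v(u)^*v(u)$ for all $u\in S'$, then $\ell$ vanishes for all $u\in U(k)$. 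The map $u\mapsto\ell(v(u)^*v(u))$ is real-analytic on the connected manifold $U(k)$. Its zero set has positive Haar measure, so it vanishes identically. Hence the span over $S'$ equals the span over all of $U(k)$.

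The third step, which we expect to be the main obstacle, is to identify that full span and to use irreducibility. The first row of $u$ ranges over the entire unit sphere of $\cc^k$. Polarization then gives that the span of $\{v(u)^*v(u)\}$ is $\operatorname{linspan}\{v_j^*v_l\}_{j,l}=\mathcal V_1^*\mathcal V_1$. This need not be all of $M_d(\cc)$, so we do not expect to get informational completeness directly. Instead we will argue with $\pi$ and irreducibility. From the above, $\pi v_j^*v_l\pi=c_{jl}\pi$ for all $j,l$.

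To go further we use longer products. For each $p$, \eqref{eq:purification} holds for almost every $(u_1,\dots,u_p)$, and the same analyticity argument extends it to all $u_i$. Polarizing in each variable gives
$$\pi a^*b\,\pi\in\cc\pi \qquad \text{for all } a,b\in\mathcal V_p .$$
Irreducibility, via \Cref{thm:prim generated subspace} or its irreducible analogue, ensures that $\mathcal W:=\sum_{p}\mathcal V_p$ (with $\mathcal V_0=\cc\,\id$ if needed) satisfies $\mathcal W x=\cc^d$ for every $x\neq0$. Fix $x,y\in\range\pi$ with $x\perp y$. Choose $a\in\mathcal W$ with $ax=e$ for some unit vector $e$, and choose $b\in\mathcal W$ with $by=e$. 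Then
$$\langle x,\pi a^*b\pi y\rangle=\langle ax,by\rangle=1\ne0 .$$
On the other hand, $\pi a^*b\pi$ is a scalar multiple of $\pi$ and $x\perp y$, so this inner product must be $0$, a contradiction.

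The delicate points are these. First, cross terms $a\in\mathcal V_p$, $b\in\mathcal V_q$ with $p\neq q$ are not directly covered by the relation above. We plan to handle them by working with a single large $p$, or by padding products with the Kraus identity $\sum_i v_i^*v_i=\id$, so that a single $\mathcal V_p$ already acts transitively. Second, the positive-measure-to-everywhere analyticity step must be stated carefully.
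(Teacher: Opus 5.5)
Your route up to the polarization identity $\pi a^*b\,\pi\in\cc\pi$ for all $a,b\in\mathcal V_p$ and all $p$ is sound. It is essentially a reformulation of the paper's. The space that the paper generates from informational completeness of the probe POVM $u\mapsto u^*|e_1\rangle\langle e_1|u$, namely the span of the $\mathcal J(\omega_1)\circ\dotsb\circ\mathcal J(\omega_p)(\id_{\cc^d})$, is exactly $\operatorname{linspan}\mathcal V_p^*\mathcal V_p$. Your analyticity step is the same Mityagin-type argument the paper uses.

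One repair is needed there. You cannot in general choose a full-$\mu$-measure set of positive Haar measure, nor rely on an absolutely continuous part of $\mu$: $\mu$ may be purely atomic with atoms dense in $U(k)$ and still be non-singular. Instead, use continuity in $(u_1,\dotsc,u_p)$ to upgrade \eqref{eq:purification} from $\mu^{\otimes p}$-a.e.\ to every point of $(\supp\mu)^p$, which has positive Haar measure. The paper does exactly this.

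The genuine gap is your last step, which fails for irreducible channels of period $m\geq 2$; the theorem covers these. For such $\Phi$ there is an orthogonal decomposition $\cc^d=E_1\oplus\dotsb\oplus E_m$ with $\mathcal V_1E_i\subseteq E_{i+1}$ (indices mod $m$), so $\mathcal V_pE_i\subseteq E_{i+p}$. Consequently no single $\mathcal V_p$ acts transitively, and for $x\in E_i$, $y\in E_j$ with $i\neq j$ one has $\mathcal V_px\perp\mathcal V_py$ for every $p$. A concrete case is $d=2$, $v_1=|e_2\rangle\langle e_1|$, $v_2=|e_1\rangle\langle e_2|$, $\pi=\id_{\cc^2}$, $x=e_1$, $y=e_2$.

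So any $a,b$ with $ax=by$ must come from different $\mathcal V_p$'s. These are precisely the cross terms you do not control. Padding with $\sum_i v_i^*v_i=\id_{\cc^d}$ cannot help, since $a^*b=\sum_i(v_ia)^*(v_ib)$ raises both lengths by one and never produces $\mathcal V_p^*\mathcal V_q$ with $p\neq q$.

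The missing ingredient is the structure of $\mathcal V_p$ for periodic channels. The paper supplies it in \Cref{lem:IC instrument space} via \cite{Evans1978}: from $\Phi(P_i)=P_{i-1}$ and the convergence of $(\Phi^{mn}\otimes\id)(|\psi\rangle\langle\psi|)$ one gets $\bigoplus_{i=1}^m\mathcal L(E_i)\subseteq\mathcal V_{mn}$ for some $n$, in particular $\id_{\cc^d}\in\mathcal V_{mn}$. With this, your polarization identity gives $\pi X\pi=\pi\,\id_{\cc^d}^*X\,\pi\in\cc\pi$ for every block-diagonal $X$. A rank-one projector $q\in\bigoplus_i\mathcal L(E_i)$ with $\pi q\pi\neq 0$ then forces $\operatorname{rank}\pi\le 1$, as in the proof of \Cref{lem:IC implies pur}.

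Alternatively, the same input rescues your off-diagonal argument. Choose $x\perp y$ in $\range\pi$ with nonzero components in a common $E_i$, replacing them by $(x\pm y)/\sqrt2$ if necessary. Then take $a,b\in\mathcal L(E_i)\subseteq\mathcal V_{mn}$ with $ax=by=e$. As written, your argument closes only in the primitive case $m=1$, where $\mathcal V_px=\mathcal V_py=\cc^d$ for all large $p$.
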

This theorem is proved in \Cref{sec:proof pur}.

    \subsection{$\nu_{\unif}$-irreducibility}
The notion of $\varphi$-irreducibility is standard for Markov chains. It is a generalization of the irreducibility criterion for stochastic matrices to an arbitrary space and can be understood as a notion of ergodicity. We refer the reader to \cite{MT} for a comprehensive presentation of the subject.

\begin{definition}[Proposition~4.2.1 in \cite{MT}]\label{def:phi irr}
Let $\mathsf{X}$ be a measurable space. A Markov chain with kernel $\mathsf{P}$ over $\mathsf{X}$ is $\varphi$-irreducible for a measure $\varphi$ if for any $\mathsf{x}\in \mathsf{X}$ and any measurable $\mathsf{A} \subset \mathsf{X}$ with $\varphi(\mathsf{A})>0$, there exists $n\in\nn$ such that
$$\mathsf{P}^n(\mathsf{x},\mathsf{A})>0.$$
\end{definition}

In general, quantum trajectories are not $\varphi$-irreducible (see \cite[Section~8]{BenFat}). However, the following result gives a sufficient criterion for $\varphi$-irreducibility of a  quantum trajectory. We recall that $\mu\gg \mu_\unif$ means that $\mu_\unif$ is absolutely continuous with respect to $\mu$.

\begin{theorem}
    \label{thm:phi-irred}
    Assume $\Phi$ is multiplicatively primitive, see \Cref{def:mPrim}, and $\mu\gg \mu_\unif$. Then, the Markov chain $(\hat x_n)_n$ on $\pcd$ is $\varphi$-irreducible with
    $$\varphi=\nu_\unif.$$
    Moreover, $\nu_{\inv}\gg\nu_{\rm unif}$.
\end{theorem}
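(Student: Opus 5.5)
Fix $\hat x\in\pcd$ and a measurable $A\subset\pcd$ with $\nu_\unif(A)>0$. We want $n$ with $\Pi^n(\hat x,A)>0$. Since $\mu\gg\mu_\unif$, it suffices to prove the statement with $\mu$ replaced by $\mu_\unif$, restricted to a set of positive measure. The quantity we need to control is
$$\Pi^n(\hat x,A)\ \ge\ c\int_{U(k)^n}\one_A\big(v(u_n)\dotsb v(u_1)\cdot\hat x\big)\,\|v(u_n)\dotsb v(u_1)x\|^2\d\mu_\unif^{\otimes n}(u).$$
The first observation is that, as $u$ ranges over $U(k)$, $v(u)=\sum_j u_{1j}v_j$, where the first row $(u_{1j})_j$ is uniformly distributed on the unit sphere of $\cc^k$. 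So $v(u)$ ranges over the unit sphere of $\mathcal V_1$ (up to the linear parametrization), with a law absolutely continuous with respect to Lebesgue measure on the real manifold $S:=\{\sum_j c_jv_j:\|c\|=1\}$. Since the target is projective, we can drop the normalization and work with the map $F_n:(c^{(1)},\dotsc,c^{(n)})\in(\cc^k)^n\mapsto a_n\dotsb a_1x$, where $a_i=\sum_j c^{(i)}_jv_j$, with Gaussian (Lebesgue-equivalent) laws on the $c^{(i)}$.

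By multiplicative primitivity, there is $p$ such that $F_p$ is surjective onto $\cc^d$. $F_p$ is a polynomial map between real vector spaces. By Sard's theorem a surjective smooth map must have a regular point, i.e. a point where its differential has full rank $2d$. The set where the rank is full is then the complement of a real algebraic set, hence open, dense, and of full Lebesgue measure in $(\cc^k)^p$, because $\mathbb R^{2kp}$ is irreducible and the relevant minor polynomial is not identically zero. On that set $F_p$ is a submersion. Hence the pushforward of a Lebesgue-equivalent measure, weighted by the strictly positive function $\|F_p\|^2$ (positive off a null set, since $F_p$ is nonzero at a surjective point), is absolutely continuous with respect to Lebesgue measure on $\cc^d$. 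More importantly, the pushforward is \emph{equivalent} to Lebesgue measure on the image of the regular set. That image is open and, I expect, dense; at minimum it contains a nonempty open set. Projecting to $\pcd$, the law of $\hat x_p$ under $\Pi^p(\hat x,\cdot)$ dominates $\nu_\unif$ restricted to a nonempty open set $O_x$.

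The main obstacle is getting this for all of $A$ rather than only $A\cap O_x$. There are two options. The first is to show the regular image is of full measure. A polynomial submersion's image need not be conull, but $F_p$ is homogeneous, and the complement of the regular image lies inside the image of the critical set together with the non-attained set. The critical image has measure zero by Sard; the non-attained set is empty by surjectivity. So the regular image is of full measure, and $\Pi^p(\hat x,\cdot)\gg\nu_\unif$ is in fact equivalent to it, which gives $\Pi^p(\hat x,A)>0$ directly. I will check carefully the claim that points in the image of $F_p$ not in the critical image are regular values with nonempty preimage in the regular set. This is exactly what surjectivity gives. Note that $p$ depends on $\hat x$, which is allowed by \Cref{def:phi irr}.

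For $\nu_\inv\gg\nu_\unif$: let $\nu_\inv(B)=0$ for some $B$. Then $\nu_\inv\Pi^n(B)=0$ for all $n$, so $\Pi^n(\hat x,B)=0$ for $\nu_\inv$-a.e.\ $\hat x$ and all $n$. By the previous paragraph, taking for each such $\hat x$ its $p=p(\hat x)$, we get $\nu_\unif(B)=0$. Strictly speaking we need one $\hat x$, which exists since $\nu_\inv$ is a probability measure. Hence $\nu_\unif\ll\nu_\inv$. Existence of $\nu_\inv$ comes from compactness of $\pcd$ (Feller kernel); uniqueness is not needed.
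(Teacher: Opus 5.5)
Your proof is correct, but it takes a different route from the paper.

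\textbf{The paper's route.} It argues locally. Multiplicative primitivity is used only to produce one tuple $(u(1),\dotsc,u(p))$ with $v(u(p))\dotsb v(u(1))\cdot\hat x=\hat y$. Continuity of $(w_1,\dotsc,w_p)\mapsto v(w_p)\dotsb v(w_1)\cdot\hat x$ and of the weight $\|v(w_p)\dotsb v(w_1)x\|^2$ near that tuple, together with $\mu(B(u(l),\eta))>0$, then gives the explicit bound $\Pi^p(\hat x,B(\hat y,\varepsilon))\ge\delta\prod_l\mu(B(u(l),\eta))>0$. The conclusion $\nu_\inv\gg\nu_\unif$ is quoted via \Cref{prop:phi_irreducibility_ac}.

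\textbf{What each approach buys.} The paper's argument is elementary and quantitative, but it only controls open sets. Its reduction from a general measurable $A$ to balls uses the claim that $\nu_\unif(A)>0$ forces $A$ to have nonempty interior. That claim is false: the complement of a union of small balls around a countable dense set, with total measure $<1/2$, has positive measure and empty interior. Your argument is global and handles arbitrary measurable $A$. Sard plus surjectivity of $F_p$ makes the regular image $F_p(R)$ conull, and the submersion structure on $R$ then gives $\nu_\unif\ll\Pi^{p}(\hat x,\cdot)$. For $\mu=\mu_\unif$ it even gives $\Pi^p(\hat x,\cdot)\sim\nu_\unif$. Your direct derivation of $\nu_\unif\ll\nu_\inv$ correctly replaces the citation and works for any invariant probability measure, so uniqueness is indeed not needed.

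\textbf{Two points to tighten.}

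First, $\mu\gg\mu_\unif$ gives no constant $c>0$ with $\mu\ge c\,\mu_\unif$. Restricting $u$ to a subset of positive Haar measure would also break the surjectivity of $F_p$ on which the conull-image step depends. Instead, use the Lebesgue decomposition $\mu\ge g\,\mu_\unif$ with $g>0$ $\mu_\unif$-a.e. Equivalently, note that $\mu_\unif^{\otimes p}\ll\mu^{\otimes p}$ transfers positivity of the integral of a nonnegative function.

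Second, write out the step ``the pushforward is equivalent to Lebesgue measure on $F_p(R)$.'' By the local submersion theorem, $R$ is covered by countably many charts $U_n$ in which $F_p$ is a coordinate projection. Let $\tilde A\subset\cc^d\setminus\{0\}$ be the cone over $A$, which has positive Lebesgue measure. Then $\tilde A\cap F_p(U_n)$ has positive measure for some $n$. Since $F_p^{-1}(\tilde A)\cap U_n$ is locally a product with that set, it has positive Lebesgue measure. The coarea formula gives the same conclusion.
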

The condition $\mu\gg \mu_\unif$ ensures that any ray in  $\mathcal V_1$ is potentially selected through the measurement.

The $\varphi$-irreducibility of the quantum trajectory has numerous consequences -- see \cite{MT}. In particular, since $\pcd$ is compact, the chain $(\hat x_n)$ is positive Harris. It follows that the Wasserstein distance in \Cref{thm:uniqueness invariant} can be switched to the total variation one -- see \cite[Theorem~13.3.4]{MT}. Moreover, the law of large numbers holds for any $L^1(\nu_\inv)$ function -- see \cite[Theorem~17.0.1]{MT}. This contrasts with \cite[Proposition~8.2]{BenFat} which shows that the law of large numbers may fail for bounded discontinuous functions for general quantum trajectories. We will not list all the properties implied by $\varphi$-irreducibility here, as we focus on the symmetries and regularity of the invariant measure $\nu_\inv$. We refer the reader interested into other consequences to \cite{MT}.

An important result that is partly a consequence of the previous theorem is that for $\mu\gg\mu_\unif$ and $v(u)$ invertible, the invariant measure is equivalent to the uniform one. We recall that $\nu_1\sim\nu_2$ means that $\nu_1\ll \nu_2$ and $\nu_1\gg\nu_2$.
\begin{theorem}
    \label{thm:absolute continuity}
    Assume $\Phi$ is multiplicatively primitive, see \Cref{def:mPrim}, $\mu\gg\mu_\unif$ and $v(u)$ is invertible for $\mu$-almost every $u$. Then, $\nu_\inv\sim\nu_\unif$.
\end{theorem}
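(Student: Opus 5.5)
Since \Cref{thm:phi-irred} already gives $\nu_\inv\gg\nu_\unif$, it remains to prove $\nu_\inv\ll\nu_\unif$. The plan is to show that, under the invertibility assumption, the one-step kernel maps every point to a measure absolutely continuous with respect to $\nu_\unif$. Once this holds, invariance finishes the argument: if $\nu_\unif(A)=0$, then $\Pi(\hat x,A)=0$ for every $\hat x$, so $\nu_\inv(A)=\int\Pi(\hat x,A)\d\nu_\inv(\hat x)=0$.

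\textbf{Step 1: reduce to the uniform choice of $u$.} Absolute continuity of $\Pi(\hat x,\cdot)$ with respect to $\nu_\unif$ is implied by the same property for the pushforward of $\mu$ under $u\mapsto v(u)\cdot\hat x$, since $\Pi(\hat x,\cdot)$ has a bounded density $\|v(u)x\|^2$ relative to this pushforward. So I need: for $\mu_\unif(B)=0$-type control, namely that the pushforward of $\mu$ by $u\mapsto v(u)\cdot\hat x$ is $\ll\nu_\unif$. Here I will use $\mu$ as written, assuming (as is natural in the absolutely continuous setting) that $\mu\ll\mu_\unif$ as well. If only $\mu\gg\mu_\unif$ is available, I would instead restrict to the absolutely continuous part of $\mu$ and treat the singular part separately. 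I flag this as the delicate point and expect to use the structure $\mu=\sum_i\lambda_i$ to handle it.

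\textbf{Step 2: the smooth map has full rank almost everywhere.} Write $v(u)=\sum_j u_{1j}v_j$. The first row of a Haar unitary is uniform on the unit sphere $S^{2k-1}\subset\cc^k$, so it suffices to show that
$$F_x:\ \cc^k\setminus\{c: \textstyle\sum_j c_jv_jx=0\}\to\pcd,\qquad c\mapsto \widehat{\sum_j c_j v_jx},$$
pushes Lebesgue measure to a measure $\ll\nu_\unif$. Now $F_x$ is the composition of the linear map $L_x:c\mapsto\sum_jc_jv_jx$ with projectivization. The image of $L_x$ is $\mathcal V_1x$.

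If $\mathcal V_1x=\cc^d$, then $L_x$ is a surjective linear map, so it pushes Lebesgue measure to a measure $\ll$ Lebesgue on $\cc^d$. Projectivization, being a submersion off $0$, then pushes this to a measure $\ll\nu_\unif$, and we are done.

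\textbf{Step 3: the main obstacle, $\mathcal V_1x\neq\cc^d$.} When $\mathcal V_1x$ is a proper subspace, a single step is not enough. The fix is to replace $\Pi$ by $\Pi^p$ and use multiplicative primitivity: choose $p$ with $\mathcal V_1^px=\cc^d$. The map
$$(c^{(1)},\dots,c^{(p)})\mapsto a_p\dotsb a_1x,\qquad a_l=\textstyle\sum_j c^{(l)}_jv_j,$$
is polynomial with image $\cc^d$, so by Sard's theorem its differential has rank $2d$ (real rank) on a nonempty Zariski open set, hence almost everywhere. The submersion theorem then gives absolute continuity of the pushforward, which is where invertibility enters in an essential way. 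Invertibility of $v(u)$ for $\mu$-almost every $u$ guarantees that the intermediate points $v(U_l)\dotsb v(U_1)\cdot\hat x$ are well defined and that the weights $\|v(u)\dotsb x\|^2$ are positive. It also lets me propagate: if $\mathcal V_1 y\neq\cc^d$, then $\Pi(\hat y,\cdot)$ is supported on a lower-dimensional set, and I must check that this does not break the argument.

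For $\nu_\inv$ the argument then works with $\nu_\inv=\nu_\inv\Pi^{p}$, but $p$ depends on $\hat x$. I would handle this either by a compactness argument giving a uniform $p$, using lower semicontinuity of the rank of the differential in $x$ together with the compactness of $\pcd$, or by decomposing $\pcd$ into the countable union of the sets $\{\hat x: p(\hat x)=p\}$. For each such set $E_p$ I would write $\nu_\inv(A)=\int\Pi^p(\hat x,A)\d\nu_\inv(\hat x)$, which vanishes once every $\hat x$ admits some $p$ with $\Pi^{p}(\hat x,\cdot)\ll\nu_\unif$. Because mass already absolutely continuous stays absolutely continuous under invertible $v(u)$ (smooth diffeomorphisms of $\pcd$), I can take $p$ as large as needed, which yields a uniform $p$ by compactness.
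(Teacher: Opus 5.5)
There is a genuine gap, and it sits exactly at the point you flag in Step 1. The hypothesis $\mu\gg\mu_\unif$ means $\mu_\unif\ll\mu$. It allows $\mu$ to carry a singular, even atomic, part. In that case your central claim, that $\Pi^p(\hat x,\cdot)\ll\nu_\unif$ for some $p$, is false.

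Here is a concrete instance. Take the channel of \Cref{prop:xpl 1 3D} and $\lambda=\frac12\mathrm{Haar}+\frac12\delta_{u_0}$ with $u_0=\frac1{\sqrt2}\left(\begin{smallmatrix}1&1\\1&-1\end{smallmatrix}\right)$. Then $v_1(u_0)$ and $v_2(u_0)$ are invertible, so every hypothesis of the theorem holds. Yet $\mu$ has an atom of mass at least $\frac12$ at a point $u$ with $v(u)=v_1(u_0)$. Hence $\Pi^p(\hat x,\{v_1(u_0)^p\cdot\hat x\})\geq 2^{-p}\|v_1(u_0)^px\|^2>0$ for every $p$ and every $\hat x$.

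So no smoothing of the kernel from individual starting points, however many steps you take, can give $\nu_\inv\ll\nu_\unif$. Splitting off the singular part of $\mu$ does not rescue this, because that part contributes to every power $\Pi^p$. The structure $\mu=\sum_i\lambda_i$ is irrelevant too, since each $\lambda_i$ inherits the singular part of $\lambda$. As written, your argument (Sard/submersion for $\Pi^{p(\hat x)}$, propagation via invertibility, uniform $p$ by compactness) only covers the special case $\mu\sim\mu_\unif$. That is an extra hypothesis, and \Cref{sec:xpl non equivalent uniform} shows that atoms in $\mu$ are precisely what this theorem has to cope with.

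The missing idea is to regularize the invariant measure rather than the kernel.

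\begin{enumerate}
\item Let $\nu_\inv=\nu_{ac}+\nu_s$ be the Lebesgue decomposition with respect to $\nu_\unif$.
\item Invertibility of $v(u)$ for $\mu$-a.e.\ $u$ alone, whatever the singular part of $\mu$, gives $\nu_\unif\Pi\ll\nu_\unif$. Indeed, each $\hat x\mapsto v(u)\cdot\hat x$ maps $\nu_\unif$ to an equivalent measure. It follows that $\nu_{ac}\Pi\ll\nu_\unif$.
\item Since $\nu_{ac}\Pi\leq\nu_\inv\Pi=\nu_\inv$ and $\nu_s$ is carried by a $\nu_\unif$-null set, you get $\nu_{ac}\Pi\leq\nu_{ac}$. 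Equality of total masses then gives $\nu_{ac}\Pi=\nu_{ac}$.
\item Uniqueness of the invariant measure applies by \Cref{thm:uniqueness invariant}: multiplicative primitivity implies irreducibility, and $\mu\gg\mu_\unif$ implies non-singularity. So either $\nu_{ac}=0$, or $\nu_{ac}/\nu_{ac}(\pcd)=\nu_\inv$ and hence $\nu_\inv\ll\nu_\unif$.
\item The case $\nu_{ac}=0$ is excluded by $\nu_\inv\gg\nu_\unif$, which is \Cref{thm:phi-irred}.
\end{enumerate}

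This is the paper's argument (\Cref{lem:nu_inv_regularity}). Multiplicative primitivity enters only through $\varphi$-irreducibility, and none of your Sard/submersion machinery is needed.
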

\begin{remark}
    The assumption of invertibility is crucial to the proof. In \Cref{sec:xpl non equivalent uniform} we provide a counterexample to \Cref{thm:absolute continuity} when $\mu(\{u: \det v(u)=0\})>0$.
\end{remark}

\begin{remark}\label{rem:invertible everywhere}
    When $\mu\sim\mu_\unif$, the assumption of invertibility of $v(u)$ can be simplified: In \Cref{lem:invertible evrywhere} we show that when $\mu\ll\mu_\unif$, it is sufficient to assume invertibility of $v(u)$ for one single $u\in U(k)$.
\end{remark}

\begin{remark}
    In \Cref{sec:xpl 3D non inv} we provide an example of $\Phi$ multiplicatively primitive such that $\det v(u)=0$ for any $u\in U(k)$. More generally, there exist numerous irreducible linear spaces of non-invertible matrices. For example, the set of anti-symmetric matrices in odd dimension, or the adjoint representation of a semisimple complex Lie algebra. See, for example, \cite{lovasz1989singular,draisma2006small} for these and other examples.
\end{remark}

The theorems of this section are proved in \Cref{sec:proof irred}.

\subsection{Uniformly randomized measurement and symmetries}
    We finally focus on the case $\mu=\mu_\unif$. Leveraging Haar measure invariance, we show that the symmetries of $\Phi$ are recovered in the invariant measure.
    \begin{definition}
    A quantum channel $\Phi$ is $U$-covariant for some $U\in U(d)$ if for any $X\in \Ld$,
    $$\Phi(U^*XU)=U^*\Phi(X)U.$$
    The group generated by all the $U\in U(d)$ such that $\Phi$ is $U$-covariant is called the symmetry group of $\Phi$ and is denoted by $G_\Phi$.
    \end{definition}
    \begin{theorem}
        \label{thm:symmetries}
        Assume $\Phi$ is irreducible, see \Cref{def:irr}, and $\mu=\mu_\unif$. Then, for any $U\in G_\Phi$, $\nu_{\rm inv}$ is invariant under the image $\hat x\mapsto U\cdot \hat x$.
    \end{theorem}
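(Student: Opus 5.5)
Since $\mu_\unif$ is non-singular, \Cref{thm:uniqueness invariant} gives a unique invariant measure $\nu_\inv$. It therefore suffices to show that the kernel $\Pi$ commutes with the action of $U\in G_\Phi$: for all bounded measurable $f$ and all $\hat x$,
$$\Pi(f\circ U)(\hat x)=(\Pi f)(U\cdot\hat x).$$
Granting this, let $U_\#\nu_\inv$ be the image of $\nu_\inv$ under $\hat x\mapsto U\cdot\hat x$. Then $(U_\#\nu_\inv)\Pi(f)=\nu_\inv\big((\Pi f)\circ U\big)=\nu_\inv\Pi(f\circ U)=\nu_\inv(f\circ U)=(U_\#\nu_\inv)(f)$. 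So $U_\#\nu_\inv$ is also invariant, and uniqueness gives $U_\#\nu_\inv=\nu_\inv$. Since the set of $U$ leaving $\nu_\inv$ invariant is a group, it is enough to check the generators, i.e. those $U$ for which $\Phi$ is $U$-covariant.

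\textbf{Step 1: covariance as a change of Kraus decomposition.} If $\Phi$ is $U$-covariant, then $\Phi(X)=U\Phi(U^*XU)U^*=\sum_i (U^*v_iU)^*X(U^*v_iU)$. Thus $(U^*v_iU)_{i=1}^k$ is another Kraus decomposition of $\Phi$ with the same $k$. By the unitary freedom recalled in \Cref{sec:setup}, there is $w\in U(k)$ with $U^*v_iU=\sum_j w_{ij}v_j$. Consequently
$$U^*v(u)U=v(uw)\quad\text{for all }u\in U(k),$$
since $v(u)=\sum_j u_{1j}v_j$ and $(uw)_{1l}=\sum_j u_{1j}w_{jl}$.

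\textbf{Step 2: the intertwining.} Write
$$\Pi(f\circ U)(\hat x)=\int f\big(Uv(u)x\big)\,\|v(u)x\|^2\d\mu_\unif(u).$$
Set $y=Ux$, so $x=U^*y$. Then $Uv(u)U^*=v(uw')$ for the unitary $w'$ associated with $U^*$; this is valid because $U^*$ is also a covariance symmetry. Hence $Uv(u)x=v(uw')y$ and $\|v(u)x\|=\|Uv(u)U^*y\|=\|v(uw')y\|$. Right-invariance of the Haar measure under $u\mapsto uw'$ then turns the integral into $\int f(v(u)\cdot\hat y)\|v(u)y\|^2\d\mu_\unif(u)=\Pi f(U\cdot\hat x)$.

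One point needs care: $\mu$ was defined in \eqref{eq:def mu} as $\sum_i\lambda_i$ rather than as $\lambda$ itself. For $\lambda=\mu_\unif$, each $\lambda_i$ is the pushforward of Haar measure by a left multiplication, so each $\lambda_i$ is again Haar. Hence $\mu=k\,\mu_\unif$ (normalized), which is both left- and right-invariant. This right-invariance is what the change of variables in Step 2 uses.

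\textbf{Main obstacle.} I expect the main subtlety to be the bookkeeping in Step 1. One must check that the relation $U^*v(u)U=v(uw)$ holds for the specific parametrization $v(u)=v_1(u)$, that it is compatible with the definition of $\mu$, and that the arbitrary convention for $w\cdot\hat x$ when $wx=0$ is harmless. That convention is harmless because such terms carry weight $\|v(u)x\|^2=0$.
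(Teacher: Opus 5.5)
Your proof is correct. It shares the paper's outer skeleton: uniqueness of $\nu_\inv$ from \Cref{thm:uniqueness invariant}, plus the intertwining $\delta_{U\cdot\hat x}\Pi=U_\#(\delta_{\hat x}\Pi)$, which makes $U_\#\nu_\inv$ invariant. However, you prove the intertwining by a genuinely different argument.

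The paper first identifies $\delta_{\hat x}\Pi$ with the GAP measure of $\Phi^*(|x\rangle\langle x|)$ (\Cref{lem:GAP}, via \cite[Lemma~1]{GLMTZ16}). It then combines the unitary covariance of $\rho\mapsto\GAP_\rho$ with $\Phi^*(U|x\rangle\langle x|U^*)=U\Phi^*(|x\rangle\langle x|)U^*$ (\Cref{lem:sym to measure}). You instead read covariance as a change of Kraus decomposition with the same $k$, so that $Uv(u)U^*=v(uw')$ for a fixed $w'\in U(k)$, and conclude by right-invariance of Haar measure.

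Your route is elementary and self-contained, with no external GAP results. It also proves more: the conclusion holds for any non-singular $\mu$ that is invariant under $u\mapsto uw'$ for the $w'$ implementing the symmetries, not only for $\mu=\mu_\unif$. In exchange, the paper's route gives an explicit description of the one-step transition law as a GAP measure. The paper reuses that description for the density equation \eqref{eq:fix point f} in dimension $2$ and for the projection-channel example.

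There is one small algebraic slip in Step~1. In fact $U\Phi(U^*XU)U^*=\sum_i (Uv_iU^*)^*X(Uv_iU^*)$, not the expression with $U^*v_iU$. Your stated claim is still true, because $U$-covariance implies $U^*$-covariance. Moreover, the correct computation gives directly the relation $Uv(u)U^*=v(uw')$ that you need in Step~2, so the detour through $U^*$ is unnecessary.
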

    
    From this theorem, we can deduce the invariant measure when the symmetry group is the full unitary group.
    \begin{corollary}
        \label{cor:full symmetry}
        Assume $\Phi$ is irreducible, see \Cref{def:irr}, and $\mu=\mu_\unif$. Assume moreover that $G_\Phi=U(d)$. Then $\nu_{\rm inv}=\nu_\unif$.
    \end{corollary}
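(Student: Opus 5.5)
The plan is to combine \Cref{thm:symmetries} with the uniqueness statement of \Cref{thm:uniqueness invariant}. First, $\mu=\mu_\unif$ is non-singular in the sense of \Cref{def:non singular}: a Haar measure gives its own support, which is all of $U(k)$, full (positive) mass. Together with the irreducibility of $\Phi$, \Cref{thm:uniqueness invariant} therefore gives a unique invariant probability measure $\nu_\inv$ for $\Pi$. By \Cref{thm:symmetries}, $\nu_\inv$ is invariant under $\hat x\mapsto U\cdot\hat x$ for every $U\in G_\Phi$. Under the assumption $G_\Phi=U(d)$, this means $\nu_\inv$ is invariant under the action of the whole unitary group on $\pcd$.

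It remains to identify a $U(d)$-invariant probability measure on $\pcd$. As recalled in \Cref{sec:setup}, $\nu_\unif$ is the unique probability measure on $\pcd$ such that $u\cdot\hat x$ has the same law as $\hat x$ for every unitary $u$. This holds because $U(d)$ acts transitively on $\pcd$, so the invariant probability measure on the homogeneous space $U(d)/(U(1)\times U(d-1))$ is unique. Consequently $\nu_\inv=\nu_\unif$.

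No step is a real obstacle, since all the substance lies in \Cref{thm:symmetries}. The only points to check carefully are that $\mu_\unif$ (normalized so that $\mu(U(k))=k$, as required by \eqref{eq:def mu}) is non-singular, which is immediate since its support is $U(k)$, and the uniqueness of the unitarily invariant probability measure on projective space. If one wanted to avoid quoting the latter, one could instead check directly that $\nu_\unif$ is $\Pi$-invariant and then conclude by uniqueness. That check would go as follows: for a fixed $\hat x$, invariance of the Haar measure under the $U(d)$-covariance shows that the law of $v(u)\cdot\hat x$, averaged over $\hat x\sim\nu_\unif$, is unitarily invariant, and hence equal to $\nu_\unif$.
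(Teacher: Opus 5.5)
Your proof is correct and follows essentially the paper's route. Non-singularity of $\mu_{\unif}$ and irreducibility give uniqueness of $\nu_{\inv}$ via \Cref{thm:uniqueness invariant}; \Cref{thm:symmetries} with $G_\Phi=U(d)$ then makes $\nu_{\inv}$ invariant under all of $U(d)$, and the uniqueness of the unitarily invariant probability measure on $\pcd$ (recalled in \Cref{sec:setup}) forces $\nu_{\inv}=\nu_{\unif}$. Your optional ``direct check'' does not actually avoid that uniqueness fact, since it deduces $\nu_{\unif}\Pi=\nu_{\unif}$ from unitary invariance of $\nu_{\unif}\Pi$; it is also unnecessary.
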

    \Cref{thm:symmetries} is proved in \Cref{sec:proof sym}.

\section{On multiplicative primitivity}\label{sec:mPrim}
As it is an uncommon notion of ergodicity for quantum channels, we discuss multiplicative primitivity. For future reference, we explicitly state in \Cref{prop:mPrim to Prim} that multiplicatively primitive quantum channels are also primitive. Moreover, we show that a positivity improving quantum channel is multiplicatively primitive -- see \Cref{prop:pos improve implies mPrim}. So
$$\text{positivity improving} \quad \implies \quad \text{multiplicatively primitive} \quad \implies \quad \text{primitive}. $$
The question whether multiplicative primitivity is equivalent to primitivity is open in dimension $d\geq 3$. In \Cref{prop:mPrim dim 2}, we show equivalence in $d=2$. We have not found any counterexample yet when $d\geq3$.

Before we prove these results, we highlight the difference between primitivity and its multiplicative version using the point of view of purely generated finitely correlated states as introduced in \cite{fannes1992finitely}, also known as matrix product states (MPS). For any $n\in \nn$, let $V_n:\cc^d\to\cc^d\otimes (\cc^k)^{\otimes n}$ be defined by $V_n=\sum_{i_1,\dotsc,i_n=1}^k v_{i_n}\dotsb v_{i_1}\otimes |e_{i_1}\rangle\otimes\dotsb\otimes |e_{i_n}\rangle$ where $(v_i)$ is a Kraus decomposition of $\Phi$ and $(e_i)$ is an orthonormal basis of $\cc^k$. Then, for any $\hat x\in \pcd$,
$|V_n x\rangle$ is an element of $S(\cc^d\otimes(\cc^k)^{\otimes n})$, the unit sphere\footnote{For any Hilbert space $\mathcal H$ we denote by $S(\mathcal H)$ its unit sphere.} of $\cc^d\otimes(\cc^k)^{\otimes n}$. Therefore, primitivity is equivalent to the existence, for any $\hat x\in \pcd$, of an $n \in \nn$ such that
$$\left\{\frac{(\id_{\cc^d}\otimes \langle \Psi_n|)|V_n x\rangle}{\|(\id_{\cc^d}\otimes \langle \Psi_n|)|V_n x\rangle\|} : \Psi_n\in S((\cc^k)^{\otimes n}), \|(\id_{\cc^d}\otimes \langle \Psi_n|)|V_n x\rangle\|>0\right\}=S(\cc^d),$$
while multiplicative primitivity is equivalent to the existence, for any $\hat x\in \pcd$, of an $n \in \nn$ such that
\begin{align*}
    \left\{\frac{(\id_{\cc^d}\otimes \langle \psi_1|\otimes \dotsb \otimes \langle \psi_n|)|V_n x\rangle}{\|(\id_{\cc^d}\otimes \langle \psi_1|\otimes \dotsb \otimes \langle \psi_n|)|V_n x\rangle\|} : (\psi_i)_{i=1}^n\in S(\cc^k)^n, \|(\id_{\cc^d}\otimes \langle \psi_1|\otimes \dotsb \otimes \langle \psi_n|)|V_n x\rangle\|>0\right\}\quad\\
    \pushright{=S(\cc^d)}.
\end{align*}
Hence, primitivity requires that any state in $S(\cc^d)$ can be obtained by projecting $|V_n x\rangle$ along any, possibly entangled, state $|\Psi_n\rangle$ in $S((\cc^k)^{\otimes n})$, whereas its multiplicative version requires that the same state $|\Psi_n\rangle$ is a product state. That suggests the implication stated in the next proposition.
\begin{proposition}\label{prop:mPrim to Prim}
If $\Phi$ is multiplicatively primitive as in~\Cref{def:mPrim}, then it is primitive, see \Cref{def:prim}.
\end{proposition}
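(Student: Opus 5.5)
The plan is to combine the inclusion $\mathcal V_1^p x\subseteq \mathcal V_p x$ with the characterization of primitivity in \Cref{thm:prim generated subspace}. That theorem says $\Phi$ is primitive if and only if, for every non-zero $x\in\cc^d$, there is a $p\in\nn$ with $\mathcal V_p x=\cc^d$.

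The first step is to check the inclusion $\mathcal V_1^p\subseteq \mathcal V_p$, where $\mathcal V_1^p$ is read as the set of products $a_p\dotsb a_1$ with $a_j\in\mathcal V_1$. Take such a product and expand each factor in a Kraus decomposition, $a_j=\sum_{i}c^{(j)}_{i}v_{i}$. Multilinearity then writes $a_p\dotsb a_1$ as a linear combination of the words $v_{i_p}\dotsb v_{i_1}$. These words span $\mathcal V_p$ by \eqref{eq:cVp}, so the product lies in $\mathcal V_p$.

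The second step applies this pointwise. Fix a non-zero $x\in\cc^d$; after normalizing, it represents some $\hat x\in\pcd$. \Cref{def:mPrim} gives a $p$ with $\mathcal V_1^p x=\cc^d$. By the inclusion,
$$\cc^d=\mathcal V_1^p x\subseteq \mathcal V_p x\subseteq \cc^d,$$
so $\mathcal V_p x=\cc^d$. Since $x$ was arbitrary, \Cref{thm:prim generated subspace} yields primitivity.

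Two small points need attention. First, \Cref{thm:prim generated subspace} allows $p$ to depend on $x$, while \Cref{def:prim} asks for a uniform $n$. This should already be absorbed into the equivalence of that theorem; if not, I would recover uniformity by a compactness argument. The set of $\hat x$ with $\mathcal V_p x=\cc^d$ is open, since it is defined by a rank condition. Also, $\mathcal V_p x=\cc^d$ implies $\mathcal V_{p+1}x=\cc^d$, because $\mathcal V_{p+1}x\supseteq \mathcal V_p\mathcal V_1 x$ and irreducibility can be used to propagate. Compactness of $\pcd$ then gives a single $p$. Second, the scaling invariance of the definition under $x\mapsto zx$ is trivial. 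The main (mild) obstacle is only ensuring that the quoted equivalence is stated with a pointwise $p$, which I will take from the appendix.
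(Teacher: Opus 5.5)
Your proof is the same as the paper's and is correct: the paper also derives the proposition from $\mathcal V_1^p\subseteq \mathcal V_p$ (stated as $\operatorname{linspan}\mathcal V_1^p=\mathcal V_p$) together with the pointwise characterization in \Cref{thm:prim generated subspace}. Your worry about a uniform $n$ is fair, since that appendix proof lets $n$ depend on $x$ without comment, and your compactness fallback works. However, monotonicity in $p$ is easier through the other ordering and needs no irreducibility: if $\mathcal V_p x=\cc^d$, then $\mathcal V_{p+1}x\supseteq\operatorname{linspan}\mathcal V_1(\mathcal V_p x)=\sum_i\range(v_i)\supseteq\mathcal V_p x=\cc^d$.
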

\begin{proof}
This is a direct consequence of $\operatorname{linspan}\mathcal V_1^p=\mathcal V_p$ (compare \Cref{eq:cVp}), using \Cref{thm:prim generated subspace}.
\end{proof}
\begin{proposition}
\label{prop:pos improve implies mPrim}
Assume $\Phi$ is positivity improving, that is, $\Phi(X)>0$ for any non-zero $X\geq 0$. Then, $\Phi$ 
is multiplicatively primitive as in~\Cref{def:mPrim} with $p=1$ for any $\hat x\in \pcd$.
\end{proposition}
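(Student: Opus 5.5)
The plan is to prove the contrapositive directly at the level $p=1$: since $\mathcal V_1^1 x=\mathcal V_1 x$, it suffices to show that $\mathcal V_1 x=\cc^d$ for every non-zero $x\in\cc^d$. Because $\mathcal V_1=\operatorname{linspan}\{v_1,\dotsc,v_k\}$, the set $\mathcal V_1x=\operatorname{linspan}\{v_1x,\dotsc,v_kx\}$ is a linear subspace of $\cc^d$. So if it is not all of $\cc^d$, it has a non-trivial orthogonal complement, and the strategy is to turn a vector in that complement into a witness against positivity improvement.

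Concretely, fix a Kraus decomposition $(v_i)_{i=1}^k$ of $\Phi$ and a unit vector $x$. Suppose that $\mathcal V_1x\neq\cc^d$, and pick a unit vector $y$ with $\langle y, v_ix\rangle=0$ for all $i$. Consider the non-zero positive semi-definite matrix $X=|y\rangle\langle y|$. Evaluating the quadratic form of $\Phi(X)$ at $x$ gives
$$\langle x|\Phi(|y\rangle\langle y|)|x\rangle=\sum_{i=1}^k\langle x|v_i^*|y\rangle\langle y|v_i|x\rangle=\sum_{i=1}^k|\langle y,v_ix\rangle|^2=0.$$
Hence $\Phi(X)$ is not positive definite, which contradicts the positivity improving assumption. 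Therefore $\mathcal V_1x=\cc^d$ for every $\hat x\in\pcd$, i.e.\ $\Phi$ is multiplicatively primitive with $p=1$ uniformly in $\hat x$.

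There is no real obstacle here. The only points to check are bookkeeping:
\begin{enumerate}[label=(\roman*)]
\item the statement is phrased for $\Phi$ acting on observables, $\Phi(X)=\sum_i v_i^*Xv_i$, so the quadratic form must be taken with $v_i^*$ on the left, as above;
\item $\mathcal V_1x$ is a genuine linear subspace, so a proper subspace has a non-zero orthogonal vector;
\item the result is independent of the chosen Kraus decomposition, which is already noted after \Cref{eq:cVp}.
\end{enumerate}
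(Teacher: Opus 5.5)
Your proof is correct and follows the same route as the paper: assume $\mathcal V_1x\neq\cc^d$, pick $y\perp\mathcal V_1x$, and show that a rank-one projector yields a vanishing quadratic form, which contradicts positivity improvement. Your bookkeeping is slightly cleaner than the paper's: it writes $\langle y,\Phi(|x\rangle\langle x|)y\rangle=0$, but with the convention $\Phi(X)=\sum_i v_i^*Xv_i$ the vanishing quantity is really $\langle y,\Phi^*(|x\rangle\langle x|)y\rangle=\langle x,\Phi(|y\rangle\langle y|)x\rangle$, which is exactly what you compute.
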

\begin{proof}
Let $\hat x\in \pcd$. Assume $\mathcal V_1x\neq \cc^d$. Then, since $\mathcal V_1x$ is a subspace of $\cc^d$, there exists some $\hat y\in \pcd$ such that $\mathcal V_1x\perp \cc y$. That implies $\langle y,\Phi(|x\rangle\langle x|)y\rangle=0$ with $|x\rangle\langle x|$ the orthogonal projector onto $\cc x$. So $\mathcal V_1x\neq \cc^d$ contradicts the fact that $\Phi$ is positivity improving. Therefore, $\mathcal V_1x= \cc^d$.
\end{proof}

\begin{proposition}\label{prop:mPrim dim 2}
For $d=2$, a quantum channel on $M_2(\cc)$ is multiplicatively primitive if and only if it is primitive.
\end{proposition}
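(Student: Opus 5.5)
The forward implication is \Cref{prop:mPrim to Prim}, so it remains to show that in $d=2$ primitivity implies multiplicative primitivity. The plan is to argue by contradiction using the characterization of primitivity in \Cref{thm:prim generated subspace}: $\Phi$ is primitive if and only if for every non-zero $x$ there is $p$ with $\mathcal V_p x=\cc^2$. The key observation is
$$\operatorname{linspan}\mathcal V_1^p x=\mathcal V_p x .$$
This holds because products $a_p\dotsb a_1$ with $a_j\in\mathcal V_1$ span $\mathcal V_p$. The idea is that in dimension $2$ the sets $\mathcal V_1^p x$ either reach $\cc^2$ or stay confined to a single line, and the latter is excluded by primitivity.

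Fix $\hat x\in\pc{2}$ and write the recursive structure
$$\mathcal V_1^{p}x=\bigcup_{y\in \mathcal V_1^{p-1}x}\mathcal V_1 y ,\qquad \mathcal V_1^0x=\cc x.$$
Each $\mathcal V_1 y$ is a linear subspace of $\cc^2$ (the image of the linear space $\mathcal V_1$ under $a\mapsto ay$). Also $\mathcal V_1(cy)=\mathcal V_1 y$ for $c\neq 0$, and $\mathcal V_1 0=\{0\}$.

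Suppose $\mathcal V_1^p x\neq\cc^2$ for every $p$. I will show by induction that each $\mathcal V_1^p x$ is contained in a line $\ell_p$, i.e.\ a subspace of dimension at most one. This holds for $p=0$. Assume $\mathcal V_1^{p-1}x\subseteq\ell_{p-1}=\cc y$. Every element of $\mathcal V_1^{p-1}x$ is then a multiple of $y$, so $\mathcal V_1^p x\subseteq \mathcal V_1 y$, which is a subspace. If $\mathcal V_1 y$ had dimension $2$, the same linearity argument would give $\mathcal V_1^p x=\mathcal V_1 y=\cc^2$. Indeed $\mathcal V_1^{p-1}x$ contains $y$ itself (or a non-zero multiple of it) as long as it is not $\{0\}$, and if it is $\{0\}$ we are already in the degenerate case. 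So $\mathcal V_1 y$ has dimension at most one, giving $\ell_p$.

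Consequently $\mathcal V_p x=\operatorname{linspan}\mathcal V_1^p x\subseteq\ell_p\neq\cc^2$ for all $p$, contradicting primitivity through \Cref{thm:prim generated subspace}. Hence for each $\hat x$ some $p$ satisfies $\mathcal V_1^p x=\cc^2$, which is multiplicative primitivity in the sense of \Cref{def:mPrim}.

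The only delicate point is the bookkeeping in the induction step. One has to check that if $\mathcal V_1^{p-1}x$ is a non-zero subset of the line $\cc y$, then $\mathcal V_1^p x$ equals the whole subspace $\mathcal V_1 y$, not merely a subset of it. This is where dimension $2$ is used: in $\cc^2$ the only options are $\mathcal V_1 y=\cc^2$, which finishes the proof, or $\mathcal V_1 y$ a line or $\{0\}$, which keeps us inside a line. In higher dimensions the union over a plane of subspaces $\mathcal V_1 y$ need not be a subspace, and the argument breaks down.
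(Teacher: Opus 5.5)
Your proof is correct and follows essentially the paper's argument: in $d=2$ the sets $\mathcal V_1^p x$ stay subspaces, so if they never fill $\cc^2$ they stay inside lines, and this contradicts primitivity. The paper states the contradiction as $\Phi^{*p}(|x\rangle\langle x|)$ being rank one for every $p$, whereas you go through $\operatorname{linspan}\mathcal V_1^p x=\mathcal V_p x$ and \Cref{thm:prim generated subspace}; these are equivalent, and your induction spells out the step the paper leaves implicit.
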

\begin{proof}
\Cref{prop:mPrim to Prim} proves the forward implication. It remains to prove that primitivity implies multiplicative primitivity. Assume there exists $\hat x\in \mathrm{P}(\cc^2)$ such that for any $p\in \nn$, $\mathcal V_1^px\neq \cc^2$. Then, there exists a sequence $(\hat y_p)_{p\in \nn} \subset \mathrm{P}(\cc^2)$ such that $\mathcal V_1^px=\cc y_p$. That implies in turn that for any $p\in \nn$, $\Phi^p(|x\rangle\langle x|)=|y_p\rangle\langle y_p|$. This contradicts $\Phi^n(|x\rangle\langle x|)>0$ for some $n\in\mathbb N$. Therefore, $\Phi$ is multiplicatively primitive.
\end{proof}
This proof relies on the fact that in dimension $2$, either $\mathcal V_1x=\cc^2$, or it is one-dimensional. In both cases, $\mathcal V_1^2x$ is a vector space. Already in dimension $3$, we lose this property as $\mathcal V_1x$ can be of dimension $2$ and $\mathcal V_1^2x$ may not be a vector space, but a continuous union of vector spaces.

We could not find any examples of primitive quantum channels on $M_3(\cc)$ that are not multiplicatively primitive. Thus, the equivalence between primitivity and multiplicative primitivity in arbitrary dimension remains an open question of interest.

\section{Proof of the main results}\label{sec:proofs}

\subsection{Purification: Proof of \Cref{thm:purification}}
\label{sec:proof pur}
The proof relies on the notion of informationally complete instruments. In the following $M_{k,\geq}(\cc)$ denotes the set of positive semi-definite $k\times k$ complex matrices.
	\begin{definition}[Informationally complete instruments]\label{def:IC}
		Let $(\Omega,\mathcal{F})$ be a measurable space and let $\mathcal{J}:\mathcal{F}\to\mathcal{L}(M_d(\mathbb{C}))$ be an instrument, \emph{i.e.}, $\mathcal{J}(A)$ is completely positive for every $A\in\mathcal{F}$, $\mathcal{J}$ is $\sigma$-additive and $\mathcal{J}(\Omega)$ is a quantum channel. We say that $\mathcal{J}$ is informationally complete if there exist $V:\mathbb{C}^d\to\mathbb{C}^d\otimes\mathbb{C}^k$, $\mu$ a measure over $(\Omega,\mathcal{F})$ and $M:\Omega\to M_{k,\geq}(\mathbb{C})$ an element of $L^1(\mu)$ such that
		\begin{align}\label{eq:IC POVM}
			\left\{\int_\Omega f(\omega) M(\omega) \d \mu(\omega): f:\Omega\to \mathbb{C} \textnormal{ bounded and measurable}\right\} = M_k(\mathbb{C})
		\end{align}
		and 
		\begin{align}\label{eq:Instrument dilation}
			\mathcal{J}(A): X \mapsto \int_A V^*(X\otimes M(\omega)) V \d \mu(\omega)
		\end{align}
	for all $A\in\mathcal{F}$.  With a small abuse of notation we denote
	$$\mathcal J(\omega):X\mapsto V^*(X\otimes M(\omega))V.$$
	\end{definition}
Without assuming \Cref{eq:IC POVM}, \cite[Theorem~7.11]{busch2016quantum} yields that a dilation as in \Cref{eq:Instrument dilation} exists for any instrument $\mathcal J$. This is a consequence of Stinespring's dilation theorem.

Then, \Cref{eq:IC POVM} is actually the condition for $M$ being an informationally complete positive operator-valued measure (IC-POVM). Equivalently, IC-POVMs are POVMs such that the map $\rho\mapsto \tr(\rho M(\omega))\d\mu(\omega)$ from the set of density matrices to probability measures over $\Omega$ is injective. Hence, knowing the distribution of outcomes of the measurement modeled by the POVM allows for the identification of the system state. So, an IC-POVM can be used to perform a complete tomography of the system state.

Considering that instruments can always be thought of as resulting from an indirect measurement via probes, see \cite[Theorem~7.14]{busch2016quantum}, informationally complete instruments correspond to indirect measurements allowing for a complete probe tomography. \Cref{lem:IC instrument space} shows that for informationally complete instruments, in contrast to the probe POVM $M$, the induced POVM on the system, $M_p(\omega_1,\dotsc,\omega_p):=\mathcal J(\omega_1)\circ\dotsb\circ\mathcal J(\omega_p)(\id_{\cc^d})$ with arbitrarily large $p$, is not always informationally complete. It is so when $\Phi=\mathcal J(\Omega)$ is primitive and $p$ is large enough, since $m$ in \Cref{lem:IC instrument space} is actually the period of $\Phi$ and $m=1$ for primitive maps.

The notion of informationally complete instruments has been used in \cite{BenCuneoQDB} to prove the equivalence between the vanishing of entropy production of repeated quantum measurements and a notion of quantum detailed balance.

\begin{lemma}
    \label{lem:IC implies pur}
    Assume $\Phi$ is irreducible and $\mathcal J$ is an informationally complete instrument, see \Cref{def:IC}, such that $\mathcal J(A)(X)=\int_A v(\omega)^*Xv(\omega)\d\mu(\omega)$ for some measurable $v:\Omega\to M_d(\cc)$. Then, $\mu$ purifies in the sense of \Cref{def:purification}.
\end{lemma}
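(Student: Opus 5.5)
Let $\pi$ be an orthogonal projector satisfying \Cref{eq:purification} for all $p$ and $\mu^{\otimes p}$-almost every $(\omega_1,\dotsc,\omega_p)$. The strategy is to integrate \Cref{eq:purification} against bounded test functions and use informational completeness. This shows that the compression $\pi X\pi$ is proportional to $\pi$ for every $X$ in a large space of matrices. Irreducibility of $\Phi$ then forces that space to be big enough to make $\pi$ of rank at most one.

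\textbf{Step 1.} First I rewrite \Cref{eq:purification} at the level of the instrument. For $p=1$ we have
$$v(\omega)^*Xv(\omega)=\mathcal J(\omega)(X)=V^*(X\otimes M(\omega))V.$$
Iterating gives $v(\omega_1)^*\dotsb v(\omega_p)^*v(\omega_p)\dotsb v(\omega_1)=\mathcal J(\omega_1)\circ\dotsb\circ\mathcal J(\omega_p)(\id_{\cc^d})$. Since $\|v(\omega_p)\dotsb v(\omega_1)\pi\|^2$ is the largest eigenvalue of $\pi(\cdot)\pi$ applied to this positive matrix, \Cref{eq:purification} says $\pi\,\mathcal J(\omega_1)\circ\dotsb\circ\mathcal J(\omega_p)(\id)\,\pi\in\cc\pi$ almost everywhere. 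Multiply by bounded measurable $f_1(\omega_1)\dotsb f_p(\omega_p)$ and integrate, using \Cref{eq:Instrument dilation}. By linearity we get
$$\pi\,\mathcal K_{B_1}\circ\dotsb\circ\mathcal K_{B_p}(\id)\,\pi\in\cc\pi,\qquad \mathcal K_B(X):=V^*(X\otimes B)V,$$
for all $B_1,\dotsc,B_p$ in the set \Cref{eq:IC POVM}. By informational completeness, this holds for all $B_j\in M_k(\cc)$.

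\textbf{Step 2.} Next I identify the span of the $\mathcal K_{B_1}\circ\dotsb\circ\mathcal K_{B_p}(\id)$. Write $V x=\sum_i v_i x\otimes e_i$ for a Kraus family $(v_i)$ of $\Phi$. With $B=|e_j\rangle\langle e_i|$ we get $\mathcal K_B(X)=v_i^*Xv_j$. Hence the span over all $B_j$ of $\mathcal K_{B_1}\circ\dotsb\circ\mathcal K_{B_p}(\id)$ is
$$\mathcal W_p:=\operatorname{linspan}\{a^*b: a,b\in\mathcal V_p\}.$$
This is the content of \Cref{lem:IC instrument space}, which I would invoke directly. So we obtain $\pi Y\pi\in\cc\pi$ for all $Y\in\mathcal W_p$ and all $p$. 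Since $\id\in\mathcal W_p$, we may also take sums over $p$, which replaces $\mathcal W_p$ by $\sum_p\mathcal W_p$.

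\textbf{Step 3.} This is the main obstacle: showing that irreducibility makes $\sum_p\mathcal W_p$ large enough. Suppose $\operatorname{rank}\pi\geq2$, and pick orthonormal $x,y\in\range\pi$. Then $\langle a x,b y\rangle=0$ and $\|ax\|=\|ay\|$ for all $a,b\in\mathcal V_p$ and all $p$. I would then argue as follows.
\begin{itemize}
\item The subspaces $E_x:=\sum_p\mathcal V_p x$ and $E_y:=\sum_p\mathcal V_p y$ are orthogonal. (I use $\sum_p$ here to avoid relying on \Cref{eq:cVp} being nested.)
\item Each of $E_x$ and $E_y$ is invariant under every $v_i$, since $v_i\mathcal V_p\subset\mathcal V_{p+1}$.
\item By \Cref{thm:irr invariant subsapce}, irreducibility means the only nonzero subspace invariant under all $v_i$ is $\cc^d$.
\end{itemize}
So if $x\in E_x$, i.e.\ $\id\in\operatorname{linspan}\bigcup_p\mathcal V_p$, then $E_x=E_y=\cc^d$ while being orthogonal, which is a contradiction. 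In general $\id$ may not lie in this span, so I handle that case separately. The subspace $E_x$ is nonzero: if $\mathcal V_1x=0$, then $\langle x,\Phi(\id)x\rangle=0$, contradicting $\Phi(\id)=\id$. Hence $E_x$ is a nonzero invariant subspace, so $E_x=\cc^d$, and likewise $E_y=\cc^d$. Two orthogonal subspaces both equal to $\cc^d$ is absurd. Therefore $\operatorname{rank}\pi\leq1$. The delicate points I would double-check are the measure-theoretic passage from almost-everywhere identities to integrated ones, and the exact matching with \Cref{lem:IC instrument space}.
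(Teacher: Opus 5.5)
Your Steps 1 and 2 are sound. Integrating \Cref{eq:purification} against product test functions and using informational completeness gives $\pi Y\pi\in\cc\pi$ for all $Y\in\mathcal W_p=\operatorname{linspan}\{a^*b: a,b\in\mathcal V_p\}$. However, this only covers $a$ and $b$ of the \emph{same} length $p$.

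The gap is the first bullet of Step 3. Orthogonality of $E_x=\sum_p\mathcal V_px$ and $E_y=\sum_p\mathcal V_py$ requires $\langle ax,by\rangle=0$ also for $a\in\mathcal V_p$, $b\in\mathcal V_q$ with $p\neq q$. No such mixed product $a^*b$ lies in $\sum_p\mathcal W_p$. Your (correct) invariance argument already shows $E_x=E_y=\cc^d$, so this orthogonality claim is just a restatement of the conclusion and carries the entire weight of the lemma. It does not follow from the same-length relations. Take $k=1$, $v=|e_1\rangle\langle e_2|+|e_2\rangle\langle e_1|$, $x=e_1$, $y=e_2$: then $\langle v^px,v^py\rangle=0$ and $\|v^px\|=\|v^py\|$ for all $p$, yet $\langle x,vy\rangle=1$. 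That channel is reducible, as it must be. But the example shows irreducibility would have to enter the orthogonality step itself, and your argument provides no mechanism for that.

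The underlying obstruction is periodicity. For an irreducible $\Phi$ of period $m>1$, the Kraus operators cyclically shift the blocks of the decomposition $\cc^d=\bigoplus_{i=1}^m E_i$. The $\mathcal V_p$ are then not nested, and mixed-length products are genuinely uncontrolled. For instance, with $v_1=|e_2\rangle\langle e_1|$, $v_2=|e_1\rangle\langle e_2|$ (irreducible, period $2$), $\mathcal V_p$ alternates between off-diagonal and diagonal matrices. Then $\sum_p\mathcal W_p$ is just the diagonal algebra, and $v_2^*(v_2v_1)=|e_2\rangle\langle e_1|$ lies outside it.

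The paper handles this with the full strength of \Cref{lem:IC instrument space}: for one fixed large $p$, the span equals $\bigoplus_{i=1}^m\mathcal L(E_i)$. This relies on the cyclic decomposition of \cite{Evans1978} and on the convergence of $\Phi^{mn}$. Working at that single length, one picks rank-one projectors $p_j\in\bigoplus_i\mathcal L(E_i)$ summing to $\id$. Some $\pi p_j\pi$ is nonzero, and $\pi p_j\pi\in\cc\pi$ then forces $\operatorname{rank}\pi=1$. You cite this lemma in Step 2, but you use only the easy identification of the span with $\mathcal W_p$, not the block structure, which is the part actually needed.

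For primitive $\Phi$ your idea can be rescued at a single length. \Cref{thm:prim generated subspace} gives $n$ with $\mathcal V_nz=\cc^d$ for all $z\neq0$, so $\mathcal V_nx\perp\mathcal V_ny$ would be absurd. The lemma, however, assumes only irreducibility.
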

\begin{proof}
Following \Cref{lem:IC instrument space}, there exist an orthogonal partition $\{E_i\}_{i=1}^m$ of $\cc^d$ and $p\in \nn$ such that
$$\left\{\int_{\Omega^p}f(\omega_1,\dotsc,\omega_p) \mathcal J(\omega_1)\circ\dotsb\circ\mathcal J(\omega_p)(\id_{\cc^d})\d\mu^{\otimes p}(\omega_1,\dotsc,\omega_p): f\in L^{\infty}(\mu^{\otimes p})\right\}=\bigoplus_{i=1}^m\mathcal L(E_i)$$
where for any finite dimensional linear space $E$, $\mathcal L(E)$ denotes the set of endomorphisms of $E$.
Then, purification is equivalent to: Any orthogonal projector $\pi$ such that 
\begin{equation} \label{eq:pur_blockdecomposition}
    \pi\left(\bigoplus_{i=1}^m\mathcal L(E_i)\right)\pi=\cc\pi
\end{equation}
is of rank at most $1$.

So let $\pi$ to be a nonzero orthogonal projector satisfying \Cref{eq:pur_blockdecomposition}. Let $\{p_1,\dotsc,p_d\}$ be a resolution of the identity by rank-$1$ orthogonal projectors such that $p_j\in \bigoplus_{i=1}^m\mathcal L(E_i)$ for any $j\in \{1,\dotsc,d\}$. Since the $p_j$ are positive semi-definite, and sum up to the identity, there exists $j\in \{1,\dotsc,d\}$ such that $\pi p_j\pi\neq 0$. The rank submultiplicativity implies that $\pi p_j \pi$ has rank $1$. Then, from \Cref{eq:pur_blockdecomposition}, $\pi p_j\pi=\|p_j\pi\|^2\pi$, so also $\pi$ has rank $1$. Hence, purification holds, and the lemma is proved.
\end{proof}

\begin{proof}[Proof of \Cref{thm:purification}]
We will apply \Cref{lem:IC implies pur}, which readily establishes the claimed purification assumption. For this, set
$\Omega=U(k)$ and $\mathcal{F}=\mathcal{B}(U(k))$, \emph{i.e.}, $\mathcal{F}$ is the Borel $\sigma$-algebra over the unitary group $U(k)$. Recall that $\Phi:X\mapsto \sum_{i=1}^k v_i^* X v_i$ for some 
$(v_i)_{i=1}^k \in M_d(\cc)^k$ and consider as an instrument $\mathcal J(u):X\mapsto v(u)^*Xv(u)$, where we recall $v(u) = \sum_{j=1}^k u_{1j} v_j$. By \Cref{def:IC}, we may apply \Cref{lem:IC implies pur} and therefore immediately conclude the proof, if we can find some $(M,V)$ such that \Cref{eq:IC POVM,eq:Instrument dilation} hold.\\
Let $(e_i)_{i=1}^k$ be the canonical orthonormal basis of $\cc^k$ and let $V:=\sum_{i=1}^k v_i\otimes e_i$. Then, setting $M:U(k)\to M_d(\cc)$, $M(u):=u^*|e_1\rangle\langle e_1|u$, we can write
$$\mathcal J(u):X\mapsto V^*(X\otimes M(u))V ,$$
and $\mathcal J (A) = \int_A \mathcal J(u) \d \mu(u)$ implies \Cref{eq:Instrument dilation}. So it remains to show that the POVM $M$ verifies \Cref{eq:IC POVM} whenever $\mu$ is non-singular.
Let $B:=\{\int_{U(k)} f(u)M(u)\d\mu(u) : f\in L^{\infty}(\mu)\}$.
By definition and continuity of $M$ in $u$,
$$B=\operatorname{linspan}\{u^*|e_1\rangle\langle e_1|u : u\in \supp \mu\}.$$
Assume $B\neq M_k(\cc)$. Then, there exists a nonzero $X\in M_k(\cc)$ such that $\tr(u^*|e_1\rangle\langle e_1|u X)=0$ for any $u\in \supp \mu$. Then, \cite[Proposition~1]{mityagin_zero_2020} applied to the polynomial $u\mapsto \tr(u^*|e_1\rangle\langle e_1|u X)$ implies $\mu_\unif(\supp \mu)=0$. That implies $\mu$ does not verify the non-singularity assumption of \Cref{def:non singular}. This contradiction yields the theorem.
\end{proof}

\subsection{Regularity: Proof of \Cref{thm:phi-irred,thm:absolute continuity}}\label{sec:proof irred}
We start with the proof of \Cref{thm:phi-irred}.
Recalling \Cref{def:phi irr} and \Cref{def:mPrim}, we must prove that for any $\hat x\in \pcd$ and $A \subseteq \pcd$ measurable, such that $\nu_\unif(A)>0$, we have
\begin{align}
    \label{eq:proof phi irr}
    \Pi^p(\hat x,A)>0 ,
\end{align}
with $p$ being the $\hat x$-dependent integer of \Cref{def:mPrim}. Since $\nu_\unif(A)>0$ only if $A$ has non-empty interior, it is sufficient to prove \Cref{eq:proof phi irr} for $A$ being a ball $B(\hat y,\varepsilon)$, with respect to the metric $d(\hat x, \hat y)$ defined above \Cref{thm:uniqueness invariant}, centered in an arbitrary $\hat y\in \pcd$ and with arbitrary radius $\varepsilon>0$.

Actually, \Cref{eq:proof phi irr} is the only statement that needs to be proved, since then $\nu_\inv\gg\nu_\unif$ is implied by the following proposition from \cite{MT}.
\begin{proposition}[{\cite[Proposition~10.1.2]{MT}}, Item (ii)] \label{prop:phi_irreducibility_ac}
Let $ (\hat{x}_n)_{n \in \nn} $ be a $ \varphi $-irreducible Markov chain with $ \varphi = \nu_\unif $ on $ \pcd $. Then $ \nu_\inv \gg \nu_\unif $.
\end{proposition}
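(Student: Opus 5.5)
The plan is the standard resolvent-kernel argument behind \cite[Proposition~10.1.2]{MT}, specialized to $\varphi=\nu_\unif$ on $\pcd$.

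First, make sure $\nu_\inv$ exists and is a probability measure. In the setting of \Cref{thm:phi-irred}, multiplicative primitivity implies primitivity by \Cref{prop:mPrim to Prim}, and primitivity implies irreducibility. Moreover $\mu\gg\mu_\unif$ gives $\mu_\unif(U(k)\setminus\supp\mu)=0$, so $\mu_\unif(\supp\mu)>0$ and $\mu$ is non-singular. Hence \Cref{thm:uniqueness invariant} provides a (unique) invariant probability measure $\nu_\inv$ with $\nu_\inv\Pi=\nu_\inv$. Even without uniqueness, existence alone would follow from compactness of $\pcd$ and the Feller property of $\Pi$. The argument below applies to any invariant probability measure.

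Next, introduce the resolvent kernel
$$K(\hat x,A):=\sum_{n\in\nn}2^{-n}\,\Pi^n(\hat x,A).$$
This is a Markov kernel because $\sum_{n\in\nn} 2^{-n}=1$. Iterating invariance gives $\nu_\inv\Pi^n=\nu_\inv$ for every $n$, and summing with weights $2^{-n}$ (monotone convergence) yields $\nu_\inv K=\nu_\inv$. The $\varphi$-irreducibility hypothesis of \Cref{def:phi irr} says that for a measurable $A$ with $\nu_\unif(A)>0$ and every $\hat x\in\pcd$, there is some $n$ with $\Pi^n(\hat x,A)>0$. Hence $K(\hat x,A)>0$ for every $\hat x$. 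Measurability of $\hat x\mapsto K(\hat x,A)$ is inherited from that of each $\Pi^n(\cdot,A)$.

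Finally, write
$$\nu_\inv(A)=\int_{\pcd}K(\hat x,A)\d\nu_\inv(\hat x).$$
The integrand is a strictly positive measurable function and $\nu_\inv$ is a nonzero (probability) measure, so the integral is strictly positive. Indeed, $\pcd=\bigcup_j\{K(\cdot,A)>1/j\}$, so one of these sets has positive $\nu_\inv$-mass. Therefore $\nu_\unif(A)>0$ implies $\nu_\inv(A)>0$, which is $\nu_\inv\gg\nu_\unif$.

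The only point needing care, which I would flag as the mild obstacle, is that $\varphi$-irreducibility must hold for every starting point $\hat x$, not just almost every one. Otherwise the integrand could vanish on a $\nu_\inv$-full set. \Cref{def:phi irr} as stated, and the proof of \Cref{eq:proof phi irr} for all $\hat x$, supply exactly this.
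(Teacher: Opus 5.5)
Your proof is correct and follows the same route as the paper. The paper gives no proof of its own here: it imports \cite[Proposition~10.1.2]{MT}, whose proof is exactly your resolvent-kernel argument, namely the invariance $\nu_\inv=\nu_\inv K$ with $K=\sum_{n\geq 1}2^{-n}\Pi^n$, combined with $K(\hat x,A)>0$ for every $\hat x$ whenever $\nu_\unif(A)>0$. You also prove the right direction, $\nu_\unif(A)>0\Rightarrow\nu_\inv(A)>0$, which is what $\nu_\inv\gg\nu_\unif$ means under the paper's convention, and your remark on the existence of $\nu_\inv$ matches how the paper uses this result in the proof of its $\varphi$-irreducibility theorem.
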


So let us now prove \Cref{eq:proof phi irr}. Fix $\hat x,\hat y\in \pcd$ and $\varepsilon>0$. Let $p\in \nn$ be the $\hat x$-dependent integer of \Cref{def:mPrim}. Let $\{v_i\}_{i=1}^k$ be a set of matrices such that $\Phi:X\mapsto \sum_{i=1}^kv_i^* X v_i$. 

The assumption of multiplicative primitivity (\emph{i.e.} \Cref{def:mPrim}) implies that there exist $\lambda>0$ and $(\phi_1,\dotsc,\phi_p)\in S(\cc^k)^{p}$ such that
$$y=\lambda\sum_{i_1,\dotsc, i_p=1}^k (\langle e_{i_1},\phi_1\rangle v_{i_1})\dotsb(\langle e_{i_p},\phi_p\rangle v_{i_p}) x.$$
Choosing $(u(1),\dotsc,u(p))\in U(k)^p$ such that $u(l)_{1i}=\langle e_{i},\phi_l\rangle$ for any $l\in \{1,\dotsc,p\}$ and $i\in\{1,\dotsc,k\}$,
$$\hat y=v(u(p))\dotsb v(u(1))\cdot \hat x.$$
Then,
\begin{align*}
    &\Pi^p(\hat x, B(\hat y,\varepsilon))=\int_{U(k)^p}\chi(d(v(w_p)\dotsb v(w_1)\cdot \hat x,\hat y)<\varepsilon)\|v(w_p)\dotsb v(w_1)x\|^2\d\mu^{\otimes p}(w_1,\dotsc,w_p)\\
            &=\int_{U(k)^p}\chi(d(v(w_p)\dotsb v(w_1)\cdot \hat x,v(u(p))\dotsb v(u(1))\cdot \hat x)<\varepsilon)\|v(w_p)\dotsb v(w_1)x\|^2 \d\mu^{\otimes p}(w_1,\dotsc,w_p) ,
\end{align*}
with $\chi(\mathsf{S})=1$ if the statement $\mathsf{S}$ is true, and $\chi(\mathsf{S})=0$ otherwise.

Since $v(u(p))\dotsb v(u(1))x\neq 0$, $(w_1,\dotsc,w_p)\mapsto v(w_p)\dotsb v(w_1)\cdot \hat x$ is continuous in an open neighborhood of $(u(1),\dotsc,u(p))$. Hence, there exists $\eta>0$ such that $(w_1,\dotsc,w_p)\in B(u(1),\eta)\times\dotsb\times B(u(p),\eta)$ implies $d(v(w_p)\dotsb v(w_1)\cdot \hat x,v(u(p))\dotsb v(u(1))\cdot \hat x)<\varepsilon$. So,
\begin{align*}
    \Pi^p(\hat x, B(\hat y,\varepsilon))\geq &\int_{B(u(1),\eta)\times\dotsb\times B(u(p),\eta)}\|v(w_p)\dotsb v(w_1)x\|^2\d\mu^{\otimes p}(w_1,\dotsc,w_p).
\end{align*}
Again, since $(w_1,\dotsc,w_p)\mapsto \|v(w_p)\dotsb v(w_1)x\|^2$ is continuous, for $\eta>0$ small enough, there exists $\delta>0$ such that
$$\inf_{(w_1,\dotsc,w_p)\in B(u(1),\eta)\times\dotsb\times B(u(p),\eta)} \|v(w_p)\dotsb v(w_1)x\|^2\geq \delta.$$
It follows that
$$\Pi^p(\hat x, B(\hat y,\varepsilon))\geq \delta\prod_{l=1}^p\mu(B(u(l),\eta)).$$
Since $\mu\gg \mu_\unif$, for any $l\in\{1,\dotsc,p\}$, $\mu(B(u(l),\eta))>0$ and \Cref{thm:phi-irred} is proved.\hfill\qed

\bigskip
We turn to the proof of \Cref{thm:absolute continuity}. The proof relies on the fact that when $\mu$-almost all of the matrices $v(u)$ are invertible, $\nu_\inv$ is either pure point, absolutely continuous with respect to $\nu_\unif$, or singular continuous. It cannot be a mixture of the three. A similar result for products of independent and identically distributed matrices can be found in \cite[Proposition~4.4, Section~VI.4]{BL85}. We follow essentially the proof in that reference. But, before, we prove the lemma justifying \Cref{rem:invertible everywhere}. It shows that when $\mu\ll\mu_\unif$, it is sufficient to check that $v(u)$ is invertible for one $u\in U(k)$.
\begin{lemma}
    \label{lem:invertible evrywhere}
    Assume $\mu\ll\mu_\unif$ and there exists $u\in U(k)$ such that $v(u)$ is invertible. Then, $v(u)$ is invertible for $\mu$-almost every $u$.
\end{lemma}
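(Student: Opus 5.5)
The map $u\mapsto \det v(u)$, with $v(u)=\sum_{j=1}^k u_{1j}v_j$, is a polynomial in the real and imaginary parts of the entries of $u\in U(k)$. It therefore restricts to a real-analytic function on the compact connected real-analytic manifold $U(k)$. The plan is to show that its zero set $Z:=\{u\in U(k):\det v(u)=0\}$ is $\mu_\unif$-null. Absolute continuity $\mu\ll\mu_\unif$ then gives $\mu(Z)=0$, which is exactly the claim that $v(u)$ is invertible for $\mu$-almost every $u$.

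The key step is the nullity of $Z$. By hypothesis there is $u_0\in U(k)$ with $\det v(u_0)\neq 0$, so $u\mapsto \det v(u)$ is a polynomial function on $U(k)$ that does not vanish identically. The most direct route is to invoke \cite[Proposition~1]{mityagin_zero_2020}, exactly as in the proof of \Cref{thm:purification}: the zero set of a real-analytic function (in particular a polynomial) on a connected real-analytic manifold that is not identically zero has measure zero with respect to any smooth volume, here the Haar measure. Since $Z\neq U(k)$, this gives $\mu_\unif(Z)=0$.

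If one prefers a self-contained argument, I would use the standard induction. Work in local analytic charts given by the exponential map around each point; $U(k)$ is covered by finitely many such charts. In each chart $\det v(\exp(\cdot)\,g)$ is real-analytic on an open subset of $\mathbb R^{k^2}$. Connectedness of $U(k)$ ensures it is not identically zero on any chart: if it vanished on an open set, analytic continuation would force it to vanish everywhere, contradicting $\det v(u_0)\neq0$. A non-zero real-analytic function on a connected open subset of $\mathbb R^n$ has a Lebesgue-null zero set, by Fubini and induction on $n$, using that in one variable the zeros are isolated.

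The only point requiring care is this connectedness and analytic-continuation step. It is what guarantees that invertibility at the single point $u_0$ propagates to invertibility almost everywhere. The remaining steps are immediate: $\mu_\unif(Z)=0$ together with $\mu\ll\mu_\unif$ yields $\mu(Z)=0$, which concludes the proof.
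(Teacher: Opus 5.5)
Your proposal is correct and follows the paper's proof. Both arguments use \cite[Proposition~1]{mityagin_zero_2020} to show that the zero set of the not-identically-vanishing polynomial $u\mapsto\det v(u)$ is $\mu_\unif$-null, then transfer this to $\mu$ via $\mu\ll\mu_\unif$. The paper phrases this as a contradiction, whereas your chart-and-connectedness remark spells out why Mityagin's Euclidean statement applies on the manifold $U(k)$.
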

\begin{proof}
    Assume there exists a measurable subset $A$ of $U(k)$ such that $\mu(A)>0$ and $v(u)$ is non-invertible for any $u\in A$. Assuming $v(u)$ is non-invertible for any $u\in A$ means that for any $u\in A$, $\det v(u)=0$. Since there exists $u\in U(k)$ such that $\det v(u)\neq 0$, \cite[Proposition~1]{mityagin_zero_2020} implies $\mu_\unif(A)=0$ and therefore $\mu(A)=0$ since $\mu\ll\mu_\unif$. This contradiction yields the theorem.
\end{proof}

We now turn to the proof of our version of \cite[Proposition~4.4, Section~VI.4]{BL85}.
\begin{lemma}
    \label{lem:nu_inv_regularity}
    Assume that for $\mu$-almost every $u$, $v(u)$ is invertible. Assume $\Pi$ accepts a unique invariant probability measure $\nu_\inv$. Then, one of the three following alternatives occur:
    \begin{itemize}
        \item $\nu_\inv$ is pure point,
        \item $\nu_\inv\ll\nu_\unif$,
        \item $\nu_\inv\perp\nu_\unif$ and has no pure point component. 
    \end{itemize}
\end{lemma}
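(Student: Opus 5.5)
The plan is the classical Lebesgue-decomposition argument: each component of $\nu_\inv$ is preserved by $\Pi$, so uniqueness forces $\nu_\inv$ to equal one of them. Write
$$\nu_\inv=\nu_{pp}+\nu_{ac}+\nu_{sc},$$
where $\nu_{pp}$ is the pure point part, $\nu_{ac}\ll\nu_\unif$, and $\nu_{sc}\perp\nu_\unif$ has no atoms. The key claim is that each class (pure point, absolutely continuous, singular continuous) is stable under $\nu\mapsto\nu\Pi$. Granting this, invariance of $\nu_\inv$ and uniqueness of the Lebesgue decomposition give $\nu_{pp}\Pi=\nu_{pp}$, $\nu_{ac}\Pi=\nu_{ac}$ and $\nu_{sc}\Pi=\nu_{sc}$. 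In particular the total masses are preserved, since $\Pi$ is Markov. Any nonzero component, once normalized, is then an invariant probability measure. By uniqueness it equals $\nu_\inv$, so exactly one component is nonzero, which is the stated trichotomy.

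To prove the stability claim, I write, for a finite measure $\nu$ on $\pcd$ and measurable $A$,
$$\nu\Pi(A)=\int_{U(k)}\int_{\pcd}\one_A(v(u)\cdot\hat x)\,\|v(u)x\|^2\d\nu(\hat x)\d\mu(u).$$
For $\mu$-almost every $u$ the map $g_u:\hat x\mapsto v(u)\cdot\hat x$ is a diffeomorphism of $\pcd$ (a projective linear transformation), because $v(u)$ is invertible. The weight $\|v(u)x\|^2$ is continuous and strictly positive. So $\nu\Pi=\int (g_u)_*(w_u\nu)\d\mu(u)$ with $w_u>0$, and I treat the three classes in turn.

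\textbf{Pure point.} If $\nu$ is pure point, each $(g_u)_*(w_u\nu)$ is pure point, supported on the countable set $g_u(\supp_{\rm atoms}\nu)$. However, the $\mu$-mixture of these may fail to be pure point, since the countable sets vary with $u$. To avoid this, I will not show that pure point measures map to pure point measures. Instead I will show the complementary statements: absolutely continuous measures map to absolutely continuous ones, and continuous (atomless) measures map to continuous ones.

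\textbf{Absolutely continuous.} A diffeomorphism maps $\nu_\unif$-null sets to $\nu_\unif$-null sets, so $(g_u)_*(w_u\nu)\ll\nu_\unif$ whenever $\nu\ll\nu_\unif$. Integrating over $u$ preserves this, since if $\nu_\unif(A)=0$ the integrand vanishes for almost every $u$.

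\textbf{Continuous.} Each $g_u$ is a bijection, so $(g_u)_*(w_u\nu)(\{\hat y\})=w_u(g_u^{-1}\hat y)\,\nu(\{g_u^{-1}\hat y\})=0$. Integrating over $u$ again yields an atomless measure.

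I then combine these. Since $\nu_{ac}\Pi\ll\nu_\unif$, and $\nu_{sc}\Pi$ and $\nu_{ac}\Pi$ are continuous, the pure point part of $\nu_\inv=\nu_\inv\Pi$ comes only from $\nu_{pp}\Pi$. Hence $\nu_{pp}\le\nu_{pp}\Pi$. Both sides have the same total mass, so $\nu_{pp}\Pi=\nu_{pp}$. Likewise the continuous part $\nu_{ac}+\nu_{sc}$ is invariant, because it equals $\nu_\inv-\nu_{pp}$. Within it, the absolutely continuous part of $(\nu_{ac}+\nu_{sc})\Pi$ is at least $\nu_{ac}\Pi$, and $\nu_{ac}\Pi$ has the same mass as $\nu_{ac}$. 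So $\nu_{ac}\Pi\le\nu_{ac}$ with equal mass, which gives equality, and then $\nu_{sc}\Pi=\nu_{sc}$ as well.

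The main obstacle I expect is the asymmetry already noted in the pure point case: pure point measures need not map to pure point ones. The mass-comparison trick above circumvents this, and I expect it to be the delicate point to write carefully. The invertibility assumption is used exactly in the continuous case, to ensure $g_u$ is injective and maps null sets to null sets.
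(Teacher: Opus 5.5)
Your proof is correct and follows essentially the same route as the paper. Both arguments use the Lebesgue decomposition and the invertibility of $v(u)$ to show that atomless measures stay atomless and absolutely continuous measures stay absolutely continuous under $\Pi$, then compare masses ($\nu_c\ge\nu_c\Pi$, then $\nu_{ac}\ge\nu_{ac}\Pi$) and conclude by uniqueness. You should drop your opening ``key claim'' that all three classes are $\Pi$-stable: neither the pure point nor the singular continuous class is stable in general, and your actual argument never uses this.
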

\begin{proof}
    Let $\nu_{ac}$ be the absolutely continuous part of $\nu_\inv$ with respect to $\nu_\unif$, $\nu_{sc}$ its singular continuous part and $\nu_{pp}$ its pure point part in Lebesgue's decomposition of Borel measures. Denote furthermore $\nu_c=\nu_{ac}+\nu_{sc}$. First, there is no $\hat y\in \pcd$ such that $\int_\pcd\Pi(\hat x,\{\hat y\})\d\nu_c(\hat x)>0$. Indeed, if it was the case, there would exist a measurable subset $A$ of $U(k)\times\pcd$ such that $(\mu\otimes\nu_c)(A)>0$ and for any $(u,\hat x)\in A$, $v(u)$ is invertible and $v(u)\cdot\hat x=\hat y$. That is equivalent to $\hat x=v(u)^{-1}\cdot\hat y$ for any $(u,\hat x)\in A$. Hence, $(\mu\otimes\nu_c)(A)\leq \int_{U(k)} \nu_c(\{\hat x: \hat x=v(u)^{-1}\cdot \hat y\})\d\mu(u)=0$ since $\{\hat x: \hat x=v(u)^{-1}\cdot \hat y\}$ is at most a singleton for $\mu$-almost every $u$. That contradicts the assumption $(\mu\otimes\nu_c)(A)>0$. Thus,
    $\nu_c\Pi$ has no pure point component. Then, with
    $$\nu_\inv=\nu_{pp}+\nu_c=\nu_{pp}\Pi+\nu_c\Pi$$
    we conclude $\nu_c\geq \nu_c\Pi$. Since $\Pi\one(\hat x)=1$, $\nu_c\Pi(\pcd)=\nu_c(\pcd)$ and it follows that $\nu_c=\nu_c\Pi$. By uniqueness of the invariant measure, either $\nu_c=0$ and therefore $\nu_\inv=\nu_{pp}$, or $\nu_c(\pcd)=1$ and $\nu_\inv=\nu_c$.
    
    It remains to show that for $\nu_c(\pcd)=1$, either $\nu_{\inv} = \nu_{ac}$ or $\nu_{\inv} = \nu_{sc}$. For any $v$ invertible, the measure defined by $\nu_v(A)=\int_\pcd \one_A(v\cdot\hat x)\|vx\|^2\d\nu_\unif(\hat x)$ is equivalent to $\nu_\unif$. Indeed, for any measurable subset $A$ of $\pcd$,
    $$\|v^{-1}\|^{-2}\int_\pcd\one_{A}(v\cdot \hat x)\d\nu_\unif(\hat x)\leq \nu_v(A)\leq \|v\|^2\int_\pcd\one_{A}(v\cdot \hat x)\d\nu_\unif(\hat x).$$
    Since the image measure of $\nu_\unif$ by $\hat x\mapsto v\cdot \hat x$ is equivalent to $\nu_\unif$ (see \cite[Exercise~5.4, Section~I.5]{BL85} for example), $\nu_v$ is equivalent to $\nu_\unif$. Since $\nu_\unif\Pi=\int_{U(k)}\nu_{v(u)}\d\mu(u)$ and $v(u)$ is invertible for $\mu$-almost every $u$, $\nu_\unif\Pi$ is equivalent to $\nu_\unif$. Hence, if $\nu_\unif(A)=0$, then $\nu_\unif\Pi(A)=0$ and $\nu_{ac}(A)=0$. Assume $\nu_{ac}\Pi(A)>0$. Then there exists a measurable $B\subset \pcd$ such that $\nu_{ac}(B)>0$ and $\forall \hat x \in B$, $\Pi(\hat x,A)>0$. The absolute continuity $\nu_{ac}\ll\nu_\unif$ implies $\nu_\unif(B)>0$. Then, $\Pi(\hat x,A)>0$ for any $\hat x\in B$ implies $\nu_\unif\Pi(A)>0$ which contradicts $\nu_\unif\Pi(A)=\nu_\unif(A)=0$. Therefore, $\nu_{ac}\Pi\ll\nu_\unif$, and
    $$\nu_c=\nu_{ac}+\nu_{sc}=\nu_{ac}\Pi+\nu_{sc}\Pi$$
    implies $\nu_{ac}\geq \nu_{ac}\Pi$. Thus, we infer as before that either $\nu_\inv=\nu_{ac}$ or $\nu_\inv=\nu_{sc}$. This concludes the proof of the lemma.
\end{proof}

\begin{proof}[Proof of \Cref{thm:absolute continuity}]
    Since $v(u)$ is invertible for $\mu$-almost any $u$, \Cref{lem:nu_inv_regularity} implies that either $\nu_\inv$ is pure point, absolutely continuous or singular continuous with respect to $\nu_{\unif}$. Since $\mu\gg\mu_\unif$, \Cref{thm:phi-irred} implies $\nu_\inv\gg \nu_\unif$, therefore $\nu_\inv$ cannot be pure point or singular continuous. Hence, $\nu_\inv\sim\nu_\unif$.
\end{proof}

\subsection{Symmetries: Proof of \Cref{thm:symmetries}}\label{sec:proof sym}
The proof of this theorem and the related corollary rely on the observation that when $\mu$ is uniform, then $\delta_{\hat x}\Pi$ is the GAP measure associated to $\Phi^*(|x\rangle\langle x|)$. We refer the reader to Appendix~\ref{app:GAP} and \cite{JRW94,GLTZ06,GLMTZ16,Tum20} for an introduction to GAP measures.

In the appendix and those references, the GAP measure is defined on $S(\cc^d)$ and not $\pcd$. Given a GAP measure on $S(\cc^d)$, one can always define the respective GAP measure on $\pcd$ as a pullback under the projection $S(\cc^d) \to \pcd$. Conversely, by Item~(c) in \Cref{prop:GAP}, GAP measures (on $S(\cc^d)$) are invariant under multiplication by a phase, so given a GAP measure on $\pcd$, one can uniquely reconstruct the respective GAP measure on $S(\cc^d)$. Thus, the notions of GAP measures on $S(\cc^d)$ and $\pcd$ are equivalent.

\begin{lemma}
    \label{lem:GAP}
    Assume $\mu$ = $\mu_\unif$. Then $\delta_{\hat x}\Pi$ is the GAP measure associated to $\Phi^*(|x\rangle\langle x|)$.
\end{lemma}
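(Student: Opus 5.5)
We first recall the definition of the GAP measure from Appendix~\ref{app:GAP}. For a density matrix $\rho=\sum_j p_j|\varphi_j\rangle\langle\varphi_j|$, $\mathrm{GAP}_\rho$ is obtained as follows. Take $Z\in\cc^d$ Gaussian with covariance $\ee|Z\rangle\langle Z|=\rho$, reweight its law by the density $\|Z\|^2$, and then normalize, \emph{i.e.}, push forward by $Z\mapsto Z/\|Z\|$. Equivalently, $\mathrm{GAP}_\rho$ is the image of the uniform measure on $S(\cc^D)$ under a suitable purification map, weighted by the squared norm. We will use the latter characterization, since the Haar measure on $U(k)$ produces the uniform measure on $S(\cc^k)$.

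The key computation is the following. With $\mu=\mu_\unif$ normalized so that $\mu(U(k))=k$, as in \eqref{eq:def mu}, the first row $\phi=(u_{11},\dotsc,u_{1k})$ of a Haar unitary $u$ is uniformly distributed on $S(\cc^k)$, and $v(u)x=\sum_i \phi_i v_i x$. Hence, for bounded measurable $f$,
$$\Pi f(\hat x)=k\int_{S(\cc^k)} f\Big(\widehat{\textstyle\sum_i\phi_i v_ix}\Big)\Big\|\sum_i\phi_iv_ix\Big\|^2\d\sigma(\phi),$$
where $\sigma$ is the uniform measure on $S(\cc^k)$. Next, we replace $\phi$ by a standard complex Gaussian vector $g\in\cc^k$ with $\ee|g_i|^2=1$, written $g=r\phi$ with $r=\|g\|$ independent of $\phi$. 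The integrand $f(\widehat{\cdot})$ is scale invariant and the weight is $2$-homogeneous, so $\ee_g[f(\widehat{Gx})\|Gx\|^2]=\ee[r^2]\,\ee_\phi[\cdots]=k\,\ee_\phi[\cdots]$, where $G=\sum_i g_iv_i$. Therefore $\Pi f(\hat x)=\ee_g[f(\widehat{Z})\|Z\|^2]$ with $Z=\sum_i g_iv_ix$.

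The vector $Z$ is a centered complex Gaussian in $\cc^d$ with covariance $\ee|Z\rangle\langle Z|=\sum_i v_i|x\rangle\langle x|v_i^*=\Phi^*(|x\rangle\langle x|)$. It is also circularly symmetric, since $g$ is. It remains to match this with the Gaussian-reweighting definition of $\mathrm{GAP}_\rho$ for $\rho=\Phi^*(|x\rangle\langle x|)$, which has trace $1$ because $\Phi$ is unital. The law of $Z$ depends only on its covariance, so it coincides with the Gaussian used in the definition of $\mathrm{GAP}_\rho$. The $\|Z\|^2$-reweighting and the projection to $\pcd$ then give exactly $\mathrm{GAP}_\rho$ viewed on $\pcd$, which is consistent by the phase invariance noted before the lemma.

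The main obstacle is bookkeeping rather than ideas. First, the normalization of $\mu_\unif$ must make the total mass $k$; this matches the factor $\ee\|g\|^2=k$, and we should check it against \eqref{eq:def mu} with $\lambda$ Haar. Second, the definition of GAP in the appendix may use the density-of-$\|\psi\|^2$ convention with a different Gaussian normalization, so the constant must be checked so that the result is a probability measure. Finally, if $\rho$ is rank-deficient, the Gaussian is degenerate, but the argument above never uses a density, so it goes through unchanged.
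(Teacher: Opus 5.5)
Your proof is correct, but it follows a different route from the paper.

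\textbf{The paper's route.} The paper does not work from the Gaussian construction. It sets $\Psi=Vx\in\cc^d\otimes\cc^k$ and identifies $(\id_{\cc^d}\otimes\langle u_j|)|Vx\rangle=v_j(U)x$ for the orthonormal basis given by the rows of $U$. It then uses that under Haar measure the $v_j(U)$ are identically distributed. This shows that the Haar average of the conditional-wave-function law $\mu_1^{\Psi,U}$ equals $\delta_{\hat x}\Pi$. Finally it quotes Lemma~1 of \cite{GLMTZ16}: that average is $\GAP$ of the reduced state $\tr_{\cc^k}|\Psi\rangle\langle\Psi|=\Phi^*(|x\rangle\langle x|)$.

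\textbf{Your route.} You pass from the first row of a Haar unitary, which is uniform on $S(\cc^k)$, to a standard complex Gaussian $g=\|g\|\phi$. This uses that $\hat x\mapsto f(\hat x)$ ignores the factor $\|g\|$ while the weight picks up $\|g\|^2$, with $\ee\|g\|^2=k$. You then recognize $Z=\sum_i g_i v_i x$ as the centered circular Gaussian with covariance $\Phi^*(|x\rangle\langle x|)$. In effect this is a self-contained proof of the cited lemma in the present setting.

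\textbf{Comparison.} Your approach needs nothing beyond the Gauss--adjust--project definition of Appendix~\ref{app:GAP}. It also makes the match between $\mu(U(k))=k$ and $\ee\|g\|^2=k$ explicit, and treats rank-deficient $\Phi^*(|x\rangle\langle x|)$ with no extra work. The paper's route is shorter given the citation. It also interprets $\delta_{\hat x}\Pi$ as an averaged conditional-wave-function distribution.

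\textbf{Your bookkeeping concerns.} These resolve as you expected. With $\lambda$ the Haar probability, each $\lambda_i$ in \eqref{eq:def mu} is again the Haar probability, so $\mu$ is $k$ times the Haar probability and $\int v(u)^*v(u)\d\mu(u)=\id_{\cc^d}$. The complex Gaussian in the GAP construction (as in \cite{GLTZ06}) is the circularly symmetric one. Your $Z$ has vanishing pseudo-covariance, so it matches.

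\textbf{Minor point.} Your opening paragraph says you will use the purification characterization of $\GAP$. The argument actually concludes, correctly, through the Gaussian one.
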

\begin{proof}
    This lemma is a consequence of \cite[Lemma~1]{GLMTZ16}: Let $v_1,\dots,v_k$ be Kraus operators of $\Phi$. We define $V:\mathbb{C}^d\to\mathbb{C}^d\otimes \mathbb{C}^k$ by
    \begin{align*}
        |Vx\rangle = \sum_j v_j |x\rangle \otimes |e_j\rangle,
    \end{align*}
    where  $(e_j)$ is the canonical basis of $\mathbb{C}^k$. Then,
    \begin{align*}
        \Phi^*(|x\rangle \langle x|) = \tr_{\cc^k} |Vx\rangle\langle Vx|,
    \end{align*}
    where $\tr_{\cc^k}$ is the partial trace over $\cc^k$ in $\mathbb{C}^d\otimes\mathbb{C}^k$. Let $\Psi := Vx \in S(\mathbb{C}^d\otimes\mathbb{C}^k)$ and $\{u_1,\dots,u_k\}$ be an orthonormal basis of $\cc^k$. Let $U$ be the unitary matrix whose $j^{\rm th}$ row is the entry-wise complex conjugate of $u_j$ for $j$ in $\{1,\dotsc,k\}$. Then, the conditional wave function $\psi^U$ is the random $\cc^d$-vector defined as
    \begin{align*}
        \psi^U = \frac{(\id_{\cc^d}\otimes\langle u_J|)|Vx\rangle}{\|(\id_{\cc^d}\otimes\langle u_J|)|Vx\rangle\|},
    \end{align*}
    where $J$ is random with distribution $\mathbb{P}(J=j) = \|(\id_{\cc^d}\otimes\langle u_J|)|Vx\rangle\|^2$. Direct computation leads to $(\id_{\cc^d}\otimes\langle u_j|)|Vx\rangle =v_j(U) x$ and therefore $\psi^U = \frac{v_J(U)x}{\|v_J(U)x\|}$. Let $\mu_1^{\Psi,U}$ denote the distribution of the conditional wave function $\psi^U$. Then, for any measurable set $A\subset S(\mathbb{C}^d)$, we find that
    \begin{align*}
    \mu_1^{\Psi,U}(A) = \sum_j \|v_j(U)x\|^2 \mathbbm{1}_A(v_j(U)\cdot \hat{x}).
    \end{align*}
    Now let $U$ be distributed according to the Haar probability measure $\lambda$ over $U(k)$. Then, the random variables $(v_j(U))_{j=1}^k$ are identically distributed and
    \begin{align*}
        \ee_\lambda(\mu_1^{\Psi,U}(A))= k\ee_\lambda(\|v(U)x\|^2 \mathbbm{1}_A(v(U)\cdot \hat{x})) = \delta_{\hat{x}}\Pi(A),
    \end{align*}
    where we used that $\ee_\lambda(v(U)^*X v(U))=\frac1k\Phi(X)$.
    From \cite[Lemma~1]{GLMTZ16} it follows that
    \begin{align*}
        \mathbb{E}_\lambda \mu_1^{\Psi,U} = \GAP_{\tr_{\cc^k} |\Psi\rangle\langle\Psi|} = \GAP_{\Phi^*(|x\rangle\langle x|)}
    \end{align*}
    and the claim is proved.
\end{proof}

The second and last lemma we require show the impact of the symmetry on the law of the Markov chain itself.
\begin{lemma}
    \label{lem:sym to measure}
    Assume $\mu=\mu_\unif$. Assume $U\in U(d)$ is such that for any $X\in \Ld$, $\Phi(U^*XU)=U^*\Phi(X)U$. Then, for any $\hat x\in \pcd$, $\delta_{U\cdot \hat x}\Pi$ is the image measure of $\delta_{\hat x}\Pi$ by the map $\hat y\mapsto U\cdot \hat y$.
\end{lemma}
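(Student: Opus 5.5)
The plan is to reduce the claim to an equivariance property of GAP measures, using \Cref{lem:GAP}. Since $\mu=\mu_\unif$, \Cref{lem:GAP} identifies $\delta_{U\cdot\hat x}\Pi$ with $\GAP_{\Phi^*(|Ux\rangle\langle Ux|)}$ and $\delta_{\hat x}\Pi$ with $\GAP_{\Phi^*(|x\rangle\langle x|)}$. It therefore suffices to establish two facts: first, that the covariance assumption transfers from $\Phi$ to its dual $\Phi^*$ in the form $\Phi^*(U\rho U^*)=U\Phi^*(\rho)U^*$; and second, that GAP measures are covariant under unitaries, i.e.\ the image of $\GAP_\rho$ under $\psi\mapsto U\psi$ is $\GAP_{U\rho U^*}$.

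For the first step, I would use the Hilbert--Schmidt duality $\tr(\Phi^*(\rho)X)=\tr(\rho\Phi(X))$. For any $X$,
$$\tr\big(\Phi^*(U\rho U^*)X\big)=\tr\big(\rho\, U^*\Phi(X)U\big)=\tr\big(\rho\,\Phi(U^*XU)\big)=\tr\big(U\Phi^*(\rho)U^*X\big),$$
which gives the identity. Taking $\rho=|x\rangle\langle x|$ yields $\Phi^*(|Ux\rangle\langle Ux|)=U\Phi^*(|x\rangle\langle x|)U^*$.

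For the second step, I would invoke the covariance property of GAP measures collected in \Cref{prop:GAP}. This is immediate from the construction: $\GAP_\rho$ is the image of the Gaussian measure with covariance $\rho$, reweighted by $\|\psi\|^2$, under normalization. The map $\psi\mapsto U\psi$ sends a Gaussian with covariance $\rho$ to one with covariance $U\rho U^*$, preserves norms, and commutes with normalization.

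Alternatively, the same conclusion can be reached directly from the definition of $\Pi$. Covariance implies that $(Uv_iU^*)_i$ is another Kraus decomposition of $\Phi$, so there is $w\in U(k)$ with $U v_i U^*=\sum_j w_{ij}v_j$, hence $v(u)U=Uv(uw)$ up to relabeling. The Haar invariance $u\mapsto uw$ then transports the integral defining $\Pi f(U\cdot\hat x)$ to $\Pi(f\circ U)(\hat x)$.

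The main obstacle is bookkeeping rather than substance. On the GAP route, one must make sure the GAP covariance property is available, either from \Cref{prop:GAP} or via the short Gaussian argument above, and carry it correctly through the passage from $S(\cc^d)$ to $\pcd$, which is harmless since $U$ commutes with phase multiplication. On the direct route, one must get the index placement in $v(u)$ right, which is exactly why the GAP route is the cleaner choice.
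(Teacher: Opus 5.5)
Your main GAP route is correct and is essentially the paper's proof: combine \Cref{lem:GAP} with the unitary covariance of $\rho\mapsto\GAP_\rho$ from \Cref{prop:GAP}, and with the dual symmetry $\Phi^*(U\rho U^*)=U\Phi^*(\rho)U^*$, which the paper uses without comment and which you justify by Hilbert--Schmidt duality. Your alternative direct route is also valid: $Uv_jU^*=\sum_l w_{jl}v_l$ gives $Uv(u)U^*=v(uw)$, and right-invariance of Haar measure then finishes the argument without any use of GAP measures.
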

\begin{proof}
    Property 2 of GAP measures in \cite{GLTZ06} implies that GAP of $U\Phi^*(|x\rangle\langle x|)U^*$ is equal to the image measure of GAP of $\Phi^*(|x\rangle\langle x|)$ by the map $\hat y\mapsto U\cdot\hat y$. Then the symmetry $\Phi^*(U|x\rangle\langle x|U^*)=U\Phi^*(|x\rangle\langle x|)U^*$ and \Cref{lem:GAP} yield the lemma.
\end{proof}
\begin{proof}[Proof of \Cref{thm:symmetries}]
    Since $\mu=\mu_\unif$, it is trivially non-singular and \Cref{thm:uniqueness invariant} yields that $\Pi$ accepts a unique invariant measure $\nu_\inv$. Let $\nu_U$ be the image measure of $\nu_\inv$ by $\hat x\mapsto U\cdot \hat x$. Then, $\nu_U\Pi=\int_\pcd \delta_{U\cdot\hat x}\Pi \d\nu_\inv(\hat x)$. Then, \Cref{lem:sym to measure} implies that $\nu_U\Pi$ is the image measure of
    $\nu_\inv\Pi$ by $\hat x\mapsto U\cdot \hat x$. Since $\nu_\inv\Pi=\nu_\inv$, we deduce that $\nu_U$ is $\Pi$-invariant and the uniqueness of the invariant measure yields the theorem.
\end{proof}

\section{Examples}\label{sec:xpl}

\subsection{Invariant measure non-equivalent to uniform}\label{sec:xpl non equivalent uniform}
Let us provide a counterexample to \Cref{thm:absolute continuity} where $\Phi$ is primitive, $\mu\gg \mu_\unif$, but $\mu(\{u:\det v(u)=0\})>0$. We show that for this example, $\nu_\inv\gg \nu_\unif$ as implied by \Cref{thm:phi-irred} but there exists a measurable subset $A$ of $\pcd$ such that $\nu_\inv(A)>0$ but $\nu_\unif(A)=0$. 

Let $d=k=2$, $v_1=|e_2\rangle \langle e_1|$ and $v_2=|e_+\rangle \langle e_2|$ with $\{e_1,e_2\}$ the canonical basis of $\cc^2$ and $\hat e_+$ the equivalence class of $e_1+e_2$. Now $\Phi^3(X)=\frac14X_{22}(\id_{\cc^2}+|e_1\rangle\langle e_1|) + X_{++} (\frac12 \id_{\cc^2}+\frac14|e_2\rangle\langle e_2|)$ with $X_{22}=\langle e_2,Xe_2\rangle$ and $X_{++}=\langle e_+,Xe_+\rangle$. Since $|e_2\rangle\langle e_2|+|e_+\rangle\langle e_+|$ is positive definite, we conclude $\Phi^3(X)>0$ for any $X\geq 0$ and by \Cref{prop:mPrim dim 2}, $\Phi$ is multiplicatively primitive. Let\footnote{At this point, we normalize $\mathrm{Haar}_{U(2)}(U(2)) = 1$.} $\lambda=\frac12{\rm Haar}_{U(2)}+\frac12 \delta_{\id_{\cc^2}}$ and define $\mu$ as in \Cref{eq:def mu}. Then, $\mu\gg \mu_\unif$ and for any probability measure $\nu$ over $\pc{2}$,
$$\nu\Pi(\{\hat e_2,\hat e_+\})\geq \int_{\pc{2}}  \frac12 (|\langle e_1,x\rangle|^2+|\langle e_2,x\rangle|^2) \d\nu(\hat x)=\frac12>0.$$
Thus, $\nu_\inv(\{\hat e_2,\hat e_+\})>0$, but $\nu_\unif(\{\hat e_2,\hat e_+\})=0$.

\subsection{Uniformly randomized quantum trajectories with explicit $\nu_{\inv}$}
In this section, we always consider $\mu=\mu_\unif$. 

As a simple example with an explicit $\nu_{\inv}$, consider the projection channel $\Phi(X)=\id_{\cc^d}\tr(\rho X)$ for some positive definite density matrix $\rho$. So $\Phi^*(\varrho) = \rho$ for any density matrix $\varrho$. Then $\Phi$ is trivially positivity improving and therefore multiplicatively primitive thanks to \Cref{prop:pos improve implies mPrim}. Moreover, \Cref{lem:GAP} implies $\nu_\inv=\GAP_\rho$ where $\GAP_\rho$ is the GAP measure for the density matrix $\rho$.\\
Another example with an explicit $\nu_{\inv}$ is the depolarizing channel $\Phi(X)=(1-p)X+p\id_{\cc^d}\frac1d\tr(X)$ for $p\in (0,1]$. Here, the symmetry group of $\Phi$ is the full unitary group and \Cref{cor:full symmetry} implies $\nu_\inv=\nu_\unif$.\\
However, for a generic multiplicatively primitive $\Phi$, and for $\mu=\mu_\unif$, computing a closed form for invariant measures does not seem to be an easy task. For instance, as soon as the maximally mixed state $\id_{\cc^d}/d$ in the depolarizing channel is replaced by an arbitrary positive definite density matrix, establishing a closed form for $\nu_{\inv}$ is nontrivial, as illustrated in the next section.

\subsection{Uniformly randomized in dimension $2$}
Let us now concentrate on $d=2$. In that case our different assumptions reduce to standard ones and the density of $\nu_\inv$ with respect to $\nu_\unif$ satisfies an integral kernel equation.
\begin{proposition}\label{prop:2D}
Assume $d=2$ and $\Phi$ is primitive, see \Cref{def:prim}. Let $\mu=\mu_\unif$. Then, $\nu_\inv\sim\nu_\unif$ and the density $f_\inv(\hat x)=\frac{\d\nu_\inv}{\d\nu_\unif}(\hat x)$ is a solution to 
\begin{align}\label{eq:fix point f}
f_\inv(\hat x)=\int_{\pc{2}} \frac{2}{\det\Phi^*(|x'\rangle\langle x'|)}\langle x,\Phi^*(|x'\rangle\langle x'|)^{-1}x\rangle^{-3}f_\inv(\hat x')\d\nu_\unif(\hat x').
\end{align}
\end{proposition}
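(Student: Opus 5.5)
The plan is to reduce everything to \Cref{thm:absolute continuity} plus the explicit GAP density from \cite[Equation~(18)]{GLTZ06}. First, by \Cref{prop:mPrim dim 2}, primitivity in $d=2$ gives multiplicative primitivity. Next, $\mu=\mu_\unif$ trivially satisfies $\mu\gg\mu_\unif$. So the only missing hypothesis of \Cref{thm:absolute continuity} is that $v(u)$ is invertible for $\mu$-almost every $u$. By \Cref{lem:invertible evrywhere}, since $\mu\ll\mu_\unif$, it suffices to exhibit a single $u\in U(k)$ with $\det v(u)\neq 0$. Once that is done, \Cref{thm:absolute continuity} yields $\nu_\inv\sim\nu_\unif$, and uniqueness of $\nu_\inv$ is already given by \Cref{thm:uniqueness invariant}.

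The main obstacle is the existence of an invertible element in $\mathcal V_1=\operatorname{linspan}\{v_i\}$, since $v(u)$ ranges over all unit-norm combinations $\sum_j u_{1j}v_j$, and invertibility is scale invariant. I would argue by contradiction. Suppose every element of $\mathcal V_1\subset M_2(\cc)$ is singular. Linear spaces of singular $2\times 2$ matrices are classified (a Dieudonné-type classification, which is elementary in dimension $2$): either all elements share a common nonzero kernel vector $z$, or all have ranges contained in a common line $\cc y$.

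\begin{itemize}
\item In the first case, $v_iz=0$ for all $i$, so $\Phi^*(|z\rangle\langle z|)=0$. This contradicts trace preservation.
\item In the second case, $\Phi^{*n}(\rho)$ is proportional to $|y\rangle\langle y|$ for every $n\geq 1$. This contradicts primitivity, which requires $\Phi^{*n}(\rho)>0$.
\end{itemize}

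To prove the classification I would take a singular $a\in\mathcal V_1$ of rank one, $a=|y\rangle\langle z'|$, and expand $\det(a+tb)=0$ for all $b\in\mathcal V_1$ and all $t$. The linear coefficient in $t$ then forces $b$ to map $\ker a$ into $\cc y$. The $t^2$ term gives $\det b=0$, and combining these yields one of the two alternatives.

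For the integral equation, invariance gives $\nu_\inv=\int\delta_{\hat x'}\Pi\,\d\nu_\inv(\hat x')$. By \Cref{lem:GAP}, $\delta_{\hat x'}\Pi=\GAP_{\rho'}$ with $\rho'=\Phi^*(|x'\rangle\langle x'|)$. Primitivity alone does not ensure $\rho'>0$ for every $\hat x'$, so I would show that $\rho'$ is invertible for $\nu_\unif$-almost every $\hat x'$ (hence $\nu_\inv$-a.e., by equivalence). This holds because $\det\Phi^*(|x'\rangle\langle x'|)$ is a real-analytic function on $\pc2$ that is not identically zero, being nonzero at any $\hat x'$ with $\mathcal V_1x'=\cc^2$; such $\hat x'$ exist because the second alternative above fails. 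For invertible $\rho'$, \cite[Equation~(18)]{GLTZ06} gives the density of $\GAP_{\rho'}$ with respect to $\nu_\unif$ in dimension $d=2$ as $\frac{d\,\cdot\,\text{const}}{\det\rho'}\langle x,\rho'^{-1}x\rangle^{-d-1}$. The constant for $d=2$ is the factor $2$, which I would check by normalization: for $\rho'=\id/2$ the expression must integrate to $1$. Finally, Fubini exchanges the integrals and gives \Cref{eq:fix point f} for $f_\inv$.
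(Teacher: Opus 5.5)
Your overall route is the paper's. You use \Cref{prop:mPrim dim 2}, then almost-sure invertibility of $v(u)$, then \Cref{thm:absolute continuity}, then almost-sure invertibility of $\Phi^*(|x'\rangle\langle x'|)$, and finally \Cref{lem:GAP} with the GAP density and Fubini. For the invertibility of $v(u)$ you combine \Cref{lem:invertible evrywhere} with the common-kernel/common-range classification of singular subspaces of $M_2(\cc)$. The paper instead proves \Cref{lem:2D inversible} by a diagonalizable/Jordan case split. The content is the same, and your version is, if anything, cleaner.

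There is one faulty step. You claim that some $\hat x'$ with $\mathcal V_1x'=\cc^2$ exists ``because the second alternative above fails''. That implication does not hold. Your classification was derived under the hypothesis that every element of $\mathcal V_1$ is singular, which you have just refuted, so it says nothing here. Moreover, $\mathcal V_1$ containing an invertible element and having no common range line does not prevent $\dim\mathcal V_1x'\le 1$ for every $x'$. For instance, take $\mathcal V_1=\cc U$ with $U$ unitary, i.e.\ $\Phi(X)=U^*XU$. No line contains the ranges of all elements, yet $\mathcal V_1x'=\cc\,Ux'$ for every $x'$, so $\Phi^*(|x'\rangle\langle x'|)$ is never invertible.

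What excludes this situation is primitivity, used directly as in the paper. Suppose $\Phi^*(|x'\rangle\langle x'|)$ were singular for every $\hat x'$. Being nonzero with trace one in $d=2$, it would then be a rank-one projector for every $\hat x'$. Iterating, ${\Phi^*}^n(|x\rangle\langle x|)$ would be rank one for all $n$, contradicting primitivity. Alternatively, you could rescue your idea by treating $\dim\mathcal V_1=1$ separately; that case is not primitive. For $\dim\mathcal V_1\ge 2$, local linear dependence forces every element to be singular, since $a^{-1}b$ would have every vector as an eigenvector. That brings you back to your kernel/range dichotomy, but the direct argument is shorter. With this repair the proof is complete.
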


This result is an application of \Cref{lem:GAP} and \cite[Equation~(18)]{GLTZ06} (see Equation~\eqref{eq:density of GAP}). It also requires $\nu_\inv\sim\nu_\unif$ which is assured by primitivity, \Cref{prop:mPrim dim 2}, \Cref{thm:absolute continuity} and \Cref{lem:2D inversible}. 

\begin{remark}
    If $\Phi$ is irreducible but not primitive (\emph{i.e.} its period is $2$), then there exists an orthonormal basis $\{x_1,x_2\}$ of $\cc^2$ such that $\Phi^*(|x_1\rangle\langle x_1|)=|x_2\rangle \langle x_2|$ and $\Phi^*(|x_2\rangle\langle x_2|)=|x_1\rangle \langle x_1|$. Thus, if $\Pi$ accepts a unique invariant probability measure, it is equal to $\frac12(\delta_{\hat x_1}+\delta_{\hat x_2})$.
\end{remark}
\begin{remark}
    With minor modifications to the proof of \Cref{prop:2D}, \Cref{eq:fix point f} can be generalized to arbitrary dimension $d$ using \cite[Equation~(18)]{GLTZ06}, provided $\det v(u)\neq 0$ for almost all $u\in U(k)$, so we can apply \Cref{thm:absolute continuity}. Thanks to \Cref{lem:invertible evrywhere}, it even suffices to have $\det v(u)\neq 0$ for only one single $u\in U(k)$.
\end{remark}

In $d=2$, the required invertibility of $v(u)$ is actually automatically true for irreducible $\Phi$.

\begin{lemma}
    \label{lem:2D inversible}
    Assume $d=2$ and $\Phi$ is irreducible, see \Cref{def:irr}. Then, $\det v(u)\neq 0$ for $\mu_\unif$-almost every $u\in U(k)$.
\end{lemma}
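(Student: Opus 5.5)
The map $u\mapsto \det v(u)=\det\bigl(\sum_{j=1}^k u_{1j}v_j\bigr)$ is a polynomial in the entries of $u$. By \cite[Proposition~1]{mityagin_zero_2020}, used exactly as in the proof of \Cref{lem:invertible evrywhere}, either this polynomial vanishes identically on $U(k)$ or its zero set has $\mu_\unif$-measure zero. So it suffices to show that $\det v(u)\neq 0$ for at least one $u\in U(k)$. Since the first row of a unitary can be any unit vector $\phi\in S(\cc^k)$, and $\det$ is homogeneous, this is equivalent to showing that the linear space $\mathcal V_1=\operatorname{linspan}\{v_1,\dotsc,v_k\}$ contains an invertible matrix.

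I would argue by contradiction. Assume every element of $\mathcal V_1$ is singular, i.e.\ $\mathcal V_1$ is a linear space of $2\times 2$ matrices of rank at most $1$. The classical description of such spaces gives two cases, which I would verify directly. If two elements $a=|y\rangle\langle z|$ and $b=|y'\rangle\langle z'|$ had both $\hat y\neq\hat y'$ and $\hat z\neq\hat z'$, then in the bases $(y,y')$ and $(\bar z,\bar z')$ (dual to $z,z'$) one has $\det(a+b)\neq0$. So either all elements share a common range line $\cc y$, or all share a common kernel line $z^\perp$.

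Both cases contradict irreducibility of $\Phi$, via \Cref{thm:irr invariant subsapce} or directly from \Cref{def:irr}.

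In the common-range case, take any $x$ with $x\perp y$. Then $\Phi^*(\rho)$ is a multiple of $|y\rangle\langle y|$ for every $\rho$, so $\langle x,\Phi^*(\rho)x\rangle=0$. Dually, $\Phi(|x\rangle\langle x|)=\sum_i v_i^*|x\rangle\langle x|v_i=0$, and therefore $(\id+\Phi)^n(|x\rangle\langle x|)=|x\rangle\langle x|$ is never positive definite.

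In the common-kernel case, every $v_i$ kills $w:=z^\perp$, a nonzero vector. Then $\Phi^*(|w\rangle\langle w|)=0$, so $\Phi^*$ is not trace preserving on this state. Equivalently $\langle w,\Phi(\id)w\rangle=0$, which contradicts the unitality $\Phi(\id)=\id$.

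The main obstacle is making the rank-one dichotomy clean, in particular handling the degenerate situation where $\mathcal V_1$ is one-dimensional. That situation falls under either case, so no separate treatment is needed. The rest is routine.
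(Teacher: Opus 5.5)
Your proof is correct and follows the paper's strategy. You reduce, by analyticity of $u\mapsto\det v(u)$ on $U(k)$, to exhibiting a single invertible element of $\mathcal V_1$, then show that a linear space of singular $2\times2$ matrices has either a common range (contradicting irreducibility) or a common kernel (contradicting $\Phi(\id_{\cc^2})=\id_{\cc^2}$). Your coordinate-free derivation of that dichotomy from the rank-one form $|y\rangle\langle z|$ is cleaner than the paper's split into diagonalizable and Jordan-block cases, and it also covers $\dim\mathcal V_1\le1$. The passage from your pairwise statement to a global one still needs a line: if $a,c\in\mathcal V_1$ have different ranges they share a kernel, and every other nonzero $b$ differs in range from $a$ or from $c$, hence has that same kernel.
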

\begin{proof}
    Assume $\det v(u) = 0$ within some non-null set with respect to $\mu_\unif$. Then, there is some open set $\mathcal{U} \subset U(k)$ with $\det v(u) = 0 \; \forall u \in \mathcal{U}$.
    Now, recall that $\det v(u)$ is a polynomial in the matrix entries of $u$, so $\det v(u) = 0 \; \forall u \in U(k)$. Fix $u_1, u_2\in U(k)$ such that $v(u_1)\not\propto v(u_2)$ and $v(u_1)$ and $v(u_2)$ are non zero.
    
    Consider first the case where $v(u_1)$ is diagonalizable. Then, for any $w\in \cc$, there exists $u_w\in U(k)$ such that $v(u_w)\propto wv(u_1)+v(u_2)$. Working in an eigenbasis of $v(u_1)$, we can write any $v(u_w)$, as a scalar multiple of
    $$X_w:= w v(u_1)+ v(u_2) = \begin{pmatrix} x+\alpha w & y\\z & t\end{pmatrix} \quad\mbox{where } v(u_2)=\begin{pmatrix} x & y\\z & t\end{pmatrix} \in M_2(\cc)$$
    where $\alpha$ is the non zero eigenvalue of $v(u_1)$.
    Since $v(u_w)$ is not invertible for any $w\in \cc$, $0 = \det X_w=(xt-yz)+w\alpha t$ for all $w\in \cc$, so $t=yz=0$. Since $u_2$ was arbitrary, for any $u \in U(k)$, we have the dichotomy:
    $$\text{either}\quad v(u)=\begin{pmatrix} a(u)& b(u)\\0&0\end{pmatrix}\quad \text{or}\quad v(u)=\begin{pmatrix} a(u)& 0\\c(u)&0\end{pmatrix},$$
    for some $a,b,c: U(k) \to \cc$. Moreover, either $b \equiv 0$ or $c \equiv 0$, since if we had $u_3, u_4 \in U(k)$ with $b(u_3)c(u_4)\neq 0$, then there would be some $u \in U(k)$ with $v(u)=\frac{1}{\sqrt{2}}(v(u_3)+v(u_4))$ not being in one of the above forms.
    
    Now, let $f$ be an non-zero element of $\ker v_1$.
    If $b \equiv 0$, then $v(u) f=0$ for any $u\in U(k)$, thus $\int_{U(k)}\|v(u)f\|^2\d\mu(u)=0$ which is incompatible with the condition $\int_{U(k)}v(u)^*v(u)\d\mu(u)=\id_{\cc^2}$. Thus, $b\not\equiv 0$ and $c \equiv 0$. Then, the image of $v(u_1)$ is a one-dimensional invariant subspace of $v(u)$ for any $u\in U(k)$. That contradicts the irreducibility of $\Phi$. Therefore $v(u)$ is invertible for $\mu$-almost every $u$.
   
    Finally, if $v(u_1)$ is not diagonalizable, working in a basis of its Jordan form,
    $$X_w=wv(u_1)+v(u_2)=\begin{pmatrix} x & y+w\\z & t\end{pmatrix},$$
    and $\det X_w=xt-yz-zw$. Hence, $\det X_w=0$ for any $w\in \cc$ if and only if $z=xt=0$. It follows that $v(u)$ is not invertible for any $u\in U(k)$ only if, for any $u\in U(k)$,
    $$v(u)=\begin{pmatrix} a(u)& b(u)\\0&c(u)\end{pmatrix}.$$
    That implies that for any $u\in U(k)$, the image of $v(u_1)$ is an invariant subspace of $v(u)$ for all $u \in U(k)$, which contradicts again the irreducibility of $\Phi$.
\end{proof}
\begin{proof}[Proof of \Cref{prop:2D}]
    Since $d=2$ and $\Phi$ is primitive, \Cref{prop:mPrim dim 2} implies that $\Phi$ is multiplicatively primitive. Then, \Cref{lem:2D inversible} and \Cref{thm:absolute continuity} imply $\nu_\inv\sim\nu_\unif$. Now assume $\Phi^*(|y\rangle\langle y|)$ is invertible for $\nu_\unif$-almost every $\hat y$. Then \Cref{lem:GAP} and \eqref{eq:density of GAP} yield the proposition.
    
    It remains to prove that $\Phi^*(|y\rangle\langle y|)$ is invertible for $\nu_\unif$-almost every $\hat y$. If $\Phi^*(|y\rangle\langle y|)$ is invertible for one $\hat y\in \pc{2}$, then $\det \Phi^*(|y\rangle\langle y|)\neq0$ and \cite[Proposition~1]{mityagin_zero_2020} imply that $\Phi^*(|y\rangle\langle y|)$ is invertible for $\nu_\unif$-almost every $\hat y$.
    
    So we must exclude that $\Phi^*(|y\rangle\langle y|)$ is non-invertible for all $\hat y\in \pc{2}$. If this was the case, then for any $\hat y\in \pc{2}$, there would exist a sequence $(\hat y_n)_{n\in \nn}$ in $\pc{2}$ such that ${\Phi^*}^n(|y\rangle \langle y|)=|y_n\rangle \langle y_n|$ for any $n\in \nn$. That contradicts the assumption that $\Phi$ is primitive.
\end{proof}

\begin{remark}
    Already for the simple channel $\Phi(X)=(1-p)X+p\id_{\cc^2}\tr(\rho X)$ with $p\in (0,1)$ and $\rho$ a positive definite density matrix distinct from $\id_{\cc^2}/2$, solving \Cref{eq:fix point f} for $f_\inv$ is non-trivial. 
    
    During the production of this article a result showing stability of the GAP measure under measurement appeared \cite{tumulka2026gap}. However, this stability concerns the state conditioned on the measurement outcome (\emph{i.e.} for a fixed predetermined outcome) and as mentioned in the same reference that does not imply invariance of the GAP measure under indirect measurement.
\end{remark}

\subsection{Examples in dimension $3$ satisfying multiplicative primitivity}\label{sec:xpl 3D}
Recall for a quantum channel $\Phi$ that ``positivity improving'' implies ``multiplicatively primitive'', which implies ``primitive''.
We provide two non-trivial examples for $d=3$ such that multiplicative primitivity holds, but $\Phi$ is not positivity-improving. The method of proof differs between the two examples, but always relies on algebraic independence. In the first example, $v(u)$ is $\mu$-almost always invertible, whereas, in the second one, $v(u)$ is not invertible for any $u$.
\subsubsection{An example with almost sure invertible $v(u)$} We provide a nontrivial example where all our assumptions are verified.
Let,
$$v_1=\frac{\sqrt{2}}{2}\begin{pmatrix}0&1&0\\ 1&0&1\\0&0&0\end{pmatrix}\quad\text{and}\quad v_2=\frac{\sqrt{2}}{2}\begin{pmatrix}0&0&0\\ 0&-1&0\\1&0&-1\end{pmatrix}.$$
\begin{proposition}\label{prop:xpl 1 3D}
The channel $\Phi(X) := v_1^* X v_1 + v_2^* X v_2$ is not positivity-improving, but multiplicatively primitive, see \Cref{def:mPrim}, and for $\mu_\unif$-almost any $u\in U(2)$, $\det v(u)\neq0$.
\end{proposition}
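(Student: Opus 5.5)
The plan is to make everything explicit by writing $v(u)=a v_1+b v_2$, where $(a,b)=(u_{11},u_{12})$ is the first row of $u$, so that $|a|^2+|b|^2=1$. Then
$$v(u)=\tfrac{\sqrt2}{2}\begin{pmatrix}0&a&0\\ a&-b&a\\ b&0&-b\end{pmatrix},\qquad \det v(u)=\big(\tfrac{\sqrt2}{2}\big)^3\,2a^2b .$$
So $\det v(u)=0$ exactly when $u_{11}u_{12}=0$. This set is the zero set of a nonzero polynomial in the entries of $u$, so by \cite[Proposition~1]{mityagin_zero_2020} it is $\mu_\unif$-null; this gives the invertibility claim. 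For the failure of positivity improvement, take $x=e_1-e_3$. Then $v_1x=0$ and $v_2x=\sqrt2\, e_3$, so $\mathcal V_1x=\cc e_3$. Hence $\Phi^*(|x\rangle\langle x|)=2|e_3\rangle\langle e_3|$ has rank one. (I will also double-check that $v_1^*v_1+v_2^*v_2=\id$, so that $\Phi$ is indeed a channel.)

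For multiplicative primitivity, note first that $k=2$, so $\dim\mathcal V_1x\le2$ and $p=1$ never suffices. The natural target is $p=2$ for generic $\hat x$, with a few extra steps for exceptional $\hat x$. Fix $x$ and consider the bihomogeneous map
$$F_x:\cc^2\times\cc^2\to\cc^3,\qquad F_x((a,b),(a',b'))=(a'v_1+b'v_2)(av_1+bv_2)x .$$
Its image is exactly $\mathcal V_1^2x$, and $F_x$ induces a rational map $\mathrm P(\cc^2)\times\mathrm P(\cc^2)\dashrightarrow\pc{3}$.

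Suppose $F_x$ has no base points, i.e.\ $F_x\neq0$ away from the axes $\{(a,b)=0\}\cup\{(a',b')=0\}$. Then the induced map is a morphism from a compact surface, so its image is closed. If moreover the Jacobian of the three coordinate polynomials has rank $2$ at some point (projectively), which amounts to the coordinates being algebraically independent modulo homogeneity, then the image is $2$-dimensional. A closed $2$-dimensional subvariety of $\pc{3}$ is all of $\pc{3}$, which gives $\mathcal V_1^2x=\cc^3$.

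The key steps are therefore:
\begin{enumerate}
\item Compute the coordinates of $F_x$ explicitly as bilinear forms in $(a,b)$ and $(a',b')$ with coefficients linear in $x$.
\item Determine the set of $x$ for which base points exist. These are the $x$ with $v(s)x=0$ for some $s$, such as $x\propto e_1-e_3$, or with $v(t)v(s)x=0$ for some $s,t$; this set is described by explicit polynomial conditions in $x$.
\item Check dominance (rank $2$ Jacobian) for all remaining $x$.
\end{enumerate}

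For the exceptional $x$ with $v(s)x=0$, the space $\mathcal V_1x=\cc y$ is one-dimensional. In that case $\mathcal V_1^{p+1}x=\mathcal V_1^py$, so it suffices to check that $y$ itself (e.g.\ $y=e_3$ for $x=e_1-e_3$) is non-exceptional, or becomes non-exceptional after one or two more steps. The remaining base-point cases are handled similarly, by showing that $\mathcal V_1^3x=\cc^3$ directly.

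I expect the main obstacle to be the uniform-in-$x$ verification of dominance together with the bookkeeping of the exceptional locus. For this I would compute the $2\times2$ minors of the Jacobian of $F_x$ symbolically, with $x$ as a parameter, in the spirit of Appendix~\ref{app:code}. I would then show that they cannot all vanish identically in $(a,b,a',b')$ unless $x$ lies in an explicit small algebraic set, and treat that set by hand using the pushing-forward argument above.
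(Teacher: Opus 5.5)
Your proposal is correct. For multiplicative primitivity it takes a genuinely different route from the paper. The invertibility part coincides with the paper's, and your witness $e_1-e_3\in\ker v_1$ is just the dual of the paper's $\Phi(|e_1\rangle\langle e_1|)=\frac12|e_2\rangle\langle e_2|$.

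The paper argues at the level of matrices. One computer-assisted Jacobian rank evaluation shows that the nine entries of a product of eight generic elements of $\mathcal V_1$ are algebraically independent. From this it concludes $\mathcal V_1^8=M_3(\cc)$, which gives $p=8$ uniformly in $\hat x$ with no case analysis. You argue vector by vector, with $p=2$ generically, and use base-point-freeness plus compactness of $\mathrm P^1\times\mathrm P^1$ to get a \emph{closed} image. This buys a proof that can be checked by hand, with small $p$. It also gives genuine surjectivity: algebraic independence by itself only yields a Zariski-dense image, so the paper's step ``hence $P(\cc^{16})=M_3(\cc)$'' needs an additional argument.

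The price is the exceptional locus, which is bigger than you anticipate (a curve, not a few points) but still easy to handle. You have $v(a,b)x=\frac{\sqrt2}{2}\big(ax_2,\ a(x_1+x_3)-bx_2,\ b(x_1-x_3)\big)$, and the only singular elements of $\mathcal V_1$ are multiples of $v_1$ and $v_2$, with kernels $\cc(e_1-e_3)$ and $\cc(e_1+e_3)$. Hence base points occur exactly for $\hat x$ on the two smooth conics $x_1^2+x_2^2-x_3^2=0$ and $x_1^2-x_2^2-x_3^2=0$.

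Off these conics, your Jacobian check is not even needed. Three bilinear forms with no common zero on $\mathrm P^1\times\mathrm P^1$ realize the projection of the Segre quadric from a point not on it, and that projection maps onto $\mathrm P^2$. On the conics, suppose $\hat x$ is not the class of $e_1\pm e_3$. Then $\mathcal V_1x$ is a plane, and its projectivization is a line, which cannot lie in two smooth conics. So some $y\in\mathcal V_1x$ is non-exceptional, and $\mathcal V_1^3x\supseteq\mathcal V_1^2y=\cc^3$. For $e_1\pm e_3$ you push forward to $e_3$ and $e_2$, which lie off both conics. So your plan closes with $p\le 3$ for every $\hat x$.
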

\begin{proof}
    One easily checks $\Phi(|e_1\rangle \langle e_1|) = \frac12|e_2 \rangle \langle e_2|$, which is not positive definite, so $\Phi$ is not positivity-improving.
    
    We will next show that multiplicative primitivity holds with $p=8$. Consider the $3 \times 3$ matrix of polynomials defined by $P(z_1,\dotsc,z_{16})=\prod_{i=0}^7(z_{2i+1} v_1+z_{2i+2} v_2)$. Let $J(z_1,\dotsc,z_{16})$ be the $9 \times 16$ Jacobian matrix of the entries of this matrix with respect to $(z_1,\dotsc,z_{16})$. Then, 
    $$J(1, 2, 3, 1, 2, 3, 1, 2, 3, 1, 2, 3, 1, 2, 3, 1)$$
    has rank $9$. This computation can be made in SageMath 10 using the code available in Appendix~\ref{app:code}. Then, it follows from \cite[Theorem~2.3]{ehrenborg1993apolarity}, that the polynomials defined by the entries of the matrix $P$ are algebraically independent. Hence, $P(\cc^{16})=M_3(\cc)$, which in particular implies multiplicative primitivity.
    
    It remains to prove that $\det v(u)\neq0$ for almost every $u \in U(k)$. This follows from $\det(av_1+bv_2)\neq0$ for any $a,b\in \cc$ such that $ab \neq 0$. Indeed,
    $$av_1+bv_2=\frac{\sqrt{2}}{2}\begin{pmatrix}0&a&0 \\ a&-b&a\\b&0&-b\end{pmatrix},$$
    so $\det(av_1+bv_2)=\frac{\sqrt{2}}{2}a^2b$ and the proposition is proved.
\end{proof}

\subsubsection{An example with non-invertible $v(u)$}\label{sec:xpl 3D non inv}
We conclude our list of examples with one that is multiplicatively primitive but no Kraus decomposition contains an invertible matrix. So \Cref{thm:absolute continuity} cannot be applied. For this, we consider the matrices
$$w_1=\begin{pmatrix} 1&1&0\\ -1&1&0\\0&0&0\end{pmatrix}\quad\mbox{and}\quad w_2=\begin{pmatrix}0&0&1\\ 0&0&1\\ 1&0&0\end{pmatrix},$$
and turn them into Kraus operators $v_1, v_2$ by a similarity transformation. To do so, we first show that
$$T:X\mapsto w_1^* X w_1 +w_2^* X w_2$$
is irreducible. Let us assume $T$ is not irreducible, so from \Cref{thm:irr invariant subsapce} there exists a proper subspace $E \subset \cc^3$ that is both $w_1$- and $w_2$-invariant.
If $e_3\not\perp E$, then $w_1w_2 E = \cc e_1$. Then, $w_1 e_1=e_1-e_2$ and $w_2 e_1=e_3$ imply $E=\cc^3$ and $E$ is not a proper subspace. Hence, $e_3\perp E$ and $E\subset \cc e_1+\cc e_2$. So in that case, $w_2 E=\cc e_3$ or $w_2 E=0$. In the first case, $w_2 E\not\subset E$. In the second case, $E\subset \ker w_2=\cc e_2$. It follows that $E=\cc e_2$, but $w_1 e_2=e_1+e_2$. Thus, there does not exist a proper subspace $E$ of $\cc^3$ such that $w_1E\subset E$ and $w_2E\subset E$.
We conclude that $T$ is irreducible and \cite[Theorem~2.3]{Evans1978} implies that there exist $r>0$ and $C \in M_3(\cc)$, $C>0$ such that $T(C)=\lambda C$. Then, setting
$$v_1=r^{-1/2}C^{1/2}w_1C^{-1/2}\quad \mbox{and}\quad v_2=r^{-1/2}C^{1/2}w_2C^{-1/2},$$
the map $\Phi:M_3(\cc)\to M_3(\cc)$ defined by
$$\Phi(X)=v_1^*X v_1+v_2^* X v_2$$
is a quantum channel.
\begin{proposition}
The channel $\Phi$ just defined is not positivity-improving, but multiplicatively primitive, see \Cref{def:mPrim}, and $\det v(u)=0$ for any $u\in U(2)$.
\end{proposition}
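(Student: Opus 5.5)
The proof has three parts, and all of them reduce to statements about the pair $(w_1,w_2)$, since conjugation by $C^{1/2}$ and rescaling by $r^{-1/2}$ preserve the relevant properties. Indeed, $v(u)=u_{11}v_1+u_{12}v_2 = r^{-1/2}C^{1/2}(u_{11}w_1+u_{12}w_2)C^{-1/2}$. Hence $\det v(u)=r^{-3/2}\det(u_{11}w_1+u_{12}w_2)$. Also $\mathcal V_1=C^{1/2}\operatorname{linspan}\{w_1,w_2\}C^{-1/2}$, so $\mathcal V_1^p x=\cc^3$ if and only if $\mathcal W^p C^{-1/2}x=\cc^3$ with $\mathcal W=\operatorname{linspan}\{w_1,w_2\}$. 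Multiplicative primitivity of $\Phi$ is therefore equivalent to the analogous property for $\mathcal W$, because $C^{-1/2}$ is a bijection of $\pc{3}$.

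\textbf{Non-invertibility.} I would compute $\det(aw_1+bw_2)$ directly for
$$aw_1+bw_2=\begin{pmatrix} a&a&b\\ -a&a&b\\ b&0&0\end{pmatrix}.$$
Expanding along the last row gives $b\,(ab-ab)=0$ for every $a,b\in\cc$, so $\det v(u)=0$ for every $u\in U(2)$.

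\textbf{Not positivity improving.} By \Cref{prop:pos improve implies mPrim}'s proof, positivity improving is equivalent to $\mathcal V_1x=\cc^3$ for every $x$. It therefore suffices to exhibit one $x$ with $\dim\mathcal V_1 x\le 2$. This is automatic, since $\mathcal V_1$ has dimension $2$, so $\mathcal V_1x=\operatorname{span}\{v_1x,v_2x\}$ has dimension at most $2<3$. Explicitly, $\langle y,\Phi(|x\rangle\langle x|)y\rangle=0$ for any $y\perp\mathcal V_1x$.

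\textbf{Multiplicative primitivity.} This is the main obstacle. I would mimic \Cref{prop:xpl 1 3D} and prove the stronger statement that for some $p$ the polynomial map
$$P(z_1,\dots,z_{2p})=\prod_{i=0}^{p-1}(z_{2i+1}w_1+z_{2i+2}w_2)$$
has algebraically independent entries. Jacobian rank $9$ at a numerical point, checked in SageMath with \cite[Theorem~2.3]{ehrenborg1993apolarity}, would give this. Then $P(\cc^{2p})$ is Zariski dense. However, the previous argument concluded $P(\cc^{2p})=M_3(\cc)$, and here every product is singular, since $\det P\equiv 0$. So the entries cannot all be algebraically independent, and the Jacobian rank is at most $8$.

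The plan must therefore change. Fix $x$ and study the map $z\mapsto P(z)x\in\cc^3$ instead. I would show that its image is all of $\cc^3$ for suitable $p$ independent of $x$. I would do this by showing that the image is a constructible cone whose closure is $\cc^3$, which follows from a Jacobian rank $3$ computation at a point depending polynomially on $x$, and then upgrading to surjectivity.

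For the upgrade I would use the structure of $\mathcal W$. Every nonzero element of $\mathcal W$ has rank $2$, with kernel and image depending explicitly on $(a,b)$. I would then argue in stages: $\mathcal W x$ is a $2$-plane for generic $x$, and $\mathcal W(\mathcal W x)$ covers $\cc^3$ because the images $\operatorname{range}(aw_1+bw_2)$ sweep out all of $\cc^3$. The degenerate $x$ would be handled separately, using the irreducibility argument above to escape invariant lines.

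The hardest step is this surjectivity for every $x$, not just generic $x$. I expect a case analysis on the finitely many exceptional directions, for instance $x\in\ker w_i$, where $w_2e_2=0$, and the direction $e_3$, along with a computer-assisted rank check as in Appendix~\ref{app:code}.
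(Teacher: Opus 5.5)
\textbf{The gap is multiplicative primitivity, which the proposal does not prove; the other two parts are fine.}

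Your determinant computation matches the paper's. Your argument that $\Phi$ is not positivity improving is a valid alternative to the paper's explicit $T(|e_3\rangle\langle e_3|)=|e_1\rangle\langle e_1|$. Since $\dim\mathcal V_1x\le 2<3$ for every $x$, the contrapositive of \Cref{prop:pos improve implies mPrim} applies. This shows more generally that Kraus rank $<d$ rules out positivity improving. Your ``explicit'' identity has $x$ and $y$ swapped, though: for $y\perp\mathcal V_1x$ it is $\langle x,\Phi(|y\rangle\langle y|)x\rangle=\sum_i|\langle y,v_ix\rangle|^2$ that vanishes.

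For multiplicative primitivity, the one concrete mechanism you propose is false. Put $A_{[a:b]}=aw_1+bw_2$. Then $\ker A_{[a:b]}=\cc(0,b,-a)$ and $\range A_{[a:b]}=\operatorname{linspan}\{(a,-a,b),(1,1,0)\}$. So each $t\notin\cc(1,1,0)$ lies in the range of \emph{exactly one} of these maps, namely $[a:b]=[\tfrac{t_1-t_2}{2}:t_3]$. ``The ranges sweep out $\cc^3$'' therefore gives $\mathcal W(\mathcal Wx)=\cc^3$ only if every $A_{[a:b]}$ is injective on $\mathcal Wx$. But the kernels fill the plane $K=\{t_1=0\}$, so the plane $\mathcal Wx$ contains at least one of them. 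For that $[a_0:b_0]$, only a line of $R_0:=\range A_{[a_0:b_0]}$ is reached. Concretely, $\mathcal W^2e_1=\{(db,\,db-2ac,\,da)\}$ does not contain $e_1$. The exceptional set is also not a finite set of directions: every nonzero $x$ with $x_1=0$ has $\mathcal Wx=\cc(1,1,0)$, and $\mathcal W(1,-1,0)=K$.

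\textbf{What the argument needs.} It needs a third factor, plus separate treatment of these degenerate orbits.
If $x_1\neq0$ and $x\notin\cc(1,-1,0)$, then $\mathcal Wx$ is a plane other than $K$, and the above gives $\mathcal W^2x\supseteq(\cc^3\setminus R_0)\cup\cc(1,1,0)$. Every fiber $A_{[a:b]}^{-1}(t)$ is an affine line with direction in $K$. So it either leaves $R_0$, or it lies in $R_0$ and then crosses $\cc(1,1,0)\subset R_0$, which is not parallel to $K$. This gives $\mathcal W^3x=\cc^3$. Since $(1,1,0)$ is in this case, $\mathcal W^4x=\cc^3$ for $x\in K$. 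Likewise $\mathcal W^5x=\cc^3$ for $x\in\cc(1,-1,0)$, where $\mathcal W^2x=\cc(1,1,0)$.

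\textbf{On the paper's own argument.} Your suspicion that Zariski density must be upgraded to surjectivity is well founded, and this is exactly where the paper is too quick. It passes from algebraic independence of the entries of $P_x$ straight to $P_x(\cc^8)=\cc^3$. For $x=(1,-1,0)$ one finds $P_x=(z_6z_8-2z_5z_7)\,(2z_1z_3+z_2z_4,\;z_2z_4-2z_1z_3,\;2z_2z_3)$. Its entries are algebraically independent, yet $e_1\notin P_x(\cc^8)$. So the paper's uniform $p=4$ fails at $C^{-1/2}x\propto(1,-1,0)$. The proposition itself still holds, because $p$ may depend on $x$, and $p=5$ works for all $x$.
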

\begin{proof}
    By direct computation, $T(|e_3\rangle \langle e_3|) = |e_1 \rangle \langle e_1|$. We conclude that also $\Phi(|C^{-1/2} e_3\rangle \langle C^{-1/2} e_3|)$ is of rank 1, so $\Phi$ is not positivity-improving.
    
    Next, we verify $\Phi$ is multiplicatively primitive with $p=4$.
    Fix $x\in \cc^3\setminus\{0\}$. Consider the $\cc^3$-vector of polynomials in $(z_1,\dotsc,z_8)$ defined by 
    $$P_x(z_1,\dotsc,z_8)=\left(\prod_{i=0}^3(z_{2i+1} w_1+ z_{2i+2}w_2)\right)x.$$
    To establish multiplicative primitivity, we must show that $P_x: \cc^8\to\cc^3$ is surjective.
    Let $J_x(z_1,\dotsc,z_8) \in M_3(\cc)$ be the Jacobian of $P_x$. Then, for $(z_1,\dotsc,z_8)\in \cc^8$ fixed, the determinant $D_x(z_1,\dotsc,z_8)$ of $(J_x(z_1,\dotsc,z_8)_{ij})_{i,j=1}^3$, is $0$ if and only if $x$ is an element of one of the following three sets
    $$\left\{ \left(a,b,-a\frac{z_2z_6}{z_2z_5+z_1z_6}+\frac{z_7}{z_8}\frac{(a-b)z_2z_5-(a+b)z_1z_6}{z_2z_5+z_1z_6}\right) : a,b\in \cc\right\},$$
    $$\left\{\left(a,b, -2a \frac{z_3z_6}{2z_3z_5+z_4z_6}+\frac{z_7}{z_8}\frac{2(a-b)z_3z_5-(a+b)z_4z_6}{2z_3z_5+z_4z_6}\right) : a,b\in \cc\right\},$$
    or
    $$\left\{\left(a,b, -\frac{a}{2}\frac{z_6}{z_5}-b\frac{z_7}{z_8}\right) : a,b\in \cc\right\}.$$
    These sets can be computed using the SageMath 10 code available in Appendix~\ref{app:code}.
    
    It follows that for any $x\in \cc^3\setminus\{0\}$, there exist $(z_1,\dotsc,z_8)\in \cc^8$ such that $D_x(z_1,\dotsc,z_8)\neq 0$. Then \cite[Theorem~2.3]{ehrenborg1993apolarity} implies that the $3$ entries of $P_x(z_1,\dotsc,z_8)$ are algebraically independent as polynomials in $(z_1,\dotsc,z_8)$. Hence, for any $\hat x\in \pc{3}$,
    $$P_x(\cc^8)=\cc^3.$$
    Since $\left(\prod_{i=0}^3(z_{2i+1} v_1+z_{2i+2} v_2)\right)x=r^{-2}C^{1/2}P((z_1,\dotsc,z_8)\times C^{-1/2}x)$ and $C^{1/2} > 0$, $\Phi$ is multiplicatively primitive with $p=4$ in \Cref{def:mPrim}.
    
    It remains to prove $\det v(u)=0$ for any $u\in U(2)$. By definition,
    $$v(u)=r^{-1/2}C^{1/2}(u_{11}w_1+u_{12}w_2)C^{-1/2}.$$
    So, it is equivalent to prove $\det(a w_1+bw_2)=0$ for any $a,b\in \cc$. Then,
    $$\det(a w_1+bw_2)=b\det\begin{pmatrix}a& b\\ a&b\end{pmatrix}=0$$
    yields the proposition.
\end{proof}
\begin{remark}
    Since $\det v(u)=0$ for any $u\in U(2)$, for any $p\in \nn$, $\mathcal V_1^p$ is a subset of the singular matrices. So, the set of polynomials in $(z_i)_{i=1}^{2p}$ defined by the entries of $\prod_{\ell=0}^{p-1} \left(z_{2\ell+1}v_1+z_{2\ell+2} v_2\right)$ cannot be algebraically independent. Indeed, $\det\left(\prod_{\ell=0}^{p-1} \left(z_{2\ell+1}v_1+z_{2\ell+2} v_2\right)\right)=0$. It follows that the method of proof used for \Cref{prop:xpl 1 3D} is inefficient for this example and multiplicative primitivity, as defined in \Cref{def:mPrim}, is not equivalent to $\mathcal V_1^p=M_d(\cc)$ for some $p\in\nn$.
\end{remark}

\appendix
\section{On the different notions of irreducibility}\label{app:irr}
In this section we gather some equivalent formulations of irreducibility as defined in \Cref{def:irr} and primitivity as defined in \Cref{def:prim}. These results are not new, but we did not find an appropriate published reference.
\begin{theorem}\label{thm:irr invariant subsapce}
    Let $\Phi:M_d(\cc)\to M_d(\cc)$ be completely positive with Kraus rank\footnote{We recall that the Kraus rank is the minimal $k\in \nn$ such that $\Phi$ can be written as $\Phi(X)=\sum_{i=1}^k v_i^*X v_i$ with $\{v_i\}_{i=1}^k$ a set of $d\times d$ matrices.} $r$. Let $k\in \nn$ be larger or equal to $r$.
    Let $\mu$ be a measure over $U(k)$ such that $\Phi:X\mapsto \int_{U(k)}v(u)^*Xv(u)\d\mu(u)$.
    Then, $\Phi$ is irreducible, as defined in \Cref{def:irr} or \cite[Lemma~2.1]{Evans1978}, if and only if any subspace $E$ of $\cc^d$ such that $v(u)E\subseteq E$ for $\mu$-almost every $u$ is either $\{0\}$ or $\cc^d$.
\end{theorem}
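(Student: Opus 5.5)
The plan is to reduce both directions to the finite Kraus family $(v_i)_{i=1}^k$. The reduction rests on the identity
$$\operatorname{linspan}\{v(u): u\in\supp\mu\}=\mathcal V_1=\operatorname{linspan}\{v_1,\dotsc,v_k\}.$$
The inclusion $\subseteq$ is immediate, since $v(u)=\sum_j u_{1j}v_j$. For the reverse inclusion, suppose a linear functional $\ell$ on $M_d(\cc)$ vanishes on $v(u)$ for $\mu$-a.e.\ $u$. Then $\int |\ell(v(u))|^2\d\mu(u)=0$. Choosing $\ell(Y)=\langle y,Yx\rangle$ gives $\langle y,\Phi^*(|x\rangle\langle x|)\,\cdot\rangle$-type identities, and more directly $\int v(u)^*Xv(u)\d\mu=\sum_i v_i^*Xv_i$ shows that the span of $\{v(u)\}$ (a.e.) carries the same completely positive map. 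Two families generating the same CP map have the same linear span of Kraus operators, by the Stinespring/Kraus uniqueness recalled in \Cref{sec:setup}, now applied with the measure $\mu$ in place of a finite sum. Concretely, if $\ell(v(u))=0$ a.e.\ with $\ell=\tr(L\,\cdot)$, then the map $\Phi'(X):=\sum_i v_i^* X v_i$ satisfies the same annihilation relation. Writing $\Phi$ through the Choi matrix $\sum_i|v_i\rangle\rangle\langle\langle v_i| = \int |v(u)\rangle\rangle\langle\langle v(u)|\d\mu$, the range of the left side is $\mathcal V_1$, and the range of the right side is contained in the closed span of the a.e.\ $v(u)$. This gives equality. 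Consequently, a subspace $E$ is invariant under $v(u)$ for $\mu$-a.e.\ $u$ if and only if $v_iE\subseteq E$ for all $i$.

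The statement is therefore reduced to the classical one: $\Phi$ is irreducible if and only if the $v_i$ have no common nontrivial invariant subspace.

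For the direction "reducible implies invariant subspace", suppose no $n$ works. Then, by compactness and the fact that the support of $(\id+\Phi)^n(X)$ increases with $n$ and stabilizes, there is a nonzero $X\geq0$ whose support projector $P$ (of rank below $d$) is stable, i.e.\ the support of $(\id+\Phi)(P)$ equals $\range P$. Here I use $\Phi$ in its Schrödinger form $\Phi^*$ on the positive cone; the Heisenberg version is handled by duality, since $E$ is invariant for all $v_i$ exactly when $E^\perp$ is invariant for all $v_i^*$. From $\Phi^*(P)=\sum_i v_iPv_i^*\leq cP$ I would deduce $v_i\range P\subseteq\range P$, giving a proper invariant subspace $E=\range P$.

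For the converse, if $E$ is a proper invariant subspace, let $P$ be its orthogonal projector. Invariance gives $v_i^*(\id-P)v_i\ldots$; more simply, $\Phi^*(P)$ has support in $E$ for all iterates, so $(\id+\Phi^*)^n(P)$ is never positive definite. Dualizing with the projector onto $E^\perp$, which is invariant for the $v_i^*$, yields the same conclusion for $\Phi$.

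The main obstacle I expect is the first reduction, namely making the a.e.\ span statement rigorous. I would handle it via the Choi-matrix range argument, using continuity of $u\mapsto v(u)$ so that "a.e." may be replaced by "on $\supp\mu$".
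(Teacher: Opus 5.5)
Your proof is correct, but it takes a different route from the paper's.

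\textbf{The paper's route.} The paper never passes to a finite Kraus family. For ``no invariant subspace $\Rightarrow$ irreducible'', it fixes $x$ and forms the cyclic subspace $E_x$ spanned by $x$ and the vectors $v(u_n)\dotsb v(u_1)x$ with $u_j\in\supp\mu$. This subspace is invariant under $v(u)$ for $u\in\supp\mu$, hence $\mu$-a.e., so $E_x=\cc^d$, and a dimension count restricts to words of length at most $d-1$. Positivity of $\langle x,(\id+\Phi)^{d-1}(|y\rangle\langle y|)x\rangle$ is then read off from the identity $\langle x,\Phi^p(|y\rangle\langle y|)x\rangle=\int|\langle y,v(u_p)\dotsb v(u_1)x\rangle|^2\d\mu^{\otimes p}$. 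The converse takes $y\perp E$ and $x\in E$ and sees this quantity vanish.

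\textbf{Your route.} You do two things differently.
\begin{enumerate}
\item[(i)] You first identify $\operatorname{linspan}\{v(u):u\in\supp\mu\}$ with $\mathcal V_1$ via the range of the Choi matrix. Note that the Kraus uniqueness recalled in Section~2 only concerns finite tuples, so it is the Choi-range argument, not that appeal, that carries this step. That argument works for any $\mu$-full set in place of $\supp\mu$, without continuity.
\item[(ii)] You then prove the finite statement by a support-projector argument. The supports of $(\id+\Phi^*)^n(X)$ form a nondecreasing chain, which must stall within $d$ steps; this plays the role of the paper's dimension count and yields the uniform $n$. Stalling produces a proper projector with $\Phi^*(P)\le cP$, whence $v_i\range P\subseteq\range P$. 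Either of your dualities (irreducibility of $\Phi$ versus $\Phi^*$, or $E$ versus $E^\perp$) correctly transfers this back to the Heisenberg form.
\end{enumerate}

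\textbf{What each buys.} The paper's argument is more economical and produces the explicit exponent $n=d-1$ directly from the measure. Yours isolates the structural fact that the a.e.-invariance condition depends only on $\mathcal V_1$, hence not on $\mu$. It also exhibits the obstruction to irreducibility as a reducing projection, in the spirit of \cite[Lemma~2.1]{Evans1978}.

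\textbf{Step (i) is dispensable.} The projector argument runs verbatim with the integral. Indeed, $\Phi^*(P)=\int v(u)Pv(u)^*\d\mu(u)\le cP$ gives $\int\|Pv(u)^*y\|^2\d\mu(u)=0$ for each $y\perp\range P$. Applying this to a basis of $(\range P)^\perp$ yields $v(u)\range P\subseteq\range P$ for $\mu$-a.e.\ $u$.
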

\begin{proof}
    While this result can be derived as a consequence of \cite[Lemma~2.1]{Evans1978}, we provide a direct proof.
    
    Assume that any non-zero subspace $E$ of $\cc^d$ with $v(u)E\subseteq E$ for $\mu$-almost every $u$ is such that $E=\cc^d$.  Fix $\hat x\in\pcd$ and let $E_{x}=\operatorname{linspan}\left(\{x\}\cup\{v(u_{n})\dotsb v(u_{1})x: n\in \nn, (u_{1},\dotsc,u_n)\in \supp\mu^n\}\right)$. By definition $E_x$ is nontrivial and invariant under $\mu$-almost every $v(u)$. Hence, $E_x=\cc^d$. Then, a dimensional argument leads to
    $$E_{x}=\operatorname{linspan}\left(\{x\}\cup\{v(u_{n})\dotsb v(u_{1})x: n\in \{1,\dotsc,d-1\}, (u_{1},\dotsc,u_n)\in \supp\mu^n\}\right)=\cc^d.$$
    It follows that, for any $\hat y\in \pcd$, there exist $u_1,\dotsc,u_p\in U(k)$ with $p\leq d-1$ such that either $\langle y,x\rangle\neq 0$ or $\langle y, v(u_p)\dotsb v(u_1) x\rangle \neq 0$. Then, by positivity and linearity,
    $$\langle x,(\Phi+\id_{M_d(\cc)})^{d-1}(|y\rangle \langle y|)x\rangle>0.$$
    Since $\hat x$ and $\hat y$ are arbitrary and $X\mapsto (\Phi+\id_{M_d(\cc)})^{d-1}(X)$ is linear, it follows that $\Phi$ is irreducible.
    
    For the reverse implication, assume $\Phi$ is irreducible and let $E$ be a nontrivial subspace of $\cc^d$ such that $v(u)E\subseteq E$ for $\mu$-almost every $u$, but $E\neq \cc^d$. Then, there exists $\hat y \in \pcd$ such that $y\perp E$. Since $E$ is $v(u)$-invariant, and by continuity of $u\mapsto v(u)$, for any $u_1,\dotsc, u_p\in \supp\mu$ with $p\leq d-1$, $y\perp v(u_p)\dotsb v(u_1) E=0$. It follows that $\langle x,(\Phi+\id_{M_d(\cc)})^{d-1}(|y\rangle \langle y|)x\rangle=0$ for any $x\in E$, which contradicts the irreducibility of $\Phi$.
\end{proof}

\begin{theorem}\label{thm:prim generated subspace}
    Let $\Phi$ be a quantum channel.
    Let $\mu$ be a measure over $U(k)$ such that $\Phi:X\mapsto \int_{U(k)}v(u)^*Xv(u)\d\mu(u)$.
    Then, $\Phi$ is primitive, as in \Cref{def:prim}, if and only if for any $\hat x\in \pcd$ there exists $n\in \nn$ such that
    $$\operatorname{linspan}\{v(u_{n})\dotsb v(u_{1})x: (u_{1},\dotsc,u_n)\in \supp\mu^n\}=\cc^d.$$
\end{theorem}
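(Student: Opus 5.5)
We prove the two implications separately, using the Kraus-form expression $\Phi^n(X)=\int_{U(k)^n} v(u_1)^*\dotsb v(u_n)^*X v(u_n)\dotsb v(u_1)\d\mu^{\otimes n}(u_1,\dotsc,u_n)$ and positivity, in the same spirit as the proof of \Cref{thm:irr invariant subsapce}. Throughout, $\Phi^{*n}(|x\rangle\langle x|)$ is positive definite if and only if, for every $\hat y\in\pcd$, $\langle y,\Phi^{*n}(|x\rangle\langle x|)y\rangle=\int|\langle y,v(u_n)\dotsb v(u_1)x\rangle|^2\d\mu^{\otimes n}>0$. By continuity of $u\mapsto v(u)$, this integral is positive if and only if some $(u_1,\dotsc,u_n)\in\supp\mu^n$ satisfies $\langle y,v(u_n)\dotsb v(u_1)x\rangle\neq0$. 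Hence, for fixed $x$ and $n$, $\Phi^{*n}(|x\rangle\langle x|)>0$ is equivalent to
$$F_n(x):=\operatorname{linspan}\{v(u_n)\dotsb v(u_1)x:(u_i)\in\supp\mu^n\}=\cc^d.$$
Duality, $\tr(Y\Phi^n(X))=\tr(\Phi^{*n}(Y)X)$, translates \Cref{def:prim} into the statement that $\Phi^{*n}(\rho)>0$ for all nonzero $\rho\geq0$. Since positive definiteness is preserved under sums with positive semi-definite terms, it suffices to check this on rank-one projectors.

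For the forward direction, primitivity gives a single $n$ with $\Phi^{*n}(|x\rangle\langle x|)>0$ for every $\hat x$, so $F_n(x)=\cc^d$ by the equivalence above.

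The converse is the main obstacle, because the $n$ depends on $\hat x$ while \Cref{def:prim} requires a uniform $n$. I would handle it in three steps.

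First, show that $F_n(x)=\cc^d$ implies $F_{n+1}(x)=\cc^d$. Note that $F_{n+1}(x)=\operatorname{linspan}\{v(u)z: u\in\supp\mu,\,z\in F_n(x)\}$, which equals $\mathcal V_1\cc^d$ once $F_n(x)=\cc^d$. This is $\cc^d$ because $\int v(u)^*v(u)\d\mu=\id$ forces the $v(u)$ to have no common kernel. Equivalently, $\Phi^*$ maps positive definite matrices to positive definite matrices, since $\Phi^*(\id)\geq c\,\Phi^*(\rho)$ and $\tr\Phi^*=\tr$. In either form, monotonicity in $n$ follows.

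Second, obtain uniformity by compactness. The set $O_n=\{\hat x:\Phi^{*n}(|x\rangle\langle x|)>0\}$ is open in $\pcd$, since the determinant is continuous in $x$. By the first step, the sets $O_n$ are increasing, and by hypothesis they cover $\pcd$. Compactness of $\pcd$ then yields a single $N$ with $O_N=\pcd$.

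Third, conclude. For any nonzero $X\geq0$, write $X=\sum_j\lambda_j|x_j\rangle\langle x_j|$ with $\lambda_j\geq0$ and some $\lambda_j>0$. Each term satisfies $\Phi^{*N}(|x_j\rangle\langle x_j|)>0$, so $\Phi^{*N}(X)>0$, and duality gives primitivity of $\Phi$.
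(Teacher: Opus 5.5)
Your strategy is right, and it is more careful than the paper's own proof. The paper proves only the pointwise equivalence between $\operatorname{linspan}\{v(u_n)\dotsb v(u_1)x\}=\cc^d$ and $\langle x,\Phi^n(|y\rangle\langle y|)x\rangle>0$ for all $\hat y$. In the converse it then concludes primitivity from ``$\hat x$ and $\hat y$ arbitrary'', without addressing that $n$ depends on $\hat x$. Your second step supplies exactly this missing uniformity: $O_n$ is open, the $O_n$ are increasing, and they cover the compact space $\pcd$. Your second and third steps are fine as written.

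\textbf{Gap in step one.} The justification you give for step one is wrong. From $\int v(u)^*v(u)\d\mu(u)=\id$ you get $\bigcap_u\ker v(u)=\{0\}$. What you need is $\operatorname{linspan}\bigcup_{u\in\supp\mu}\range v(u)=\cc^d$, i.e.\ $\bigcap_u\ker v(u)^*=\{0\}$, i.e.\ $\Phi^*(\id)>0$. This does not follow from unitality of $\Phi$ or trace preservation of $\Phi^*$.

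A counterexample: take $d=k=2$, $v_1=|e_1\rangle\langle e_1|$, $v_2=|e_1\rangle\langle e_2|$.
\begin{itemize}
\item Then $v_1^*v_1+v_2^*v_2=\id$ and $\ker v_1\cap\ker v_2=\{0\}$.
\item But $\Phi^*(\rho)=\tr(\rho)\,|e_1\rangle\langle e_1|$, so every $v(u)$ has range in $\cc e_1$ and $\Phi^*(\id)$ is singular.
\end{itemize}
The same example refutes your alternative claim that $\Phi^*$ maps positive definite matrices to positive definite ones. This channel of course violates the hypothesis of the converse, which is exactly why the hypothesis must enter here.

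\textbf{Repair.} Pick any $\hat x_0$ and $n_0$ with $F_{n_0}(x_0)=\cc^d$. Each generator $v(u_{n_0})\dotsb v(u_1)x_0$ lies in $\range v(u_{n_0})$ with $u_{n_0}\in\supp\mu$. Hence $\cc^d=F_{n_0}(x_0)\subseteq\operatorname{linspan}\bigcup_{u\in\supp\mu}\range v(u)$. Therefore $F_n(x)=\cc^d$ implies $F_{n+1}(x)=\operatorname{linspan}\bigcup_{u\in\supp\mu}\range v(u)=\cc^d$, which is the monotonicity you need.

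The detour through $\mathcal V_1\cc^d$ is unnecessary. It would also require its own argument that $\operatorname{linspan}\{v(u):u\in\supp\mu\}=\mathcal V_1$.

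With this repair your proof is complete, and it closes a point the paper's argument leaves open.
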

\begin{proof}
    For $\hat x\in\pcd$ and $n\in \nn$, let $E_{x,n}=\operatorname{linspan}\{v(u_{n})\dotsb v(u_{1})x: (u_{1},\dotsc,u_n)\in \supp\mu^n\}$.
    
    Assume $\Phi$ is primitive and let $n$ be the integer of \Cref{def:prim}. Assume $E_{x,n}\neq \cc^d$, so there exists $ \hat y\in \pcd$ such that $y\perp E_{x,n}$. It follows for any $(u_1,\dotsc,u_n)\in \supp\mu^n$, that $\langle y,v(u_n)\dotsb v(u_1)x\rangle=0$. Hence, 
    $$\langle x, \Phi^n(|y\rangle\langle y|) x\rangle=0,$$
    which contradicts the assumption that $\Phi$ is primitive. Therefore, $E_{x,n}=\cc^d$.
    
    Conversely, if $E_{x,n}=\cc^d$ then, for any $\hat y\in \pcd$, there exist $(u_1,\dotsc,u_n)\in \supp\mu^n$ such that $\langle y,v(u_n)\dotsb v(u_1)x\rangle\neq 0$. Thus,
    $$\langle x, \Phi^n(|y\rangle\langle y|) x\rangle>0.$$
    The linearity of $\Phi^n$ and the fact that $\hat x$ and $\hat y$ are arbitrary yield that $\Phi$ is primitive.
\end{proof}

\section{Space of operators induced by instruments}\label{app:instru}

\begin{lemma}\label{lem:IC instrument space}
	Let $\mathcal J$ be an informationally complete instrument, see \Cref{def:IC}, over the measurable space $(\Omega,\mathcal F)$ and assume $\Phi=\mathcal{J}(\Omega)$ is irreducible with period $m$. Then there exist $n\in \nn$ and a decomposition of $\cc^d$ into orthogonal subspaces $\{E_i\}_{i=1}^m$, such that for any $p\geq n$,
	$$\left\{\int_{\Omega^p}f(\omega_1,\dotsc,\omega_p)\mathcal J(\omega_1)\circ\dotsb\circ\mathcal J(\omega_p)(\id_{\cc^d})\d\mu^{\otimes p}(\omega_1,\dotsc,\omega_p): f \in L^{\infty}(\mu^{\otimes p})\right\}=\bigoplus_{i=1}^m\mathcal L(E_i),$$
	where we recall that $\mathcal L(E)$ is the set of endomorphisms of the finite dimensional linear space $E$.
\end{lemma}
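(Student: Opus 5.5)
The plan is to reduce the statement to a description of the product spaces $\mathcal V_p$ from \Cref{eq:cVp}, and then to use the cyclic structure of irreducible channels together with a quantum Wielandt-type theorem.

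\emph{Step 1: removing the POVM.} Write $V=\sum_i v_i\otimes e_i$, so that $\mathcal J(\omega)(X)=\sum_{i,j}M(\omega)_{ij}\,v_i^*Xv_j$. By \Cref{eq:IC POVM}, as $f$ ranges over bounded measurable functions, $\int f\,\mathcal J(\omega)(X)\d\mu$ ranges over $\{V^*(X\otimes A)V: A\in M_k(\cc)\}=\operatorname{linspan}\{v_i^*Xv_j\}$.

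For $p$ steps, I first restrict to product functions $f=f_1\otimes\dotsb\otimes f_p$. Iterating the previous observation shows that the left-hand side contains
$$\operatorname{linspan}\{w^*w' : w,w'\in\{v_{i_p}\dotsb v_{i_1}\}\}=\operatorname{linspan}(\mathcal V_p^*\mathcal V_p).$$
Conversely, every $\mathcal J(\omega_1)\circ\dotsb\circ\mathcal J(\omega_p)(\id)$ lies in this span. Since it is a finite-dimensional, hence closed, subspace, general $f\in L^\infty(\mu^{\otimes p})$ give nothing more. So the left-hand side equals $\operatorname{linspan}(\mathcal V_p^*\mathcal V_p)$, and the lemma reduces to a statement about $\mathcal V_p$ alone.

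\emph{Step 2: cyclic decomposition.} I use the Perron--Frobenius theory of \cite{Evans1978} for irreducible $\Phi$ of period $m$. It gives orthogonal projectors $P_1,\dotsc,P_m$ summing to $\id$, with $E_l=\range P_l$, such that $v_iP_l=P_{l+1}v_i$ (indices mod $m$) for every Kraus operator. Moreover, $\Phi^m$ restricted to each $\mathcal L(E_l)$ is primitive. Hence $\mathcal V_p\subseteq\bigoplus_l\mathcal L(E_l,E_{l+p})$, and consequently $\mathcal V_p^*\mathcal V_p\subseteq\bigoplus_l\mathcal L(E_l)$.

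\emph{Step 3 (main obstacle): equality $P_{l+p}\mathcal V_pP_l=\mathcal L(E_l,E_{l+p})$ for all $l$ and all large $p$.} The Kraus operators of the primitive block channel $\Phi^m|_{\mathcal L(E_l)}$ are the $P_l\,w\,P_l$ with $w$ a product of length $m$. By the quantum Wielandt theorem (a primitive channel has Kraus products of some fixed length $q$ spanning the full matrix algebra; compare \Cref{thm:prim generated subspace}), $P_l\mathcal V_{qm}P_l=\mathcal L(E_l)$ for some $q$ and all $l$. Products of full algebras stay full, so the same holds for all larger multiples of $m$.

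Now take $p\ge 2qm$ and write $p=qm+r+qm$ with $r\ge 0$. Using $\mathcal V_{a+b}=\operatorname{linspan}(\mathcal V_a\mathcal V_b)$ and the shift relations, I get
$$P_{l+p}\mathcal V_pP_l\supseteq \mathcal L(E_{l+p})\,\big(P_{l+p}\mathcal V_rP_l\big)\,\mathcal L(E_l).$$
The middle space is nonzero: otherwise $\mathcal V_r$ would kill $E_l$, which is impossible since $\Phi^r$ is unital, so $\sum v^*v=\id$ for words $v$ of length $r$. The linear span of $\mathcal L(E')\,S\,\mathcal L(E)$ for a nonzero subspace $S\subseteq\mathcal L(E,E')$ is all of $\mathcal L(E,E')$, because any nonzero $s$ sandwiched between rank-one maps yields every rank-one map. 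This gives the claimed equality.

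\emph{Step 4: conclusion.} For $p\ge n:=2qm$, the span $\operatorname{linspan}(\mathcal V_p^*\mathcal V_p)$ contains $B^*C$ for arbitrary $B,C\in\mathcal L(E_l,E_{l+p})$. Taking rank-one $B=|c\rangle\langle a|$ and $C=|c\rangle\langle b|$ with $a,b\in E_l$ produces $|a\rangle\langle b|$. Hence the span contains every $\mathcal L(E_l)$, and together with the inclusion from Step 2 it equals $\bigoplus_{l=1}^m\mathcal L(E_l)$. The points I expect to need the most care are citing the correct form of the quantum Wielandt theorem for the primitive blocks, and checking the commutation relations $v_iP_l=P_{l+1}v_i$ from the period structure in \cite{Evans1978}.
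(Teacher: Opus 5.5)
Your Step 1 is correct, and in fact sharper than what the paper records: the left-hand side is exactly $\operatorname{linspan}(\mathcal V_p^*\mathcal V_p)$. Steps 2 and 3 are also fine, provided $P_{l+p}\mathcal V_pP_l$ is read as what the block-wise Wielandt argument actually gives, namely the set of compressions $\{P_{l+p}wP_l : w\in\mathcal V_p\}$. The gap is in Step 4. To put $B^*C$ into $\operatorname{linspan}(\mathcal V_p^*\mathcal V_p)$ you need elements of $\mathcal V_p$ that \emph{equal} $B$ and $C$, i.e.\ that vanish on every $E_{l'}$ with $l'\neq l$. Surjectivity of the compressions does not give this. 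A $w\in\mathcal V_p$ with $P_{l+p}wP_l=B$ generally also has nonzero blocks $P_{l'+p}wP_{l'}$. Then $w^*w'=\sum_{l'}(P_{l'+p}wP_{l'})^*(P_{l'+p}w'P_{l'})$ is $B^*C$ plus uncontrolled terms in the other $\mathcal L(E_{l'})$.

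This is not a technicality. Everything you use after Step 2 also holds for a reducible channel: the shift relations, primitivity of each block of $\Phi^m$, and unitality. Take the single Kraus operator $v=|e_1\rangle\langle e_2|+|e_2\rangle\langle e_1|$ on $\cc^2$ with $E_1=\cc e_1$, $E_2=\cc e_2$. For larger blocks, take $v_i=v\otimes A_i$ with $(A_i)_i$ a primitive Kraus family on $\cc^D$. In the first case $\operatorname{linspan}(\mathcal V_p^*\mathcal V_p)=\cc\,\id$ for every $p$. In the second it is contained in $\id_{\cc^2}\otimes M_D(\cc)$. Neither contains $P_1$, although all compressions are full.

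So block-by-block primitivity can never separate the blocks. What is missing is a use of the irreducibility of $\Phi$ itself: each peripheral eigenvalue is simple, so the fixed points of ${\Phi^*}^m$ are only $\operatorname{span}\{\rho_1,\dotsc,\rho_m\}$. Note that an intertwiner $T:E_l\to E_{l'}$ between two blocks would produce the off-diagonal fixed point $T\rho_l$.

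The paper injects exactly this through the asymptotics from \cite{Evans1978}, $(\Phi^{mn}\otimes\id)(|\psi\rangle\langle\psi|)\to\sum_iP_i\otimes\rho_i$. These operators are supported in $\bigoplus_iE_i\otimes E_i$ and the limit has full rank there. Hence $\bigoplus_i\mathcal L(E_i)\subseteq\mathcal V_{mn}$ holds as a whole, not just blockwise, for $n$ large. With that in hand your Step 4 is immediate: $\id\in\mathcal V_{mn}$ and $X=\id^*X$. Monotonicity of $\operatorname{linspan}(\mathcal V_p^*\mathcal V_p)$ in $p$, which follows from unitality, then handles all larger $p$.
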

\begin{proof}
    Let 
    $$\mathcal B_{\mathcal J}^p := \left\{\int_{\Omega^p}f(\omega_1,\dotsc,\omega_p)\mathcal J(\omega_1)\circ\dotsb\circ\mathcal J(\omega_p)(\id_{\cc^d})\d\mu^{\otimes p}(\omega_1,\dotsc,\omega_p): f \in L^{\infty}(\mu^{\otimes p})\right\}.$$
    From \cite[Theorem~4.2]{Evans1978}, there exists an orthogonal resolution of the identity $\{P_i\}_{i=1}^m$ such that $\Phi(P_i)=P_{i-1}$ with $P_{0}=P_{m}$. Since $\Phi=\mathcal J(\Omega)$ and $\mathcal J$ is $\sigma$-additive, $\mathcal J(A)(X)\leq \Phi(X)$ for any $A\in \mathcal F$ and $X\geq 0$. Then, $\mathcal J(A)(P_i)\leq P_{i-1}$ for any $i\in \{1,\dotsc,m\}$ and $A\in \mathcal F$. It follows that setting $E_i=P_i\cc^d$ for $i\in \{1,\dotsc,m\}$ implies that for any $A\in \mathcal F$ and $i\in \{1,\dotsc,m\}$, $\mathcal J(A)(\id_{\cc^d})\in \bigoplus_{\ell=1}^m \mathcal L(E_\ell)$. Hence, for any $p\in \nn$,
    $$\mathcal B_{\mathcal J}^p\subset \bigoplus_{i=1}^m \mathcal L(E_i).$$

    For the reverse inclusion, first note that $\mathcal J(\Omega)(\id_{\cc^d})=\id_{\cc^d}$ implies $\mathcal B_{\mathcal J}^{p}\subset \mathcal B_{\mathcal J}^{p+1}$. Hence, if the reverse inclusion is proved for some $p=p_0$, it is true for any $p$ larger than $p_0$. Since $\Phi$ is irreducible, so is its trace dual $\Phi^*$ -- see \cite[p.~2]{Evans1978}. Therefore, $\Phi^*$ accepts a unique fixed point $\rho>0$, $\tr(\rho)=1$ -- see \cite[Theorem~2.4]{Evans1978}. Let $(e_\ell)_{\ell=1}^d$ be the canonical basis of $\cc^d$ and $(F_{k})_{k=1}^{d^2}$ be the canonical basis of $M_d(\cc)$. Let $\psi$ be the element of $\cc^d \otimes \cc^d$ defined by $\psi=e_1\otimes e_1+\dotsb+e_d\otimes e_d$. Then $|\psi\rangle\langle \psi|=\sum_{k=1}^{d^2}F_k \otimes F_k$. It follows from \cite[Theorem~4.2]{Evans1978} that
    $$\lim_{n\to\infty}(\Phi^{mn}\otimes \id_{M_d(\cc)})(|\psi\rangle\langle \psi|)=\sum_{k=1}^{d^2}\sum_{i=1}^m \tr(F_k^* \rho_i) P_i\otimes F_{k}=\sum_{i=1}^m P_i\otimes \rho_i ,$$    
    where $\rho_1,\dotsc,\rho_m$ are positive semi-definite trace one fixed points of ${\Phi^*}^m$ such that $\supp\rho_i=E_i$. Hence, there exist $n\in \nn$ and a constant $c>0$ such that 
    $$(\Phi^{mn}\otimes \id_{M_d(\cc)})(|\psi\rangle\langle \psi|)\geq c\sum_{i=1}^mP_i\otimes P_i.$$
    That implies $$\range\big(\sum_{i=1}^m P_i \otimes P_i\big)\subseteq \range\big((\Phi^{mn}\otimes \id_{M_d(\cc)})(|\psi\rangle\langle \psi|) \big).$$
    
    We introduce Kraus operators $(v_i)_{i=1}^k$ by taking $V:\cc^d\to\cc^d\otimes\cc^k$ and $M:\Omega\to M_{k,\geq}(\cc)$ as in \Cref{def:IC}, fixing an orthonormal basis $(f_i)_{i=1}^k$ of $\cc^k$ and defining $v_i$ such that $\langle x\otimes f_i,V y\rangle=\langle x,v_iy\rangle$ for any $x,y\in\cc^d$. Then, with the index notation
    $$\underline{i} := (i_1, \ldots, i_{mn}) \in \{1,\ldots,k\}^{mn} , \qquad v_{\underline{i}} := v_{i_1} \ldots v_{i_{mn}} ,$$
    we have
    \begin{align*}
    \bigoplus_{i=1}^mE_i\otimes E_i&=\range(\sum_{i=1}^m P_i\otimes P_i)\\
    &\subseteq\range\big((\Phi^{mn}\otimes \id_{M_d(\cc)})(|\psi\rangle\langle \psi|) \big) \\
    &\quad= \range \Bigg( \sum_{\underline{i}\in\{1,\dotsc,k\}^{mn}} (v_{\underline{i}} \otimes \id_{\cc^d})^* |\psi\rangle \langle\psi| (v_{\underline{i}} \otimes \id_{\cc^d}) \Bigg)\\
    &\qquad\subseteq \operatorname{linspan}\{ (v_{\underline{i}} \otimes \id_{\cc^d})^* \psi : \underline{i} \in \{1,\dotsc,k\}^{mn} \}.
    \end{align*}
    Now, $\langle \phi_1\otimes\overline{\phi_2}, (a\otimes \id_{\cc^d})\psi\rangle=\langle \phi_1,a\phi_2\rangle$ for any $a\in M_d(\cc)$, $\phi_1,\phi_2\in \cc^d$ where $\overline{\phi_2}$ is the complex conjugate of $\phi_2$ in the canonical basis. Thus,
    $$\bigoplus_{i=1}^m \mathcal L(E_i)\subseteq\operatorname{linspan}\{ v_{\underline{i}}^* : \underline{i} \in \{1,\dotsc,k\}^{mn} \}.$$
    Finally, we take the step from Kraus operators $v_{\underline{i}}^*$ to operators in $\mathcal B_{\mathcal J}^{mn} $: If for any $X \in \bigoplus_{i=1}^m \mathcal L(E_i)$, we can show that there exists some $g_X \in L^{\infty}(\mu^{\otimes mn})$ such that
    \begin{equation} \label{eq:X_rewrting_equation}
    X = \int_{\Omega^{mn}}g_X(\omega_1,\dotsc,\omega_{mn})\mathcal J(\omega_1)\circ\dotsb\circ\mathcal J(\omega_{mn})(\id_{\cc^d})\d\mu^{\otimes mn}(\omega_1,\dotsc,\omega_{mn}),
    \end{equation}
    then $\bigoplus_{i=1}^m \mathcal L(E_i) \subset \mathcal B_{\mathcal J}^{mn}$, and the proof is done.
    
    To construct $g_X$, we write $ X = \id_{\cc^d}^* X $ and define the coefficient vector $\phi_X\in (\cc^k)^{\otimes mn}$ (so $\phi_X(\underline{i}) \in \cc$ for $\underline{i} \in \{1,\ldots,k\}^{mn}$) such that $X=\sum_{\underline{i}}\overline{\phi_X(\underline{i})} v_{\underline{i}}$ and $\phi_{\id}\in (\cc^k)^{\otimes mn}$ such that $\id_{\cc^d}=\sum_{\underline{i}} \overline{\phi_\id(\underline{i})} v_{\underline{i}}$. 
    Now, note that the POVM $M$ being informationally complete (\emph{i.e.} satisfying \Cref{eq:IC POVM}) implies that for any $\ell \in \nn$,
    $$\left\{\int_{\Omega^\ell}g(\omega_1,\dotsc,\omega_\ell)M(\omega_1)\otimes\dotsb\otimes M(\omega_\ell)\d\mu^{\otimes \ell}(\omega_1,\dotsc,\omega_\ell): g\in L^\infty(\mu^{\otimes \ell})\right\}=M_k(\cc)^{\otimes \ell}.$$
    We can therefore find a function $g_X\in L^{\infty}(\mu^{\otimes mn})$ such that
    $$|\phi_\id\rangle\langle\phi_X|=\int_{\Omega^{mn}}g_X(\omega_1,\dotsc,\omega_{mn})M(\omega_1)\otimes\dotsb\otimes M(\omega_{mn})\d\mu^{\otimes mn}(\omega_1,\dotsc,\omega_{mn}).$$
    Then, using that for any $Y\in M_d(\cc)$,
    $$\mathcal J(\omega_1)\circ\dotsb\circ\mathcal J(\omega_{mn})(Y)=V_{mn}^*\left(Y\otimes M(\omega_1)\otimes\dotsb\otimes M(\omega_{mn})\right)V_{mn} ,$$
    with $V_{mn}:\cc^d\to\cc^d\otimes(\cc^k)^{\otimes mn}$ defined by
    $$V_{mn}x=\sum_{\underline{i}} (v_{\underline{i}} x)\otimes f_{\underline{i}} ,$$
    where $f_{\underline{i}} := f_{i_1}\otimes\dotsb\otimes f_{i_{mn}}$, the right-hand side of \Cref{eq:X_rewrting_equation} amounts to
    \begin{align*}
    &\int_{\Omega^{mn}}g_X(\omega_1,\dotsc,\omega_{mn})\mathcal J(\omega_1)\circ\dotsb\circ\mathcal J(\omega_{mn})(\id_{\cc^d})\d\mu^{\otimes mn}(\omega_1,\dotsc,\omega_{mn}) \\
    &= V_{mn}^* (\id_{\cc^d} \otimes |\phi_\id\rangle\langle\phi_X|) V_{mn}
    = \sum_{\underline{i}, \underline{j}} \phi_{\id}(\underline{j}) \overline{\phi_X(\underline{i})} v_{\underline{j}}^* v_{\underline{i}}
    = \id_{\cc^d}^* X
    = X ,
    \end{align*}
    as claimed. So indeed, $\bigoplus_{i=1}^m \mathcal L(E_i) \subset \mathcal B_{\mathcal J}^{p_0}$ with $p_0 = mn$.
\end{proof}

\section{GAP Measures}\label{app:GAP}

In this appendix we briefly discuss \textit{Gaussian adjusted projected (GAP) measures}, an important class of probability distributions on the sphere of a separable Hilbert space $\mathcal{H}$. Roughly speaking, for any density matrix $\rho$ on $\mathcal{H}$, in terms of information gained by a measurement, $\GAP_\rho$ is the most spread out distribution over the sphere $S(\mathcal{H})$ that has density matrix $\rho$. 

We motivate the use of GAP measures, give an explicit construction and alternative definitions (in finite dimensions) and state some crucial properties. For further details we refer the reader to \cite{GLTZ06,GLMTZ16} and \cite{PhDthesisVogel}.

\subsection{Motivation} GAP measures were first introduced by Jozsa, Robb, and Wootters~\cite{JRW94} in an information theoretic context. They considered the ``accessible information'' of an ensemble, a quantifier for the maximal amount of classical information that can be extracted from it, and showed that under the constraint that its density matrix is given by $\rho$, this information is minimized by $\GAP_\rho$. 
More precisely, let $\nu$ be a probability measure on $\pcd$ and $M$ a POVM on the space $\Omega$. The joint law of the system state and measurement outcome is $\d\mathfrak{p}_{\nu,M}(\hat x,\omega)=\langle x,M(\d \omega)x\rangle\d\nu(\hat x)$. Its marginal on $\pcd$ is $\nu$ and, on $\Omega$, it is $\mathbb Q_{\rho_\nu}(\d\omega)=\tr(M(\d\omega)\rho_\nu)$ with $\rho_\nu=\ee_\nu(|x\rangle\langle x|)$. Then the mutual information between $(\hat x,\omega)\mapsto \hat x$ and $(\hat x,\omega)\mapsto \omega$ with respect to $\mathfrak{p}_{\nu,M}$ is the Kullback-Leibler divergence (or relative entropy)
$$I(\mathfrak{p}_{\nu,M})=D_{KL}(\mathfrak{p}_{\nu,M}|\nu\otimes\mathbb Q_{\rho_\nu}).$$
In \cite{JRW94}, the authors show that $\GAP_\rho$ is the measure $\nu$ minimizing
$$\sup_{M}I(\mathfrak{p}_{\nu,M})$$
over the set of probability measures $\nu$ such that $\rho_\nu=\rho$. Namely,
$$\text{GAP}(\rho)=\operatornamewithlimits{argmin}_{\nu : \rho_\nu=\rho}\sup_{M}I(\mathfrak{p}_{\nu,M}).$$
They named the measure \textit{Scrooge measure}, referring to Ebenezer Scrooge, the very stingy protagonist of Charles Dicken's novella \textit{A Christmas Carol} (1843), as $\GAP_\rho$ is ``particularly stingy with its information''. 

A couple of years later, Goldstein, Lebowitz, Mastrodonato, Tumulka and Zanghì~\cite{GLTZ06,GLMTZ16} realized that GAP measures also naturally occur in quantum statistical mechanics: They showed that if $\rho$ is a canonical density matrix, \emph{i.e.}, if it is of the form
\begin{align*}
	\rho_{\mathrm{can}} = \frac{1}{Z} e^{-\beta H},
\end{align*}
where $H$ is the Hamiltonian, $\beta$ the inverse temperature and $Z$ a normalization constant, then $\GAP_\rho$ describes the thermal equilibrium distribution of the wave function and can be seen as a quantum analogue of the canonical ensemble of classical statistical mechanics. More precisely, they show that for bipartite systems, typically the wave function of the subsystem (in the sense of the \textit{conditional wave function} \cite{DGZ92, GN99, GLTZ06}) is $\GAP_{\rho_{\mathrm{can}}}$-distributed where $\rho_{\mathrm{can}}$ is a canonical density matrix for the subsystem. We remark that similar considerations can also be made for grand-canonical density matrices~\cite{ITV26}. 

\subsection{Construction} We now give the construction of GAP measures to which its acronym refers and for simplicity we restrict ourselves to finite-dimensional Hilbert spaces $\mathcal H$. Note that a similar construction is also possible for separable Hilbert spaces, see~\cite{Tum20}.

The starting point for the construction is a Gaussian measure, which we then adjust and finally project onto the sphere. 

Let $\rho$ be a density matrix on $\mathcal{H}$, and $\Psi^G$ a Gaussian complex random vector of zero mean and covariance $\rho$. Let $\mathrm{G}_\rho$ denote the probability measure on $\mathcal{H}$ describing the distribution of $\Psi^{\mathrm{G}}$.
As we require that $\GAP_\rho$ has expected density matrix $\rho$, \emph{i.e.}, that
\begin{align*}
	\rho_{\mathrm{GAP}_{\rho}} := \int_{S(\mathcal{H})} |\psi\rangle\langle\psi| \d\mathrm{GAP}_{\rho}(\psi) = \rho,
\end{align*}
we cannot simply project $G_\rho$ onto $S(\mathcal{H})$; this would result in the wrong density matrix. Thus, some adjustment is necessary, and we define the \textit{Gaussian adjusted measure} $\text{GA}_\rho$ by
\begin{align*}
	\d\mathrm{GA}_\rho(\psi) = \|\psi\|^2 \d G_\rho(\psi),
\end{align*}
i.e., we adjust the density with the factor $\|\psi\|^2$. Note that $\mathbb{E}\|\Psi^G\|^2 = \tr(\rho)=1$ ensures that $\text{GA}_\rho$ is also a probability measure on $\mathcal{H}$.

In the last step, we project the measure $\text{GA}_\rho$ to the sphere $S(\mathcal{H})$. Let $\Psi^{\mathrm{GA}}$ be a $\text{GA}_\rho$-distributed vector. We then define $\GAP_\rho$ as the distribution of the random vector
\begin{align*}
	\Psi^{\mathrm{GAP}} := \frac{\Psi^{\mathrm{GA}}}{\|\Psi^{\mathrm{GA}}\|}.
\end{align*}
We remark that as $\Psi^{\mathrm{GA}}$ is continuously distributed, $\|\Psi^{\mathrm{GA}}\|\neq 0$ with probability 1 and thus $\Psi^{\mathrm{GAP}}$ is well-defined. Moreover, by definition $\GAP_\rho$ satisfies $\rho_{\mathrm{GAP}_\rho}=\rho$.

\subsection{Alternative definitions} In finite dimension, there are different ways to define GAP measures. We discuss two that are relevant to us. The first one is defining $\GAP_{\rho}$ by its density with respect to the uniform measure. More precisely, suppose that $\rho$ has finite rank $r=\dim\mathrm{supp}(\rho)$. Then the density of $\GAP_{\rho}$ relative to the uniform probability measure $u$ on $S(\mathrm{supp}(\rho))$ is given by~\cite[Equation (18)]{GLTZ06}:
\begin{align}\label{eq:density of GAP}
\frac{\d \GAP_{\rho}}{\d u}(\psi) = \frac{r}{\det\rho_+} \langle\psi|\rho_+^{-1}|\psi\rangle^{-r-1},
\end{align}
where $\rho_+$ denotes the restriction of $\rho$ to its support $\mathrm{supp}(\rho)$.

For the second alternative definition let $\mathcal{K}$ be a copy of $\mathcal H$ and let $\Phi\in S(\mathcal{H}\otimes\mathcal K)$ with $\tr_{\mathcal K} |\Phi\rangle\langle\Phi|=\rho$, where $\tr_{\mathcal K}$ denotes the partial trace over $\mathcal K$. Moreover, let $u_{\mathcal K}$ be the uniform probability measure on $S(\mathcal K)$ and let $\Psi_{\mathcal K} \sim \mu_{\mathcal K}$, where $\mu_{\mathcal K}$ is the measure with density
\begin{align*}
    \frac{\d\mu_{\mathcal K}}{\d u_{\mathcal K}}(\psi_{\mathcal K}) = \dim(\mathcal K)\|(\id_{\mathcal H}\otimes \langle\psi_{\mathcal K}|)|\Phi\rangle\|^2.
\end{align*}
Then, $\GAP_{\rho}$ is the distribution of the random vector
\begin{align*}
\Psi := \frac{(\id_{\mathcal H}\otimes \langle\Psi_{\mathcal K}|)|\Phi\rangle}{\|(\id_{\mathcal H}\otimes \langle\Psi_{\mathcal K}|)|\Phi\rangle\|},
\end{align*}
see again \cite[Section~1.4]{GLMTZ16}. As the authors there point out, the measure $\mu_{\mathcal K}$ can be thought of as the distribution of a quantum measurement of $\id_{\mathcal H} \otimes M_{\mathcal{K}}$ on a system in the pure state $\Phi$, where $M_{\mathcal K}$ is the POVM defined by $\d M_{\mathcal{K}}(\psi) = \dim(\mathcal K)|\psi\rangle\langle\psi| \d u_{\mathcal K}(\psi)$.

\subsection{Properties} We collect a couple of useful properties of GAP measures in the following proposition. For the proofs we refer to \cite{GLTZ06,GLMTZ16}; see also Section~1.5.3 in \cite{PhDthesisVogel} for a longer collection of properties of GAP measures.
\begin{proposition}\label{prop:GAP}
Let $\rho$ be a density matrix on a separable Hilbert space $\mathcal{H}$. Then,
\begin{enumerate}[label=(\alph*)]
    \item $\GAP_{\rho}$ has density matrix $\rho$, \emph{i.e.}, $\rho_{\GAP_{\rho}}=\rho$.
    \item The map $\rho\mapsto\GAP_{\rho}$ is covariant: For any unitary operator $U$ on $\mathcal{H}$ and any measurable subset $A\subset S(\mathcal H)$,
\begin{align*}
    \GAP_{\rho}(\{U^*|\psi\rangle : |\psi\rangle\in A\})=\GAP_{U\rho U^*}(A).
\end{align*}
    \item $\GAP_{\rho}$ is invariant under global phase changes, \emph{i.e.}, for any $\theta\in\mathbb{R}$ and any measurable subset $A\subset S(\mathcal{H})$,
\begin{align*}
\GAP_{\rho}(e^{i\theta} A) = \GAP_{\rho}(A). 
\end{align*}
\end{enumerate}
\end{proposition}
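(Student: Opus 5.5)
The three properties follow from the construction of $\GAP_\rho$ in the preceding subsection, Gaussian, then adjusted, then projected. The plan is to track how each operation transforms the law. Let $\Psi^{\mathrm G}\sim \mathrm G_\rho$ be the centered complex Gaussian with covariance $\rho$, meaning $\ee(|\Psi^{\mathrm G}\rangle\langle\Psi^{\mathrm G}|)=\rho$. Circular symmetry of the complex Gaussian gives $\ee(\Psi^{\mathrm G}\otimes\Psi^{\mathrm G})=0$. I work in finite dimension, as the construction in this appendix does; the separable case follows the same argument with the construction of \cite{Tum20}.

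\textbf{Item (a).} For any bounded measurable $g$ on $S(\mathcal H)$, we have $\ee_{\GAP_\rho}(g)=\ee\big(\|\Psi^{\mathrm G}\|^2 g(\Psi^{\mathrm G}/\|\Psi^{\mathrm G}\|)\big)$ by definition of the adjusted and projected measures. Take $g(\psi)=|\psi\rangle\langle\psi|$ (entrywise). The factor $\|\Psi^{\mathrm G}\|^2$ cancels the normalization, so
$$\rho_{\GAP_\rho}=\ee\big(|\Psi^{\mathrm G}\rangle\langle\Psi^{\mathrm G}|\big)=\rho.$$
This uses only that $\Psi^{\mathrm G}\neq0$ almost surely, which holds whenever $\rho\neq0$ because the Gaussian is then continuous on $\operatorname{supp}\rho$.

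\textbf{Item (b).} If $\Psi^{\mathrm G}\sim \mathrm G_\rho$, then $U\Psi^{\mathrm G}$ is a centered complex Gaussian with covariance $U\rho U^*$, so it has law $\mathrm G_{U\rho U^*}$. A centered complex Gaussian is determined by its covariance, since the pseudo-covariance vanishes and this is preserved by $U$. The adjustment factor is invariant, $\|U\psi\|^2=\|\psi\|^2$, and the projection commutes with $U$, $U\psi/\|U\psi\|=U(\psi/\|\psi\|)$. Hence the pushforward of $\GAP_\rho$ under $\psi\mapsto U\psi$ is $\GAP_{U\rho U^*}$. Evaluating this on a set $A$ gives $\GAP_{U\rho U^*}(A)=\GAP_\rho(U^{-1}A)=\GAP_\rho(\{U^*\psi:\psi\in A\})$, which is the claim.

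\textbf{Item (c).} Apply (b) with $U=e^{i\theta}\id_{\mathcal H}$. Then $U\rho U^*=\rho$, so $\GAP_\rho(e^{-i\theta}A)=\GAP_\rho(A)$ for all $\theta$, which is the stated invariance.

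The only delicate step is the identification of the law of $U\Psi^{\mathrm G}$ in (b). It needs the convention that the covariance is the Hermitian one, that the Gaussian is circular, and that such a Gaussian is determined by $\rho$. I would verify this through the characteristic function $\ee\, e^{i\Re\langle \phi,\Psi^{\mathrm G}\rangle}=e^{-\frac14\langle\phi,\rho\phi\rangle}$. Replacing $\Psi^{\mathrm G}$ by $U\Psi^{\mathrm G}$ turns this into $e^{-\frac14\langle\phi,U\rho U^*\phi\rangle}$, which is the characteristic function of $\mathrm G_{U\rho U^*}$.
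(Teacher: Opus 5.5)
Your proof is correct and follows the standard route. The paper gives no proof of its own and defers to \cite{GLTZ06,GLMTZ16}, where (a)--(c) are obtained as you do: push $\ee\,|\Psi^{\mathrm G}\rangle\langle\Psi^{\mathrm G}|=\rho$, and the fact that the image of $\mathrm G_\rho$ under $\psi\mapsto U\psi$ is $\mathrm G_{U\rho U^*}$, through the $\|\psi\|^2$-adjustment and the normalization, with phase invariance as the special case $U=e^{i\theta}\id_{\mathcal H}$. One small simplification: you do not need $\Psi^{\mathrm G}\neq 0$ almost surely, since the factor $\|\psi\|^2$ already gives the origin zero $\mathrm{GA}_\rho$-mass, and this works in any dimension.
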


\section{SageMath 10 code listing}\label{app:code}
Listing of the SageMath 10 code used to check for algebraic independence in the examples of \Cref{sec:xpl 3D}. It is also available in the Git repository \href{https://plmlab.math.cnrs.fr/tbenoist/multiplicative-primitivity}{plmlab.math.cnrs.fr/tbenoist/multiplicative-primitivity}.
\begin{lstlisting}[language=Python]
#!/usr/bin/env sage
##############################
# MultPrim.sage: Checking for multiplicative primitivity
# Exact computations made on the algebraic field (QQbar) by default.
# Language: SageMath 10
# Date: 2026-02-07
# Authors: Tristan Benoist, Sascha Lill, Cornelia Vogel
# Version: 0.2
##############################

# Computation of Bp
def compute_Bp(Kraus,p,base_ring=QQbar):
    """
    Returns the matrix of polynomials B_p=V_1^p given an integer p 
    and a list of Kraus square matrices.
    """
    dim_Kraus=Kraus[0].ncols() # Dimension of the Kraus matrices
    nKraus=len(Kraus) # Number of Kraus operators

    R=PolynomialRing(base_ring,'z',nKraus*p)

    Bp=matrix(R,dim_Kraus,dim_Kraus, sparse=False)
    Bp=identity_matrix(R,dim_Kraus)
    for k in range(p):
        V=zero_matrix(R,dim_Kraus)
        for i in range(nKraus):
            V=V+R.gen(nKraus*k+i)*Kraus[i]
        Bp=Bp*V
    return Bp

# Are the entries of Bp algebraically independent?
def is_algebraically_indep(list_poly,coords=None):
    """
    If True is returned, then the polynomials in the list list_poly 
    are algebraically independent.
    Returns also, as a list, the coordinates where the Jacobian 
    has been evaluated.
    """
    polynomial_ring=list_poly[0].parent()
    variables=polynomial_ring.gens()
    ring=polynomial_ring.base_ring()

    if len(coords)!=polynomial_ring.ngens():
        raise ValueError("Mismatch number of coordinates for evaluation")

    if coords!=None:
        matrix_jacobi=jacobian(list_poly,variables)(coords)
    else: # If no coordinates are given, generate random coordinates.
        coords=[ring(choice(range(1,11))) for i in range(len(variables))]
        matrix_jacobi=jacobian(list_poly,variables)(coords)

    if matrix_jacobi.rank()<len(list_poly):
        return False, coords
    else:
        return True, coords

# Is multiplicative primitivity true?
def is_mPrim_rank_jacobi_Bp(Kraus,p,coords=None,base_ring=QQbar):
    """
    Returns the coordinates where the Jacobian has been evaluated as a list 
    and True if Bp=V_1^p is the full matrix space.
    p is an integer and Kraus a list of square matrices.
    """
    
    Bp=compute_Bp(Kraus,p,base_ring)
    return is_algebraically_indep(Bp.list(),coords)

def roots_of_jacobian_det_Bpx(Kraus,p,i_minor=0,base_ring=QQbar):
    """
    Returns the zeros (as a list) of the Jacobian determinant of Bp*x 
    with the entries of the x as the variables.
    i_minor is the first index of the minor of the Jacobian matrix 
    whose determinant is computed.
    """

    Bp=compute_Bp(Kraus,p,base_ring)
    nKraus=len(Kraus)
    dim=Kraus[0].ncols()

    list_var=var(['x'+str(i) for i in range(dim)])

    R_Bx=PolynomialRing(base_ring,'z',nKraus*p+dim)
    Bx=Bp*vector([R_Bx.gen(nKraus*p+i) for i in range(dim)])

    list_poly=Bx.list()
    variables=list_poly[0].parent().gens()[:-dim]

    matrix_jacobi=jacobian(list_poly,variables)

    ddet=det(matrix_jacobi[:,i_minor:i_minor+dim])

    return solve([ddet(list(variables)+list(list_var))==0], list_var)


#############################################
# If the script is not imported but directly called
if __name__=='__main__':
    # List of Kraus operators (list of square matrices of the same size)
    Kraus_xpl1=[matrix([[0,1,0],[1,0,1],[0,0,0]]),
                        matrix([[0,0,0],[0,-1,0],[1,0,-1]])]
    Kraus_xpl2=[matrix([[1,1,0],[-1,1,0],[0,0,0]]),
                                matrix([[0,0,1],[0,0,1],[1,0,0]])]

    # Example 1
    p=8
    mPrim, coords=is_mPrim_rank_jacobi_Bp(Kraus_xpl1,p,
                        [i%
    if mPrim:
        print("In Example 1, multiplicative primitivity is verified.")
        print("Jacobian evaluated in: {}\n".format(vector(coords)))
    else:
        print("Example 1 possibly not multiplicatively primitive.\n")

    # Example 2
    p=4
    roots=roots_of_jacobian_det_Bpx(Kraus_xpl2,p)
    print("In Example 2, the roots of the Jacobian determinant of a 3x3 minor 
        of Bp*x are:")
    show([vector([x.rhs() for x in root]) for root in roots])
    
#EOF
\end{lstlisting}
The code output is:
{\footnotesize
\begin{verbatim}
In Example 1, multiplicative primitivity is verified.
Jacobian evaluated in: (1, 2, 3, 1, 2, 3, 1, 2, 3, 1, 2, 3, 1, 2, 3, 1)

In Example 2, the roots of the Jacobian determinant of a 3x3 minor of Bp*x are:
[(r1, r2, -(r1*z1*z5*z7 - ((r1 - r2)*z1*z4 - (r1 + r2)*z0*z5)*z6)/((z1*z4 + z0*z5)*z7)),
 (r3, r4, -(2*r3*z2*z5*z7 - (2*(r3 - r4)*z2*z4 - (r3 + r4)*z3*z5)*z6)/((2*z2*z4 + z3*z5)*z7)),
 (r5, r6, -1/2*(2*r6*z4*z6 + r5*z5*z7)/(z4*z7))]
\end{verbatim} }

\bigskip
\textbf{Acknowledgements.} The research of TB was partly funded by ANR project DYNACQUS,
grant number ANR-24-CE40-5714.
SL acknowledges financial support by the European Union (ERC \textsc{FermiMath} nr.~101040991 and ERC \textsc{MathBEC} nr.~101095820). CV~was supported by the Deutsche Forschungsgemeinschaft (DFG, German Research Foundation) -- TRR 352 -- Project-ID 470903074. Moreover, C.V.~acknowledges financial support by the ERC Starting Grant ``FermiMath" No.~101040991 and the ERC Consolidator Grant ``RAMBAS'' No. 10104424, funded by the European Union. 

Views and opinions expressed are those of the authors and do not necessarily reflect those of the European Union or the European Research Council Executive Agency. Neither the European Union nor the granting authority can be held responsible for them.

\textbf{Conflicts of interest.} The authors declare no conflict of interest with respect to the present
article.

    \printbibliography
     
\end{document}